\documentclass[twocolumn, 10pt]{article}

\usepackage{float}
\usepackage{blindtext}
\usepackage{hyperref}
\usepackage{comment}
\usepackage{placeins}
\usepackage{amsthm}
\usepackage{cite}
\usepackage{amsfonts}
\usepackage[margin=0.8in]{geometry}
\usepackage[utf8]{inputenc}
\usepackage{titlesec}
\usepackage{sectsty}
\usepackage{color, enumitem}
\usepackage{algorithm}
\usepackage[noend]{algorithmic}
\bibliographystyle{plain}
\usepackage{amsmath}
\usepackage{amsthm}
\usepackage{xcolor}
\usepackage{caption}
\usepackage{graphicx}
\usepackage{verbatim}
\usepackage{xspace,amsfonts,amssymb,tikz,multirow,stmaryrd,soul}
\usepackage{mathtools, nccmath}

\newcommand{\moe}{\ensuremath{\textit{MoE}}\xspace}

\theoremstyle{definition}
\usepackage{amsthm}
    \newtheorem{theorem}{Theorem}
    \theoremstyle{definition}
    \newtheorem{definition}[theorem]{Definition}
    
    \newtheorem{lemma}[theorem]{Lemma}

\newcommand{\Lap}{\ensuremath{\textsf{Lap}}\xspace}

\newcommand{\lap}{\Lap}

\newcommand{\sucht}{\text{ st }}

\newcommand{\algn}[1]{\ensuremath{\text{\sf #1}}\xspace}

\newcommand{\namedref}[2]{\hyperref[#2]{#1}}

\newcommand{\subfigureref}[2]{\hyperref[#1]{Figure~\ref*{#1}#2}}

\definecolor{darkred}{rgb}{0.5, 0, 0}
\definecolor{darkgreen}{rgb}{0, 0.5, 0}
\definecolor{darkblue}{rgb}{0,0,0.5}

\newcommand{\eps}{\ensuremath{\varepsilon}}
\newcommand{\edp}[1]{\ensuremath{#1}-differentially private}
\newcommand{\Db}{\ensuremath{X}\xspace}  
\newcommand{\db}{\ensuremath{x}\xspace}  
\newcommand{\dbu}{\ensuremath{\mathcal{X}}\xspace}  

\newcommand{\xmin}{\ensuremath{x_{\text{min}}}\xspace}
\newcommand{\xmax}{\ensuremath{x_{\text{max}}}\xspace}

\newcommand{\bxmj}{\bar{\db}_{\textrm{-}j}}

\newcommand{\cilap}{\namedref{\algn{NOISYVAR}}{alg:Lap}\xspace}
\newcommand{\abslap}{\namedref{\algn{NOISYMAD}}{alg:AbsLap}\xspace}
\newcommand{\cci}{\namedref{\algn{SIM}}{alg:CCI}\xspace}
\newcommand{\cexp}{\namedref{\algn{CENQ}}{alg:ExpM}\xspace}
\newcommand{\cdbl}{\namedref{\algn{SYMQ}}{alg:DblM}\xspace}
\newcommand{\expmed}{\namedref{\algn{EXPQ}}{alg:ExpMed}\xspace}
\newcommand{\mad}{\namedref{\algn{MOD}}{alg:MAD}\xspace}

\newcommand{\qnorm}{\ensuremath{q_Z}}
\newcommand{\qt}{\textit{qt}}
\newcommand{\drange}{(\xmax - \xmin)}

\let\svthefootnote\thefootnote
\newcommand\blankfootnote[1]{%
  \let\thefootnote\relax\footnotetext{#1}%
  \let\thefootnote\svthefootnote%
}

\newcounter{ag}
\addtocounter{ag}{1}

\newcounter{ab}
\addtocounter{ab}{1}

\setlength{\textfloatsep}{5pt}
\setlength{\intextsep}{5pt}
\DeclareMathSizes{10}{9}{7}{5}
\titleformat*{\section}{\LARGE\bfseries}
\titleformat*{\subsection}{\large\bfseries}
\titleformat*{\subsubsection}{\normalsize\bfseries}
\titleformat*{\paragraph}{\normalsize\bfseries}
\titleformat*{\subparagraph}{\normalsize\bfseries}

\begin{document}

\title{\LARGE \textbf{Differentially Private Confidence Intervals}}

\author{\begin{tabular}{c c c c c}
Wenxin Du & & Canyon Foot & & Monica Moniot \\
\textit{\small wendu@reed.edu} & & \textit{\small canafoot@reed.edu} & & \textit{\small mamoniot@reed.edu} \\
& Andrew Bray &  & Adam Groce & \\
& \textit{\small abray@reed.edu} & & \textit{\small agroce@reed.edu} &  \\[10pt]
\multicolumn{5}{c}{Reed College Mathematics Department} \\
\multicolumn{5}{c}{Portland, OR, USA} \\
\end{tabular}}

\date{}

\maketitle{}

\begin{abstract}
{Confidence intervals for the population mean of normally distributed data are some of the most standard statistical outputs one might want from a database.  In this work we give practical differentially private algorithms for this task.  We provide five algorithms and then compare them to each other and to prior work.  We give concrete, experimental analysis of their accuracy and find that our algorithms provide much more accurate confidence intervals than prior work.  For example, in one setting (with $\epsilon=0.1$ and $n=2782$) our algorithm yields an interval that is only $1/15^\text{th}$ the size of the standard set by prior work.}
\end{abstract}

\section{Introduction}\label{sec:intro}

Estimating the mean of a population is one of the most basic tasks in statistics.  A medical researcher who wants to know the average height of an adult male would generally get an estimate by measuring the height of a random sample of people.  But when this value is reported, statisticians are usually careful not to just report a single point estimate.  They instead include some measure of the \textit{uncertainty} of this estimate.  That is, what is the range in which the true average of the population might plausibly fall?  This range is given as a \textit{confidence interval}.

In classical statistics, confidence intervals are usually easy to compute.  It is often acceptable to assume that continuous variables are (approximately) normally distributed.  When estimating the mean of a normally distributed population, given a random sample, the most accurate confidence interval is given by a simple calculation on the sample mean and sample variance.  This calculation has been known at least since the codification of confidence intervals in 1937 \cite{neyman1937x}.  However, things become more complicated when confidence intervals are being computed under the constraints of differential privacy.

When computing a confidence interval with differential privacy, there are two sources of randomness to consider.  As in the public setting, there is the sampling variability that arises when selecting a random sample of data. Private algorithms introduce a second source of randomness in order to preserve privacy. There has been surprisingly little work on the computation of private confidence intervals.  Karwa and Vadhan \cite{Vadhan2017} study the problem in depth from a theoretical standpoint, finding an algorithm with very good asymptotic performance, but poor practical performance.  They state that ``designing practical differentially private algorithms for confidence intervals remains an important open problem, whose solution could have wide applicability.''  It is precisely this open problem that we are attempting to solve.

Our main contribution is a set of five new, differentially private algorithms that output a confidence interval for the population mean of normally distributed data.  We do not assume that the population variance is known.  Two of our algorithms use Laplace noise, while three others use an exponential mechanism-based algorithm to report quantiles of the data.

All of our algorithms are experimentally verified to confirm that the resulting intervals are valid.  (I.e., they cover the true population mean with the desired frequency.)  We also experimentally compare our algorithms to the existing work on this question.  We find that our best algorithms consistently outperform prior work, often by large margins.  For example, with $\eps=0.1$, a range of [-32, 32], and a sample of size 2782, our best algorithm gives an interval width approximately 2.43 times that obtained without privacy, while the best prior work algorithm gives an interval that is about 37.10 times as wide.  This means that the cost of privacy has been reduced by 96\%.

We note also that many tasks in private data analysis could probably be made more accurate by assuming some information about the distribution of the underlying data.  Our quantile algorithm, for example, when applied to the median, gives a better estimate of the mean of a normal distribution than does the standard Laplace noise-based estimate.  Given that many statistical analyses being performed on private data \textit{already} make these assumptions about its distribution, we see no reason why privacy researchers should not measure utility under these assumptions.  (We stress that our privacy guarantees do not depend on any assumptions about the data.)

Our algorithms are all implemented and code is publicly available at \url{https://github.com/wxindu/dp-conf-int}.

\section{Background}\label{sec:bkg}

Below we first describe differential privacy, a well-established security definition for the private release of statistical queries on sensitive databases.  We then discuss the particular sort of query that we study --- confidence intervals for the mean of normally distributed data.  Finally, we discuss how privacy interacts with the goal of accurate confidence intervals and related prior work.

\subsection{Differential Privacy}

We imagine an analyst who issues queries to a database of private information.  The analyst might be untrusted, or they might be trusted but wish to release the query results publicly.  Either way it must be guaranteed that the output of the query protects the privacy of the individuals whose data is contained in the database.  Differential privacy \cite{dwork2006calibrating} formalizes such a guarantee.  Intuitively, the guarantee given by differential privacy to an individual is that any output will be roughly equally likely regardless of what data that individual submitted to the database.  This implies that no adversary could infer anything about an individual as a result of their participation in the database.  (This interpretation is subtle.  Interested readers should see \cite{kasiviswanathan2008note} or \cite{bassily2013coupled} for more discussion.)

To formalize this notion, we first define \textit{neighboring} databases.

\begin{definition}[Neighboring Databases]
Databases $\db,\db'\in \dbu$ are neighbors if one can be transformed to the other by changing the value of a single row $\db_i$.
\end{definition}

We can now define differential privacy:

\begin{definition}[Differential Privacy]
A query $f$ is $(\eps, \delta)$-differentially private if for all neighboring $\db, \db'\in \dbu$, and for all sets $S$ of possible outputs,
$$\Pr[f(\db) \in S] \leq e^\eps \Pr[f(\db') \in S] + \delta.$$
\end{definition}

This definition, sometimes called \textit{bounded} differential privacy \cite{kifer2011no}, is one of two common variants.  The other (\textit{unbounded}) defines neighboring databases as having a row deleted rather than changed.  The only significant difference is whether the size of the database must be protected.  Our algorithms achieve privacy with $\delta=0$, so from here on we state theorems only for the $\delta=0$ case, though the $\delta>0$ case is always similar.

The value $\eps > 0$ is considered the \emph{privacy parameter}. Smaller values correspond to less information being revealed about individuals in the database, thus stronger privacy. A value of $\eps = 1$ is fairly high but still meaningfully protects privacy, while $\eps = .01$ is quite low and allows many more queries on the database to be released while still maintaining a strong privacy guarantee, but it also requires queries to be less accurate.  The choice of $\eps$ is a policy decision.

There are two particularly useful properties of differential privacy that warrant mention.  The first, resistance to post-processing, requires that anything computed from private output is itself private.  This is a necessary feature of a good privacy definition, but it is also a useful tool that allows the easy construction of private queries.

\begin{theorem}[Post-processing \cite{dwork2006calibrating}]
For an $\eps$-differentially private query $f$, and any function $g$, the query $g \circ f$ is also $\eps$-differentially private.
\end{theorem}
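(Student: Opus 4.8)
The plan is to reduce the statement about $g \circ f$ directly to the differential privacy guarantee of $f$ by pulling back output sets through $g$. First I would fix an arbitrary pair of neighboring databases $\db, \db' \in \dbu$ and an arbitrary (measurable) set $S$ of possible outputs of $g \circ f$. The goal is then to bound $\Pr[g(f(\db)) \in S]$ by $e^\eps \Pr[g(f(\db')) \in S]$, since establishing this for all such $\db, \db', S$ is exactly the definition of $\eps$-differential privacy for $g \circ f$.

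The key step is to consider the preimage $T = g^{-1}(S) = \{y : g(y) \in S\}$, which is a set of possible outputs of $f$. Because $g$ is a deterministic function, the event $g(f(\db)) \in S$ holds exactly when $f(\db) \in T$, so $\Pr[g(f(\db)) \in S] = \Pr[f(\db) \in T]$, and identically $\Pr[g(f(\db')) \in S] = \Pr[f(\db') \in T]$. Applying the definition of $\eps$-differential privacy for $f$ to the output set $T$ gives $\Pr[f(\db) \in T] \leq e^\eps \Pr[f(\db') \in T]$. Chaining these three (in)equalities yields $\Pr[g(f(\db)) \in S] \leq e^\eps \Pr[g(f(\db')) \in S]$, which completes the argument.

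The only place that requires a little care — and the closest thing to an obstacle — is if one wants the statement for a \emph{randomized} post-processing function $g$, as is common. Then the clean pointwise identity $\{g(f(\db)) \in S\} = \{f(\db) \in g^{-1}(S)\}$ no longer holds; instead I would condition on the internal randomness $r$ of $g$, which is independent of the randomness of $f$, apply the deterministic argument above to each realization $g_r$ to obtain $\Pr[g_r(f(\db)) \in S] \leq e^\eps \Pr[g_r(f(\db')) \in S]$, and then integrate over the distribution of $r$, using linearity of expectation (or monotonicity of the integral) to preserve the inequality. For the deterministic statement as written, this extra step is unnecessary. A minor technical point is that $g$ must be measurable so that $g^{-1}(S)$ is a legitimate event; this is implicit in the setup and I would simply note it in passing.
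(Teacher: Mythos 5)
Your proof is correct and is the standard preimage argument for post-processing; the paper itself states this theorem as a cited background result from Dwork et al.\ and gives no proof, so there is nothing to diverge from. Your handling of the randomized-$g$ case by conditioning on its internal randomness is also sound, though unnecessary for the deterministic statement as written.
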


Next we give the standard composition theorem, which shows that private queries can be combined in an acceptable way.

\begin{theorem}[Composition \cite{dwork2006calibrating}]
If query $f_1$ is $\eps_1$-differentially private and query $f_2$ is $\eps_2$-differentially private, their composition $f(\db)=(f_1(\db), f_2(\db))$ is $(\eps_1 + \eps_2)$-differentially private.
\end{theorem}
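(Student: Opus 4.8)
The plan is to reduce the claim to the pointwise multiplicative guarantee that differential privacy provides on individual outputs. Fix neighboring databases $\db, \db' \in \dbu$ and an arbitrary set $S$ of outputs of the composed query; such an output lives in the product $R_1 \times R_2$ of the output spaces of $f_1$ and $f_2$. I will assume, as is standard for this theorem, that $f_1$ and $f_2$ draw their random coins independently, so that $f(\db) = (f_1(\db), f_2(\db))$ has a product distribution. The goal is then to show $\Pr[f(\db) \in S] \le e^{\eps_1 + \eps_2}\Pr[f(\db') \in S]$, which is exactly $(\eps_1+\eps_2)$-differential privacy with $\delta = 0$.

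First I would handle the case of countable output spaces, where the bookkeeping is trivial. For a single pair $(a,b) \in R_1 \times R_2$, independence gives $\Pr[f(\db) = (a,b)] = \Pr[f_1(\db) = a]\,\Pr[f_2(\db) = b]$. Applying the $\eps_1$- and $\eps_2$-privacy of $f_1$ and $f_2$ to the singleton sets $\{a\}$ and $\{b\}$ bounds each factor, so $\Pr[f(\db)=(a,b)] \le e^{\eps_1}\Pr[f_1(\db')=a]\cdot e^{\eps_2}\Pr[f_2(\db')=b] = e^{\eps_1+\eps_2}\Pr[f(\db')=(a,b)]$. Summing this inequality over all $(a,b) \in S$ gives the desired bound on $\Pr[f(\db) \in S]$.

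For general output spaces the same argument goes through with sums replaced by integrals: write $\Pr[f(\db) \in S] = \int \Pr[f_2(\db) \in S_a]\,\mu_{f_1(\db)}(da)$, where $S_a = \{b : (a,b) \in S\}$ is the $a$-section of $S$, bound the inner probability using $\eps_2$-privacy of $f_2$, and then bound the resulting integral against $\mu_{f_1(\db')}$ using $\eps_1$-privacy of $f_1$ applied to the measurable set of $a$ with $S_a$ nonempty, together with monotonicity. The only step requiring any care is this last, measure-theoretic one — in particular checking that the relevant sections and level sets are measurable; the entire probabilistic content sits in the two invocations of the privacy hypotheses and the independence of the coins. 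The $\delta > 0$ version is identical in structure, with the two $\delta$ terms accumulating just as the $\eps$ terms do.
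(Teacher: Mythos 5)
The paper offers no proof of this theorem at all: it is stated as background and attributed to \cite{dwork2006calibrating}, so there is nothing in-paper to compare your argument against. Your proof is the standard one, and its core is sound: under the (necessary, and correctly flagged) assumption that $f_1$ and $f_2$ use independent randomness, the factorization $\Pr[f(\db)=(a,b)]=\Pr[f_1(\db)=a]\,\Pr[f_2(\db)=b]$, the two privacy hypotheses applied to singletons, and a sum over $S$ settle the countable case completely.

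One step of your general-output-space argument is stated too weakly, however. After bounding the inner probability you still need
$$\int \Pr[f_2(\db')\in S_a]\,\mu_{f_1(\db)}(da)\;\le\; e^{\eps_1}\int \Pr[f_2(\db')\in S_a]\,\mu_{f_1(\db')}(da),$$
and applying $\eps_1$-privacy only to the single set $\{a: S_a\neq\emptyset\}$ does not deliver this: that compares the two measures on one event, whereas the integrand $g(a)=\Pr[f_2(\db')\in S_a]$ takes values throughout $[0,1]$, and the resulting one-set bound does not majorize the integral on the right-hand side. What you actually need is the lemma that $\mu_{f_1(\db)}(A)\le e^{\eps_1}\mu_{f_1(\db')}(A)$ for \emph{all} measurable $A$ implies $\int g\,d\mu_{f_1(\db)}\le e^{\eps_1}\int g\,d\mu_{f_1(\db')}$ for every measurable $g$ with values in $[0,1]$; this follows from the layer-cake identity $\int g\,d\mu=\int_0^1\mu(\{g>t\})\,dt$ (or from approximation by simple functions), applying the set-level inequality to each level set $\{g>t\}$. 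With that substitution the continuous case closes and the proof is complete. Your closing claim that the $\delta>0$ version is ``identical in structure'' is also a little optimistic---the naive accumulation produces a cross term of the form $e^{\eps_1}\delta_2$ that must be handled to land exactly on $\delta_1+\delta_2$---but that case is outside what the theorem as stated asserts.
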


Composition allow for the idea of a \textit{privacy budget}, a total $\eps$ value that can be divided up as an analyst wishes between any number of queries.  It also allows complex queries to be constructed from several smaller, simpler queries.

Finally, we present two general methods for creating private queries.  The first is the Laplace mechanism, which adds noise to a query to create a private version.  This noise must be proportional the \textit{sensitivity} of the non-private query, defined as the maximum effect a single row can have on the output.

\begin{definition}[Global Sensitivity]
The sensitivity of a function $f: \dbu \rightarrow \mathbb{R}$, abbreviated $\Delta f$, is defined as
$$\Delta f = \max_{\substack{\db, \db'\in \dbu\text{ that}\\\text{are neighbors}}} |f(\db) -f(\db')|.$$
\end{definition}

The noise added is taken from a Laplace distribution, defined below.

\begin{definition}[Laplace Distribution]\label{def:ld}
The Laplace Distribution centered at 0 with scale $b$ is the distribution with probability density function:
$$ pdf(z)=\frac{1}{2b}\textup{exp}\Big(-\frac{|z|}{b}\Big).$$
We write $\lap(b)$ to denote the Laplace distribution with scale $b$.
\end{definition} 

This now allows a definition of the Laplace mechanism, the most standard generic technique for privatizing a given query.

\begin{theorem}[Laplace mechanism \cite{dwork2006calibrating}]
Given any query $f: \dbu \rightarrow \mathbb{R}$, the query
$$\widetilde{f}(\db) = f(\db) + L\text{ where }L \sim \lap\left(\frac{\Delta f}{\eps}\right)$$
is \edp{\eps}. 
\end{theorem}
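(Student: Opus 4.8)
The plan is to work directly with the probability density function of the randomized output and to reduce the claim to a pointwise bound on a ratio of densities. Fix an arbitrary pair of neighboring databases $\db, \db' \in \dbu$ and write $a = f(\db)$ and $b = f(\db')$. Since $\widetilde{f}(\db) = a + L$ with $L \sim \lap(\Delta f / \eps)$, \definitionref{def:ld} tells us that $\widetilde{f}(\db)$ has density $p_{\db}(z) = \frac{\eps}{2\Delta f}\exp\!\big(-\tfrac{\eps}{\Delta f}\,|z - a|\big)$, and likewise $\widetilde{f}(\db')$ has density $p_{\db'}(z) = \frac{\eps}{2\Delta f}\exp\!\big(-\tfrac{\eps}{\Delta f}\,|z - b|\big)$.

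Next I would bound the ratio $p_{\db}(z)/p_{\db'}(z)$ for every $z \in \mathbb{R}$. The normalizing constants cancel, leaving $\exp\!\big(\tfrac{\eps}{\Delta f}\big(|z - b| - |z - a|\big)\big)$. By the reverse triangle inequality, $|z - b| - |z - a| \le |a - b| = |f(\db) - f(\db')|$, and since $\Delta f$ is defined as the maximum of $|f(\db)-f(\db')|$ over \emph{all} neighboring pairs, it dominates this particular value, so $|z - b| - |z - a| \le \Delta f$. Therefore $p_{\db}(z) \le e^{\eps}\, p_{\db'}(z)$ for all $z$.

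Finally, I would integrate this pointwise inequality over an arbitrary measurable output set $S$: $\Pr[\widetilde{f}(\db) \in S] = \int_S p_{\db}(z)\,dz \le e^{\eps}\int_S p_{\db'}(z)\,dz = e^{\eps}\Pr[\widetilde{f}(\db') \in S]$, which is exactly the defining inequality of $\eps$-differential privacy (with $\delta = 0$). Because $\db$ and $\db'$ were an arbitrary neighboring pair and the roles of the two are symmetric, this establishes the theorem.

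The argument is essentially mechanical, so there is no serious obstacle; the one step that needs care is the density-ratio bound, where one must apply the reverse triangle inequality in the correct direction and then explicitly use that $\Delta f$ is a maximum over all neighboring pairs (hence an upper bound for the specific pair in play). A minor point worth a sentence in the write-up is that passing from the pointwise density bound to the inequality for all sets $S$ is legitimate: $\widetilde{f}(\db)$ is a continuous random variable for each fixed $\db$, so both probabilities are integrals of the respective densities against Lebesgue measure and monotonicity of the integral applies directly.
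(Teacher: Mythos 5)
Your proof is correct: the density-ratio bound via the reverse triangle inequality, followed by integration over an arbitrary output set $S$, is the standard argument for the Laplace mechanism. The paper itself gives no proof of this theorem --- it is imported as background from the cited work of Dwork et al. --- and your write-up matches that canonical proof, so there is nothing to reconcile beyond the (safely ignorable) degenerate case $\Delta f = 0$.
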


In many cases the Laplace mechanism may not be appropriate. For instance, the output of the query might be categorical rather than numerical, or the query might be numerical but does not have the property that values close to one another have similar usefulness. In these cases, the \emph{exponential mechanism} can be a useful alternative. The exponential mechanism relies on a utility function $u:\dbu \times R \rightarrow \mathbb{R}$ which assigns a utility to any particular query response, given a particular database. ($R$ is the set of possible query outputs.) The utility function has a sensitivity as well, defined as
$$\Delta u = \max_{\substack{\db, \db'\in \dbu\text{ neighbors}\\r\in R}} |u(\db,r) - u(\db',r)|.$$
The exponential mechanism selects higher-utility outputs more often, with the probability of a given output increasing exponentially with its utility.

\begin{theorem}[Exponential Mechanism \cite{mcsherry2007mechanism}]
Given a query $f$ and a utility function $u$, define $\widetilde{f}$ such that $\forall\,\db\in\dbu,\forall\, r\in R,$
$$\Pr[\widetilde{f}(\db) = r] \propto \exp\left(\frac{\eps u(\db, r)}{2 \Delta u}\right).$$
Then $\widetilde{f}$ is $\eps$-differentially private. 
\end{theorem}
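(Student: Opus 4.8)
The plan is to reduce the set-based privacy condition to a pointwise one: I would first bound, for every pair of neighboring databases $\db,\db'$ and every single output $r\in R$, the ratio $\Pr[\widetilde f(\db)=r]/\Pr[\widetilde f(\db')=r]$ by $e^{\eps}$. The general statement then follows immediately, since for any output set $S$ we have $\Pr[\widetilde f(\db)\in S]=\sum_{r\in S}\Pr[\widetilde f(\db)=r]\le e^{\eps}\sum_{r\in S}\Pr[\widetilde f(\db')=r]=e^{\eps}\Pr[\widetilde f(\db')\in S]$.

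To make the pointwise probabilities explicit, write $\Pr[\widetilde f(\db)=r]=\exp\!\big(\tfrac{\eps u(\db,r)}{2\Delta u}\big)/Z(\db)$ with normalizing constant $Z(\db)=\sum_{r'\in R}\exp\!\big(\tfrac{\eps u(\db,r')}{2\Delta u}\big)$. The crux is that $Z$ itself depends on the database, so one cannot simply compare numerators; instead I would split the ratio into a ``numerator part'' and a ``normalizer part,''
$$\frac{\Pr[\widetilde f(\db)=r]}{\Pr[\widetilde f(\db')=r]}=\exp\!\left(\frac{\eps\big(u(\db,r)-u(\db',r)\big)}{2\Delta u}\right)\cdot\frac{Z(\db')}{Z(\db)},$$
and show each factor is at most $e^{\eps/2}$.

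For the first factor, the definition of $\Delta u$ gives $u(\db,r)-u(\db',r)\le\Delta u$, so the exponent is at most $\eps/2$. For the second factor I would apply the same inequality inside the sum, termwise: for each $r'$, $u(\db',r')\le u(\db,r')+\Delta u$, hence $\exp\!\big(\tfrac{\eps u(\db',r')}{2\Delta u}\big)\le e^{\eps/2}\exp\!\big(\tfrac{\eps u(\db,r')}{2\Delta u}\big)$; summing over $r'\in R$ gives $Z(\db')\le e^{\eps/2}Z(\db)$. Multiplying the two bounds yields the claimed pointwise ratio bound $e^{\eps/2}\cdot e^{\eps/2}=e^{\eps}$.

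The only real obstacle is the database-dependent normalizer, which the two-factor decomposition handles cleanly; this is also exactly why the mechanism uses $\eps/(2\Delta u)$ rather than $\eps/\Delta u$ in the exponent, since the numerator and the normalizer each contribute a factor of $e^{\eps/2}$. A minor technical caveat is that $Z(\db)$ must be finite for the output distribution to be well-defined; this holds automatically when $R$ is finite (the setting we use), and in the continuous case one replaces the sums with integrals, leaving the argument otherwise unchanged.
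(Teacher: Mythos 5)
Your proof is correct, and it is the standard McSherry--Talwar argument: reduce to a pointwise ratio bound, split that ratio into the numerator factor and the normalizer factor $Z(\db')/Z(\db)$, and bound each by $e^{\eps/2}$ using the definition of $\Delta u$ (applied termwise inside the sum for the normalizer). The paper itself states this theorem only as background, citing McSherry and Talwar without reproducing a proof, so there is no in-paper argument to compare against; your write-up, including the remark that the $2$ in the exponent exists precisely because the numerator and normalizer each cost $e^{\eps/2}$, matches the canonical proof.
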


The exponential mechanism is not necessarily efficiently computable, but it can be for particular queries/utilities.

\subsection{Confidence Intervals}

In statistical inference, an analyst seeks to use a particular sample of data to infer attributes of the larger population from which it was drawn.  One common goal is to estimate a population parameter $\theta$.  A confidence interval algorithm takes as input a sample and outputs a range in which the population parameter is likely to fall.  Note that we are now considering the database \Db as a random variable.

\begin{definition}[Confidence intervals]
Given a database $X = (X_1, ..., X_n)$ of i.i.d.~samples from a population, a confidence interval algorithm $c$ outputs a closed interval $[a,b]\in \mathbb{R}$.  An algorithm with confidence level $(1 - \alpha)$, for $\alpha \in [0, 1]$, has the property that
$$\Pr[\theta \in c(X)] \geq 1 - \alpha.$$ 
\end{definition}
The probability in the above definition is traditionally taken over the randomness of the sample $X$, but it could (and in the private case will) also be taken over the randomness of $c$, were $c$ to be a randomized algorithm.  This probability, $\Pr[\theta \in c(X)]$, is called the \textit{coverage} of the confidence interval algorithm.

Of course, one could construct an interval which is guaranteed to contain the true value by releasing all of $\mathbb{R}$, but this would not be useful. The goal is to release a small interval, generally measured by the \textit{margin of error (\moe)}, equal to half the interval's width.  (I.e., the margin of error for an interval $[a,b]$ is $(b-a)/2$.)

It is acceptable for an algorithm to have coverage \textit{greater} than $1-\alpha$, but generally when this is the case there is some slack in the algorithm, and the interval can be shrunk to obtain lower average margin of error.

The most well known type of confidence interval, and the kind we focus on in this paper, is a confidence interval for the mean of a normal random variable.  In the public setting, this is done with the following algorithm:

\begin{definition}[Confidence Intervals for Normally Distributed Data]
In the case where the sample comes from a normal distribution, $X_1, ..., X_n\sim \mathcal{N}(\mu, \sigma^2)$, the optimal (smallest \moe) ($1 - \alpha$)-coverage confidence interval is
$$c(X)= \left[\bar{X}\pm \frac{s}{\sqrt{n}}q_{n - 1}\left(\frac{\alpha}{2}\right)\right],$$
where $q_{n-1}(\alpha/2)$ is the $\alpha/2$ quantile of the $t$-distribution with $n-1$ degrees of freedom, $s$ is the sample standard deviation, calculated using
$$s^2=\frac{1}{n - 1}\sum_{i=1}^{n}(X_i - \bar{X})^2,$$
and $\bar{X}$ is the sample mean,
$$\bar{X} = \frac{1}{n}\sum_{i = 1}^{n} X_i.$$
\end{definition}

\subsection{Private Confidence Intervals}

Despite the prevalence of confidence intervals in statistical applications, only a handful of papers have attempted to give analysts a way to construct intervals in the private setting.  Awan and Slavković \cite{Awan2018} derive optimal confidence intervals for binomial proportions, and Sheffet \cite{Sheffet2017} describes confidence intervals for private regression coefficients under certain assumptions.  

We are aware of three works that give algorithms for the mean of normally distributed data, those of Karwa and Vadhan \cite{Vadhan2017}, D'Orazio et al.~\cite{DOrazio2015}, and Brawner and Honaker \cite{Honaker2018}.  Because we compare our work to these, we discuss them in more technical detail in Section \ref{sec:ext}.  Additionally, Gaboardi et al.~\cite{gaboardi2018locally} give a method for calculating confidence intervals in the more restrictive \textit{local} differential privacy setting, in part using the same methods as Karwa and Vadhan.\footnote{Gaboardi et al.~give asymptotic rather than exact analysis of their algorithms, much like Karwa and Vadhan.  For the sake of comparison, we implement the work of Karwa and Vadhan and give concrete comparisons.  We do not do this for the Gaboardi et al.~work, as it is similar in design and in a more restrictive setting, presumably therefore achieving worse performance.}

This work on private confidence intervals is directly motivated by attempts to move differential privacy into practice, most specifically the PSI \cite{gaboardi2016psi} project, which attempts to provide an interface for basic statistics about private data sets used in academic research.

The term ``confidence interval'' can be used in the private setting in two different ways, which can be confusing.  It probably helps to consider the following three kinds of confidence intervals:

\paragraph*{Public interval for population mean}  This is the standard sort of confidence interval thoroughly established in statistics.  The goal of the interval is to use the sample data to give an interval estimating the population mean, accounting for the variability induced when selecting the random sample.

\paragraph*{Private interval for sample mean}  This is a common tool in the practice of differential privacy, though rarely discussed in the academic literature.  When a mean is reported (e.g., by adding Laplace noise) it is often helpful to give the analyst an understanding of the uncertainty, so a confidence interval can be constructed.  This interval is meant to show the uncertainty added by the Laplace noise, so it takes into account only the randomness of the private query.  For simple things like the Laplace mechanism, the confidence interval is trivial to construct, though for others it can be very difficult.

\paragraph*{Private interval for population mean}  This is the subject of this paper.  Here the goal is to give an interval that will contain the population mean, but to make that algorithm private.  That means it must account for the noise of both the random sampling and the private query algorithm.

\medskip
It is frequently noted in the differential privacy literature that the noise of private mechanisms is known and can therefore be accounted for in statistical analysis, but this accounting has rarely been studied and is actually quite difficult.  We believe it is unreasonable to expect people who are not privacy experts to do this accounting and that such an expectation deters the use of differential privacy.  Furthermore, some private estimates allow for more accurate noise-accounting than others.  For example, one could estimate mean and standard deviation privately and then use the public confidence interval formula, but such a method can fail to accurately guarantee coverage.  Our goal here is to evaluate private algorithms based on the utility of the final, usable output that a practitioner will want to see.

Private confidence interval algorithms vary not just in what margin of error they produce, but also in what assumptions are required.  These algorithms, both in prior work and in our work, generally require that the data is known to all come from a given range, i.e., $x_i \in [x_{min}, x_{max}]$.  (The algorithms work on more general data, but values below $x_{min}$ are set to $x_{min}$ before other calculations are performed, and similarly for $x_{max}$.)  But some algorithms are very sensitive to that range, degrading in accuracy very quickly if the range is overly wide, while others are insensitive to the range, allowing the analyst to give very conservative values.

Finally, we note that both here and in prior work, the assumption that data is normally distributed is only required for the coverage guarantees of the algorithm.  Privacy is required to hold in general for any input, regardless of its distribution.

\paragraph*{Relationship to hypothesis testing}
In the public setting, confidence intervals are often discussed interchangeably with hypothesis testing, specifically a t-test.  A t-test asks whether a population with a hypothesized mean could plausibly give rise to a sample with the observed sample mean.  The probability that such a mean (or a more extreme difference) emerges is a $p$-value.  It is typical to check whether $p < 0.05$, and a $p$-value less than 0.05 will occur precisely when the hypothesized mean is outside the resulting confidence interval.  As such, any algorithm that produces a confidence interval also produces a hypothesis test and vice versa.

In the private setting, this equivalence no longer holds.  Given a hypothesis test, one normally converts to a confidence interval by asking, ``At what value does the hypothesis test start rejecting?''  But private hypothesis testing algorithms give a $p$-value at one point and cannot be run repeatedly without losing privacy, so there is no way to find where the cutoff for rejecting would be.  As a result, the work on private hypothesis testing in this setting \cite{couch2019differentially, solea2014differentially} does not help us here.

\section{Prior Work}\label{sec:ext}

Here we describe the three existing works that seek to provide private confidence intervals for the population mean of normally distributed data.  We give a very abridged overview of each algorithm; interested readers should refer to the original works for more detail.

\subsection{Karwa and Vadhan}

The most mathematically sophisticated work comes from Karwa and Vadhan \cite{Vadhan2017}. They give algorithms for both the $\delta=0$ and the $\delta>0$ case.  In both cases, their algorithm begins by running a private histogram algorithm on the data and uses that to estimate a range for the data.  The data is then clamped to that range.  Given that truncation, Laplace noise can be added to give a private estimate of the mean and a private but very conservative estimate of variance, which are then used to construct a confidence interval.

This work is impressive and has some very useful results.  The margin of error is shown to be asymptotically optimal, and the coverage guarantees hold for finite $n$, rather than asymptotically\footnote{This is somewhat misleading. For low $n$ the algorithm will output $\bot$ instead of a confidence interval.  Coverage is correct whenever there is output, but the output is withheld for low $n$ because coverage would not be correct in those cases.}.  For the $\delta>0$ case, they require no a priori bounds on the data.  In the $\delta=0$ case, they do require such bounds\footnote{In fact, they need bounds on both the mean and standard deviation, which is stronger than simply having bounds that hold with high probability on the minimum and maximum data values.}, but the accuracy is not highly sensitive to the bounds, so they can be set very conservatively.

However, this work also has serious limitations.  While its asymptotic performance is excellent, its practical performance is unacceptable.  To quote the paper, ``our algorithms are not optimized for practical performance, but rather for asymptotic analysis of the confidence interval length. Initial experiments indicate that alternative approaches (not just tuning of parameters) may be needed to [release] reasonably sized confidence intervals.''  These alternative approaches are exactly what we seek to deliver in this paper.

\begin{algorithm}
\caption{Vadhan}
\begin{algorithmic}[1]
\REQUIRE $\db, \alpha_0, \alpha_1, \alpha_2, \alpha_3, \eps_1, \eps_2, \eps_3,\linebreak \bar{s}_{\text{min}}, \bar{s}_{\text{max}},\bar{X}_{\text{min}},\bar{X}_{\text{max}}$
\STATE $\tilde{X}_{\text{min}},\tilde{X}_{\text{max}}\gets \algn{RANGEFINDER}(\db, \alpha_3, \eps_3, \bar{s}_{\text{min}}, \bar{s}_{\text{max}},\linebreak\bar{X}_{\text{min}},\bar{X}_{\text{max}})$
\STATE Clamp \db by $\tilde{X}_{\text{min}},\tilde{X}_{\text{max}}$
\STATE $\tilde{X}_{\text{var}}\gets\frac{\tilde{X}_{\text{max}} - \tilde{X}_{\text{min}}}{\eps_1 n}$
\STATE $\tilde{s}_{\text{var}}\gets\frac{(\tilde{X}_{\text{max}} - \tilde{X}_{\text{min}})^2}{\eps_2 (n-1)}$
\STATE $\tilde{X}\gets \bar{X} + L_1$, where $L_1\sim\lap(\tilde{X}_{\text{var}})$
\STATE $\tilde{s}^2\gets s^2 + \tilde{s}_{\text{var}}\ln(\frac{1}{2\alpha_2}) + L_2$, where $L_2\sim\lap(\tilde{s}_{\text{var}})$
\STATE $\moe\gets\sqrt{\frac{\tilde{s}^2}{n}}\qt_{n - 1}(1 - \frac{\alpha_0}{2}) + \tilde{X}_{\text{var}}\ln(\frac{1}{\alpha_1})$
\ENSURE $[\tilde{X} - \moe, \tilde{X} + \moe]$
\end{algorithmic}
\end{algorithm}

\subsection{D'Orazio, Honaker, and King}

D'Orazio, Honaker, and King \cite{DOrazio2015} give several private algorithms intended for use in social science research. The only confidence interval they explicitly outline is for the difference of means between two normally distributed variables; however, their method can be easily adapted to produce intervals for a single variable.

Their algorithm first uses a simple Laplace mechanism query to estimate the sample mean.  To get a confidence interval, one needs to compute not just this mean but also an estimate of the sampling variability of the sample mean.  To do this, they use an algorithm similar to one of Smith \cite{Smith2011}.  They first divide the sample into disjoint subsamples and from each calculate such an estimate.  This set of estimates $S$ is then fed to a private quantile algorithm to get estimates of the $25^{\text{th}}$ and $75^{\text{th}}$ quantiles.  This gives an interquartile range estimate $r$ equal to their difference and a center estimate $c$ equal to their average.  The values of $S$ are then truncated to be in $[c-2r, c+2r]$.  The mean of $S$ is then computed with Laplace noise to ensure privacy.

Once there are estimates of the mean and sampling variability, simulated data can be used to compute an actual confidence interval.  We use a similar simulation technique for our algorithms, so we refer the reader to Algorithm \ref{alg:CCI} and surrounding discussion for more detail.

\subsection{Brawner and Honaker}

Given a sample mean, statisticians can estimate the variance of that sample mean using bootstrap resampling \cite{Efron1992Boot}. Given a database $x$, a bootstrap sample $y$ of the same size can be computed by randomly sampling from $x$ with replacement.  The mean of $y$ is then computed and the process is repeated many times.  The distribution of the means of those bootstrapped samples is a good approximation of the sampling distribution of the mean of the original database.  For large $n$, the distribution is known to be asymptotically normal, so the variance of the bootstrapped samples is sufficient to allow the computation of a confidence interval for the population mean.

Despite its prevalance in the practice of statistics, we are familiar with only one (unpublished) paper on private bootstrapping, that of Brawner and Honaker \cite{Honaker2018}.  They give a method that releases $k$ means of bootstrapped samples, each with $1/k^\text{th}$ of the privacy budget.  These are used to calculate a variance estimate, and they're also averaged to create an estimate of the sample mean.  Crucially, it is shown that the sample mean estimate arrived at this way is just as accurate as one computed directly, but this method avoids the need to allocate part of the budget to the sample mean computation.  Given these mean and variance estimates, a confidence interval can be computed.  Unfortunately, the variance estimate can often be too low, resulting in unacceptable coverage, but they give a method to conservatively increase the variance estimate and achieve acceptable coverage.

This result is achieved under \textit{zero-concentrated differential privacy} (zCDP) \cite{Bun}.  This privacy is parameterized by $\rho$, and for any $\delta>0$ one can convert a guarantee of $\rho$-zCDP into a guarantee of $(\eps, \delta)$-differential privacy. 

\section{Algorithms}\label{sec:ours}

We introduce five algorithms to construct \edp{\eps} confidence intervals for the mean of normally distributed data. We start with the simple case, which uses the Laplace Mechanism to produce private estimates of mean and standard deviation. For the second, we modify this method to utilize an alternative dispersion metric that results in a function with lower sensitivity. The remaining three algorithms rely on the exponential mechanism to generative private estimates of quantiles of the data. Each of these methods take a different approach for turning these quantiles into estimates of the center and spread of the sample data, which are then used to construct the confidence interval.

\subsection{Noisy Mean and Variance}

Our first approach is a direct application of Laplacian noise to the sample mean and variance. The noisy mean and variance are then used to construct the appropriate confidence interval. In this algorithm and several future algorithms, $\rho$ is an allocation parameter that determines the fraction of $\eps$ used at various stages of the algorithm. We optimize this value experimentally.

\begin{algorithm}
\caption{Noisy Mean and Variance, \cilap}\label{alg:Lap}
\begin{algorithmic}[1]
\REQUIRE $\db, \eps, \rho$
\STATE $\tilde{\db} \gets \bar{\db} + L_1 \text{ where } L_1 \sim \Lap\left(\frac{\drange}{\rho\eps n} \right)$
\STATE $\tilde{s} \gets \sqrt{\text{max}\left(0,s^2 + L_2\right)},\hfill\linebreak \text{ where } L_2 \sim \Lap \left(\frac{\drange^2}{(1-\rho)\eps n} \right)$
\ENSURE $\tilde{x}, \tilde{s}$
\end{algorithmic}
\end{algorithm}

Here $\db$ is any database, $\eps$ is the privacy parameter, and $0 \leq \rho \leq 1$ is the allocation parameter of $\eps$ among queries. We define the sample mean $\bar{\db} = \frac{1}{n}\sum_{i = 1}^n \db_i$, and the sample variance $s^2 = \frac{1}{n-1}\sum_{i=1}^n (\db_i - \bar{\db})^2$.

\begin{lemma}
$\Delta\bar{\db} = \frac{\drange}{n}$. 
\end{lemma}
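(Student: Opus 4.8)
The plan is to prove the two inequalities $\Delta\bar{\db} \le \drange / n$ and $\Delta\bar{\db} \ge \drange / n$ separately, which together give the claimed equality. Throughout I would use the fact, noted in the surrounding text, that all database entries lie in $[\xmin, \xmax]$ (values outside this range are clamped before any computation), so the effective data universe $\dbu$ consists of vectors with coordinates in that interval.

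For the upper bound, I would take arbitrary neighboring databases $\db, \db' \in \dbu$. By definition of neighboring, they agree in every coordinate except one, say coordinate $i$, so $\db_j = \db'_j$ for all $j \ne i$. Then
$$\bar{\db} - \bar{\db}' = \frac{1}{n}\sum_{j=1}^n \db_j - \frac{1}{n}\sum_{j=1}^n \db'_j = \frac{\db_i - \db'_i}{n},$$
and since both $\db_i$ and $\db'_i$ lie in $[\xmin,\xmax]$ we have $|\db_i - \db'_i| \le \xmax - \xmin$, hence $|\bar{\db} - \bar{\db}'| \le \drange / n$. Taking the maximum over all neighboring pairs gives $\Delta\bar{\db} \le \drange / n$.

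For the matching lower bound I would exhibit a concrete neighboring pair achieving this value: fix any values in $[\xmin,\xmax]$ for coordinates $2,\dots,n$, set $\db_1 = \xmin$ and $\db'_1 = \xmax$. These are neighbors, and by the computation above $|\bar{\db} - \bar{\db}'| = (\xmax - \xmin)/n = \drange/n$, so the maximum defining $\Delta\bar{\db}$ is at least this large. Combining the two bounds yields $\Delta\bar{\db} = \drange / n$.

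There is no real obstacle here; the only point that warrants care is making explicit that the sensitivity is computed over the clamped universe $\dbu \subseteq [\xmin,\xmax]^n$ rather than over all of $\mathbb{R}^n$ — without the range restriction the sample mean would have unbounded sensitivity, and it is precisely the clamping step that makes the stated bound both valid and tight.
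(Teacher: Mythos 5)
Your proof is correct and your upper-bound argument is essentially identical to the paper's (isolate the single changed coordinate, bound its change by $\xmax-\xmin$, divide by $n$). You go slightly further by exhibiting a neighboring pair that attains the bound; the paper states the lemma as an equality but only proves the $\leq$ direction, so your added tightness argument is a small but genuine improvement rather than a divergence in approach.
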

\begin{proof}
Given $\db,\db'\in \dbu$ that are neighbors, the entry that was changed between them can only change value by at most $\drange$. As other entries remain unchanged, the sum of all entries can be changed by at most $\drange$. When taking the mean of these databases then, the mean can only change value by at most $\frac{\drange}{n}$. Thus $\Delta\bar{\db} \leq \frac{\drange}{n}$.\\
\end{proof}

\begin{lemma} \label{lm:1}
$\Delta s^2\leq\frac{\drange^2}{n}$.
\end{lemma}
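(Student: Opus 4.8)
The plan is to rewrite the sample variance using the shift-invariant ``sum of pairwise differences'' identity and then observe that changing one row touches only a few of the terms. First I would establish
$$n(n-1)s^2 = n\sum_{i=1}^n (\db_i - \bar{\db})^2 = \sum_{1 \le i < k \le n}(\db_i - \db_k)^2,$$
which follows by expanding the square and summing over all ordered pairs: $\sum_{i<k}(\db_i-\db_k)^2 = \tfrac12\sum_{i,k}(\db_i-\db_k)^2 = n\sum_i \db_i^2 - (\sum_i \db_i)^2 = n\sum_i(\db_i-\bar\db)^2$. Equivalently, $s^2 = \frac{1}{n(n-1)}\sum_{i<k}(\db_i - \db_k)^2$. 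This representation is the key move, since every pairwise gap $(\db_i - \db_k)^2$ lies in $[0, \drange^2]$ no matter where the clamped data sits inside $[\xmin,\xmax]$.

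Next I would fix neighbors $\db, \db' \in \dbu$ that differ only in coordinate $j$. In the double sum $\sum_{i<k}(\db_i - \db_k)^2$, the only terms that change are the $n-1$ terms that involve index $j$, namely $(\db_j - \db_k)^2$ for $k \ne j$. Since both $(\db_j - \db_k)^2$ and $(\db'_j - \db_k)^2$ lie in $[0, \drange^2]$, each such term changes by at most $\drange^2$, so $\left|\sum_{i<k}(\db_i - \db_k)^2 - \sum_{i<k}(\db'_i - \db'_k)^2\right| \le (n-1)\drange^2$. Dividing by $n(n-1)$ gives $|s^2(\db) - s^2(\db')| \le \drange^2/n$, and taking the maximum over all neighboring pairs yields $\Delta s^2 \le \drange^2/n$.

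I do not expect a genuine obstacle here; the only points requiring care are getting the constant in the pairwise-difference identity right and noting that the per-term bound is exactly where the clamping assumption $\db_i \in [\xmin,\xmax]$ enters. A direct alternative that avoids the identity — bounding the change in $\sum_i \db_i^2$ and in $(\sum_i \db_i)^2$ separately via $s^2 = \frac{1}{n-1}\sum_i \db_i^2 - \frac{1}{n(n-1)}(\sum_i \db_i)^2$ — is also possible but more delicate, because $\sum_i \db_i^2$ is not translation-invariant and its worst-case change depends on the signs of $\xmin$ and $\xmax$; the pairwise-difference route sidesteps that bookkeeping entirely.
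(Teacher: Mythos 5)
Your proof is correct, but it takes a genuinely different route from the paper's. The paper (following Honaker) expands $s^2$ algebraically so as to isolate the contribution of the changed row $j$, arriving at $s^2(\db)-s^2(\db') = \tfrac{1}{n}\bigl[\db_j^2 - \db_j'^2 - 2(\db_j-\db_j')\bxmj\bigr]$ where $\bxmj$ is the mean of the unchanged rows, and then uses a calculus/extremal argument (the second partial derivative in $\db_j$ is positive) to conclude the worst case is $\tfrac{1}{n}(\db_j-\bxmj)^2 \le \tfrac{1}{n}\drange^2$. You instead invoke the pairwise-differences identity $s^2 = \tfrac{1}{n(n-1)}\sum_{i<k}(\db_i-\db_k)^2$, note that only the $n-1$ pairs containing index $j$ are affected, and bound each affected term by $\drange^2$ since every squared gap of clamped values lies in $[0,\drange^2]$. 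Your identity is stated and verified correctly ($\sum_{i<k}(\db_i-\db_k)^2 = n\sum_i\db_i^2 - (\sum_i\db_i)^2 = n\sum_i(\db_i-\bar\db)^2$), and the per-term bound is sound, so the conclusion $|s^2(\db)-s^2(\db')|\le (n-1)\drange^2/\bigl(n(n-1)\bigr) = \drange^2/n$ follows. Your route is more elementary (no differentiation, no case analysis on where the extremum sits) and sidesteps the translation-dependence of $\sum_i\db_i^2$ exactly as you say; the paper's route does slightly more work but identifies the extremal configuration explicitly, which is why the appendix can state the sensitivity as an equality $\Delta s^2 = \drange^2/n$ rather than only an upper bound (the lemma itself only claims $\le$, which both arguments deliver).
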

\begin{proof}
This proof modifies Honaker's sensitivity proof for variance estimator $\hat{\sigma}^2 = \frac{1}{n}\sum_{i = 1}^n (x_i - \bar{x})^2$. In its place we use the unbiased sample variance $s^2 = \frac{1}{n-1}\sum_{i=1}^n (x_i - \bar{x})^2$. See Appendix \ref{sec:pfs} for the full proof.
\end{proof}

\begin{theorem}
$\cilap$ is \edp{\eps}.
\end{theorem}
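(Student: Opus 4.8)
The plan is to decompose \cilap into two independent applications of the Laplace mechanism, one per output coordinate, and then stitch them together with post-processing and composition. First I would observe that Step 1 releases $\bar{\db} + L_1$ with $L_1 \sim \lap\!\left(\frac{\drange}{\rho\eps n}\right)$; since we have already shown $\Delta\bar{\db} = \frac{\drange}{n}$, the noise scale equals $\frac{\Delta\bar{\db}}{\rho\eps}$, so the Laplace mechanism theorem gives directly that this step, viewed as a query on \db, is $(\rho\eps)$-differentially private.

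Next I would handle Step 2. By \lemmaref{lm:1} we have $\Delta s^2 \le \frac{\drange^2}{n}$, so the noise $L_2 \sim \lap\!\left(\frac{\drange^2}{(1-\rho)\eps n}\right)$ has scale at least $\frac{\Delta s^2}{(1-\rho)\eps}$; hence $s^2 + L_2$ is $(1-\rho)\eps$-differentially private by the Laplace mechanism theorem, using the elementary fact that a mechanism which is $\eps'$-DP for some $\eps' \le (1-\rho)\eps$ is in particular $(1-\rho)\eps$-DP. The map $t \mapsto \sqrt{\max(0,t)}$ does not depend on the database, so obtaining $\tilde{s}$ from $s^2 + L_2$ is post-processing, and the post-processing theorem preserves the $(1-\rho)\eps$ guarantee.

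Finally, the full output $(\tilde{\db}, \tilde{s})$ is exactly the composition $f(\db) = (f_1(\db), f_2(\db))$ of the Step 1 query and the (post-processed) Step 2 query, both evaluated on the same \db with $L_1, L_2$ drawn independently. The composition theorem then yields that \cilap is $(\rho\eps + (1-\rho)\eps) = \eps$-differentially private, as claimed.

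There is no real obstacle here: the argument is a routine assembly of the Laplace mechanism, post-processing, and basic composition. The only points that warrant a sentence of care are (i) that using a noise scale no smaller than $\Delta f / \eps$ still yields privacy — i.e.\ satisfying the definition with a smaller parameter implies it for any larger one — which matters because \lemmaref{lm:1} is an inequality; and (ii) being explicit that the $\sqrt{\max(0,\cdot)}$ truncation is data-independent, so it genuinely counts as post-processing and does not need to be folded into a sensitivity computation.
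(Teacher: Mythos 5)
Your proof is correct and follows essentially the same route as the paper's: show Step 1 is $(\rho\eps)$-differentially private and Step 2 is $((1-\rho)\eps)$-differentially private via the Laplace mechanism and post-processing, then combine with composition. The only difference is that you explicitly note the monotonicity point arising from \lemmaref{lm:1} being an inequality, which the paper glosses over but which does not change the argument.
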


\begin{proof}
By composition, it is sufficient to show that $\tilde{\db}$ is \edp{\rho\eps} and $\tilde{s}$ is \edp{(1-\rho)\eps}. In step 1 of $\cilap$, the amount of noise added is exactly $\Lap(\frac{\Delta\bar{\db}}{\rho\eps})$, thus $\tilde{\db}$ is \edp{\rho\eps} by the properties of the Laplace Mechanism. Similarly, in step 2 of $\cilap$ the amount of noise is $\Lap(\frac{\Delta s^2}{(1-\rho)\eps})$, so $s^2 + L_2$ is \edp{(1-\rho)\eps} and by post-processing, $\tilde{s}$ is \edp{(1-\rho)\eps}.
\end{proof}

\paragraph*{Simulation} When generating public confidence intervals, a $t$-distribution is used to find the critical value and margin of error for a confidence interval. In our case, however, the addition of Laplacian noise to both the standard deviation and the sample mean, renders the $t$-distribution no longer appropriate. Reed in 2006 \cite{NormalLaplace} introduces the useful Normal-Laplace distribution and provides its cdf and pdf. However, as we fail to invert the cdf to construct a quantile function, the distribution cannot be used for constructing a confidence interval. In its place, we use simulation to construct the reference distribution for the noisy sample mean with standard deviation $\tilde{s}$. The margin of error is then estimated to be half of the difference between the $\frac{\alpha}{2}$ quantile and the $1-\frac{\alpha}{2}$ quantile of the simulated reference distribution. Let $q(\db,\alpha)$ be a non-private empirical quantile function that outputs the $\alpha$ quantile of sample $\db$.

\begin{algorithm}
\caption{Confidence interval simulation, \cci}\label{alg:CCI}
\begin{algorithmic}[1]
\REQUIRE $\alpha, \mathcal{A}, \db, \eps$, \textit{nsim}
\STATE $\widetilde{\db}, \widetilde{s} \gets \mathcal{A}(\db, \eps)$
\FOR{$i$ from 1 to \textit{nsim}}
  \STATE $\db' \gets\db'_0, ..., \db'_n \sim \mathcal{N}(\widetilde{\db}, \widetilde{s}^2)$
  \STATE $\widetilde{\db}'_i \gets \mathcal{A}(\db', \eps)$
\ENDFOR
\STATE $\moe \gets \frac{q(\tilde{\db}', 1 - \frac{\alpha}{2})-q(\widetilde{\db}', \frac{\alpha}{2})}{2}$
\ENSURE $\widetilde{\db} - MoE, \widetilde{\db} + MoE$
\end{algorithmic}
\end{algorithm}

The algorithm outputs a $1-\alpha$ confidence interval through simulation. The input $\mathcal{A}$ can be any algorithm, such as \cilap, that outputs a private estimate of mean and standard deviation when given a database \db.

Since \cci only interacts with database $\db$ at step 1 through algorithm $\mathcal{A}$, \cci is a post-processing algorithm and preserves $\eps$-differential privacy. In the following sections we focus only on algorithms which produce private estimates of mean and standard deviation. This allows a general application of \cci to construct confidence intervals.

\subsection{Noisy Absolute Deviations}
While a private estimate of the standard deviation can be made by adding noise to the naive estimator, previous work has shown that one can increase utility by using an alternative estimator, one with lower sensitivity\cite{new_anova}.

\begin{definition}[Mean Absolute Deviation]
The mean absolute deviation of the sample $\db_1, \ldots, \db_n$ is
$$\frac{1}{n}\sum_{i=1}^n |\db_i - \bar{\db}|.$$
For normally distributed data $\db_1, \ldots, \db_n$, the ratio of mean absolute deviation to standard deviation is $\sqrt{\frac{2}{\pi}}$.
\end{definition}

The mean absolute deviation has lower sensitivity than $s^2$, therefore reducing the amount of noise necessary to maintain privacy. We use $\frac{1}{n}\sum_{i=1}^n |\db_i - \bar{\db}| \cdot \sqrt{\frac{\pi}{2}}$ to then convert the mean absolute deviation to the sample standard deviation.

\begin{algorithm}
\caption{Noisy absolute deviations, \abslap}\label{alg:AbsLap}
\begin{algorithmic}[1]
\REQUIRE $\db, \eps, \rho$
\STATE $\widetilde{\db} \gets \Bar{\db} + L_1$, where $L_1 \sim \Lap\left( \frac{\drange}{\eps_{1}n} \right)$
\STATE $\widetilde{s} \gets\sqrt{\frac{\pi}{2}}\cdot\text{max}(0, \frac{1}{n}\sum_{i=1}^{n}|\db_i-\bar{\db}|  + L_2)$, where $L_2 \sim \Lap\left(\frac{2\drange}{(1-\rho)\eps n}\right)$
\ENSURE $\widetilde{\db}, \widetilde{s}$
\end{algorithmic}
\end{algorithm}

The sensitivity of sample standard deviation is $\sqrt{\frac{\drange^2}{n}}$ while the sensitivity of the mean absolute deviation is $\frac{2\drange}{n}$ which is asymptotically smaller.

\begin{lemma}
$\Delta\widetilde{s}\leq\frac{2\drange}{n}$.
\end{lemma}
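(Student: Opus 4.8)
The plan is to bound the sensitivity of the mean absolute deviation $M(\db) = \frac{1}{n}\sum_{i=1}^{n}|\db_i - \bar{\db}|$. In Step~2 of \abslap the constant $\sqrt{\pi/2}$ and the truncation $\max(0,\cdot)$ are applied to $M(\db) + L_2$, i.e.\ only \emph{after} the Laplace noise is added, so by post-processing it suffices to match the noise scale to $\Delta M$; the goal is therefore to show $\Delta M \le \frac{2\drange}{n}$. (Note the bound is for $M$ itself, not for $\sqrt{\pi/2}\,M$, whose sensitivity would be the larger $\sqrt{\pi/2}\cdot\frac{2\drange}{n}$ --- this is exactly why the constant is pulled outside the noise.)

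Fix neighboring databases $\db,\db'$ that differ only in coordinate $k$, so $\db_i' = \db_i$ for $i\ne k$ and $\db_k,\db_k' \in [\xmin,\xmax]$ (the data has already been clamped to this range). First I would record two elementary facts. The sample mean barely moves: $\bar{\db}' - \bar{\db} = \frac{\db_k' - \db_k}{n}$, so $|\bar{\db}' - \bar{\db}| \le \frac{\drange}{n}$. And since every entry, and hence every sample mean, lies in the interval $[\xmin,\xmax]$ of length $\drange$, each summand $|\db_i - \bar{\db}|$ (and likewise $|\db_i' - \bar{\db}'|$) is at most $\drange$.

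Now expand
$$n\bigl(M(\db') - M(\db)\bigr) = \bigl(|\db_k' - \bar{\db}'| - |\db_k - \bar{\db}|\bigr) + \sum_{i\ne k}\bigl(|\db_i - \bar{\db}'| - |\db_i - \bar{\db}|\bigr).$$
For the changed coordinate, both $|\db_k'-\bar{\db}'|$ and $|\db_k-\bar{\db}|$ lie in $[0,\drange]$, so that term has absolute value at most $\drange$. For each of the $n-1$ unchanged coordinates, the reverse triangle inequality gives $\bigl||\db_i-\bar{\db}'| - |\db_i-\bar{\db}|\bigr| \le |\bar{\db}'-\bar{\db}| \le \frac{\drange}{n}$. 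Summing, $n\,|M(\db')-M(\db)| \le \drange + (n-1)\tfrac{\drange}{n} = \tfrac{(2n-1)\drange}{n}$, hence $|M(\db')-M(\db)| \le \tfrac{(2n-1)\drange}{n^{2}} \le \tfrac{2\drange}{n}$, which is the claim.

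The computation is entirely elementary, so there is no genuine obstacle; the one thing that actually matters is recognizing that the changed coordinate must be separated out before bounding. A naive uniform bound $\bigl||\db_i'-\bar{\db}'| - |\db_i-\bar{\db}|\bigr| \le \drange$ applied to all $n$ terms would only yield $\Delta M \le \drange$, which is useless. The improvement to $\frac{2\drange}{n}$ comes precisely from noticing that for $i\ne k$ the only thing that changes is the centering term $\bar{\db}$, which shifts by at most $\drange/n$, so $n-1$ of the $n$ contributions are individually tiny.
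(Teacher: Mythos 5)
Your proof is correct and follows essentially the same route as the paper's appendix argument: separate the changed coordinate from the $n-1$ unchanged ones, observe that the sample mean shifts by at most $\drange/n$, and apply the (reverse) triangle inequality term by term before dividing by $n$. The only cosmetic difference is that you bound the changed-coordinate term directly by $\drange$ (since both deviations lie in $[0,\drange]$) rather than by $|d|\left(1+\tfrac{1}{n}\right)$ as the paper does, which gives you the marginally tighter constant $\tfrac{2n-1}{n}$ in place of $2$.
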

\begin{proof}
We show in Appendix \ref{sec:pfs} that the sensitivity of $\sum_{i = 1}^n |\db_i - \bar{\db}|$ is bounded by $2\drange$. Thus the sensitivity of $\tilde{s}_0$ is bounded by $\frac{2\drange}{n}$, and by post-processing, $\Delta\widetilde{s}\leq\frac{2\drange}{n}$.
\end{proof}

\begin{theorem}
$\abslap$ is \edp{\eps}.
\end{theorem}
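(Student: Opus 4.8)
The plan is to follow exactly the template already used for \cilap: decompose \abslap into two independent private releases, show each is private for its share of the budget via the Laplace mechanism, and conclude by composition and post-processing. First I would note that the noise parameter written as $\eps_1$ in step~1 is meant to be $\rho\eps$, so that the two shares sum to $\eps$; with that reading it suffices to prove that $\widetilde{\db}$ is \edp{\rho\eps} and that $\widetilde{s}$ is \edp{(1-\rho)\eps}.

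For the mean, I would reuse the earlier lemma that $\Delta\bar{\db} = \frac{\drange}{n}$. Since step~1 adds $L_1 \sim \Lap\!\left(\frac{\drange}{\rho\eps n}\right) = \Lap\!\left(\frac{\Delta\bar{\db}}{\rho\eps}\right)$, the Laplace mechanism theorem gives immediately that $\widetilde{\db}$ is \edp{\rho\eps}.

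For the spread estimate, the key ingredient is the sensitivity bound $\Delta\!\left(\frac{1}{n}\sum_{i=1}^n |\db_i - \bar{\db}|\right) \le \frac{2\drange}{n}$ established in Appendix~\ref{sec:pfs} and restated in the preceding lemma. Step~2 adds $L_2 \sim \Lap\!\left(\frac{2\drange}{(1-\rho)\eps n}\right)$, which is exactly $\Lap$ of (that sensitivity)$/((1-\rho)\eps)$, so $\frac{1}{n}\sum_{i=1}^n |\db_i - \bar{\db}| + L_2$ is \edp{(1-\rho)\eps} by the Laplace mechanism. The two remaining operations — truncating below at $0$ via $\max(0,\cdot)$ and scaling by the constant $\sqrt{\pi/2}$ — are data-independent post-processing, so by the post-processing theorem $\widetilde{s}$ remains \edp{(1-\rho)\eps}. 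Since $\widetilde{\db}$ and $\widetilde{s}$ use independent noise and the output is the pair $(\widetilde{\db}, \widetilde{s})$, the composition theorem yields that \abslap is \edp{(\rho + (1-\rho))\eps} = \edp{\eps}.

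The only genuinely nonroutine point — and the step I expect to need the most care — is the sensitivity of $\sum_{i=1}^n |\db_i - \bar{\db}|$, since changing one row both perturbs a single summand directly and shifts $\bar{\db}$, hence every summand; but that argument is precisely what is deferred to the appendix, so within this proof it can be cited. Everything else is bookkeeping, and I would additionally flag the $\eps_1$ in step~1 as shorthand for $\rho\eps$ so the budget split is unambiguous.
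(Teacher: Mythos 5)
Your proof is correct and follows essentially the same route as the paper's: reuse the mean's $\rho\eps$-privacy from the previous section, apply the Laplace mechanism with the $\frac{2\drange}{n}$ sensitivity bound for the mean absolute deviation, and finish by post-processing and composition. You are also right that the $\eps_1$ in step~1 of the algorithm should be read as $\rho\eps$; the paper's own proof implicitly assumes this.
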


\begin{proof}
It is sufficient to show that $\widetilde{s}$ is \edp{(1-\rho)\eps}, since from the above section we already know that $\widetilde{\db}$ is \edp{\rho\eps}. In step 2 of $\abslap$ the noise added is $\Lap(\frac{\Delta\widetilde{s}}{(1-\rho)\eps})$. Thus $\widetilde{s}$ follows the Laplace Mechanism and is \edp{(1-\rho)\eps}.
\end{proof}

Using the estimates of mean and standard deviation given by algorithm $\abslap$, we can construct a private confidence interval using algorithm $\cci$.

\subsection{Exponential Quantiles}
The construction of confidence intervals for the mean of normal data requires some measure of the center and spread of the data. The first two methods address this fairly directly, using noisy estimates of the two statistics. A more flexible approach to estimating center and spread is to work with quantiles. A single quantile is a location statistic, telling us the magnitude of a particular part of the distribution (e.g. the 10\textsuperscript{th} quantile tells us where the left tail is; the 50\textsuperscript{th} tells us where the center is). With two quantiles, it becomes possible to estimate the spread of the data.

We introduce three approaches to computing the mean and standard deviation of data using private quantiles. All three methods rely on the algorithm below, which outputs \edp{\eps} estimates of any desired sample quantiles using the exponential mechanism.  (This algorithm first appears in print in the work of Smith \cite{Smith2011}, who credits it to McSherry and Talwar \cite{mcsherry2007mechanism} and personal correspondence.  We know of no published proof of its privacy.  Under the unbounded differential privacy definition the proof is almost trivial, but the proof in the bounded case is a bit more complex, so we include it here.)

To concisely explain the algorithm, we rely on the following notation:

\begin{itemize}
\item Allow that the values of any given database, $\db$, are sorted: $\db_1 \leq \ldots \leq \db_n$.
\item Allow for notational convenience that $\db_0 = \xmin$ and $\db_{n + 1} = \xmax$.
\item Define bins $B_0,\ldots,B_n \subseteq [\xmin, \xmax) \text{ s.t. } B_i = [\db_i, \db_{i + 1})$.
\item Let $m$ indicate the rank of the quantile of interest. (I.e., $\db_m$ is the ideal output.)
\end{itemize}

Since this algorithm uses the exponential mechanism, we must define its utility function. This function is selected such that utility increases as the number of datapoints that lie between a potential response and the true quantile decrease. This results in values directly adjacent to the true quantile, $\db_m$, having the highest utility.

\noindent Let utility function $U_m:\dbu\times[\xmin,\xmax)\rightarrow\mathbb{R}$ s.t. $\forall \, i\in\{0,\ldots,n\}$ and for all possible responses $y\in B_i$,
$$U_m(\db, y) = U_m(\db, B_i) = \begin{cases}
  i + 1 - m &\text{if } i < m\\
  m - i &\text{if } i \geq m\\
\end{cases}.$$

\begin{algorithm}[H]
\caption{Exponential quantile, \expmed}\label{alg:ExpMed}
\begin{algorithmic}[1]
\REQUIRE $\db, m, \eps$
\STATE $\textbf{Define } B_0,\ldots,B_n \text{ as above}$
\FOR{$i$ from 1 to $n$}
\STATE $p_i \gets |B_i|\cdot\text{exp}(\frac{\eps}{2}\, U_m(\db, B_i))$
\ENDFOR
\STATE $\text{Normalize } p_0, \ldots, p_n$ s.t. $\sum_{i = 0}^n p_i = 1$
\STATE Sample $i\in [0, n]$ from the distribution defined by $p_0,\ldots,p_n$
\ENSURE $Y\sim \text{Unif}(B_i)$
\end{algorithmic}
\end{algorithm}

The range $[\xmin, \xmax)$ is split into $n+1$ bins $[\db_i, \db_{i+1})$ where $i \in \{0,\ldots,n\}$. Each bin is assigned a utility score based on its distance from the quantile of interest. Then the exponential mechanism is used to select a bin. The algorithm then randomly outputs a number from the range of the selected bin.

We now show that \expmed is \edp{\eps}.

\begin{lemma}
Given $\db, \db'\in\dbu$ that are neighbors, and their intervals $B_0,\ldots,B_n,\ B'_0,\ldots,B'_n$, if $y\in B_i$ then $y\in B'_{i - 1}\cup B'_i \cup B'_{i + 1}$.
\end{lemma}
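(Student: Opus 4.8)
The plan is to compare the bin structure of two neighboring databases $\db$ and $\db'$ directly. Since $\db$ and $\db'$ differ in exactly one row, we can write $\db$ sorted as $\db_1 \le \cdots \le \db_n$ and $\db'$ sorted as $\db'_1 \le \cdots \le \db'_n$, where the sorted sequences differ by the removal of one value and the insertion of another (the changed row). Concretely, suppose the changed row has value $a$ in $\db$ sitting at sorted position $j$, and value $b$ in $\db'$ sitting at sorted position $k$. The remaining $n-1$ values are common to both, appearing in the same relative order. The key combinatorial observation is that deleting one point from a sorted list shifts the index of every later point down by one and leaves earlier points fixed, and inserting one point shifts the index of every point at or after the insertion location up by one. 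Composing these two operations, each original endpoint $\db_\ell$ (for $\ell \notin \{0, n+1\}$, with the convention $\db_0 = \xmin$, $\db_{n+1} = \xmax$, which are unchanged) moves to a new index that differs from $\ell$ by at most one: it is unchanged if $\ell$ lies outside the interval of positions affected by both the deletion and the insertion, it moves by one if it lies in the symmetric difference, and the deletion-shift and insertion-shift cancel if it lies in the overlap.

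First I would set up this index-tracking carefully: for a fixed value $v$ that is common to both databases, let $\mathrm{rk}(v)$ and $\mathrm{rk}'(v)$ denote its rank in $\db$ and $\db'$ respectively, and argue $|\mathrm{rk}(v) - \mathrm{rk}'(v)| \le 1$ by the deletion/insertion argument above. Then I would translate this into a statement about bins. The bin $B_i = [\db_i, \db_{i+1})$ is delimited by the $i$-th and $(i+1)$-th order statistics. The target point $y \in B_i$ lies strictly between two consecutive order statistics of $\db$; I want to locate it among the order statistics of $\db'$, i.e., find $i'$ with $y \in B'_{i'}$, and show $|i - i'| \le 1$. The cleanest route is to observe that $i'$ equals the number of elements of $\db'$ that are $\le y$ (or a closely related count), and likewise $i$ equals the number of elements of $\db$ that are $\le y$, and these two counts differ by at most one because $\db$ and $\db'$ differ in a single element — changing one row changes any such threshold count by at most one. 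This is in fact the simplest formulation: define $i'$ as (essentially) $|\{\,\ell : \db'_\ell \le y\,\}|$ and $i$ as $|\{\,\ell : \db_\ell \le y\,\}|$, note these count-functions differ by at most $1$ for every threshold $y$ since only one entry was modified, and conclude $i' \in \{i-1, i, i+1\}$, hence $y \in B'_{i-1} \cup B'_i \cup B'_{i+1}$.

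The main obstacle — really the only place care is needed — is the bookkeeping at the boundaries and with ties. I have to be precise about the half-open convention $B_i = [\db_i, \db_{i+1})$: a value $y$ equal to several order statistics, or landing exactly on $\db_i$, needs the count to be defined with the right inequality (strict versus non-strict) so that ``$y \in B_i$'' corresponds exactly to the count. The sentinel values $\db_0 = \xmin$ and $\db_{n+1} = \xmax$ are unchanged between $\db$ and $\db'$, so they cause no trouble, but I should note explicitly that $y \in [\xmin, \xmax)$ so it does fall in some bin of each database. I also need to handle the degenerate sub-case where the changed value lands at the same sorted position in both databases ($j = k$), in which case all indices are literally unchanged and $y \in B'_i$ — a trivial special case, but worth a sentence so the argument is airtight. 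Once the count-based reformulation is in place, the inequality $|i - i'| \le 1$ is immediate and the lemma follows; I would present it in that order (define the counts, observe the one-row sensitivity of a threshold count, translate back to bins) to keep the proof short.
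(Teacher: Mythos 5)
Your proposal is correct, and the formalization you settle on is a genuinely different route from the paper's. The paper proves the lemma by constructing an intermediate database $\db^*$ with one fewer entry (removing the changed row from each of $\db$ and $\db'$ so that the remaining sorted values coincide), observing that the bins of $\db^*$ arise from those of $\db$ (resp.\ $\db'$) by merging two adjacent bins, and chaining the inclusions $B_i \subseteq B^*_{i-1}\cup B^*_i$ and $B^*_i \subseteq B'_i\cup B'_{i+1}$; that is essentially the delete-then-insert index-shifting argument you sketch in your first paragraph. The threshold-count argument you then identify as the cleanest route is a different and arguably tighter proof: with the half-open convention $B_i=[\db_i,\db_{i+1})$, a point $y$ lies in the (necessarily nonempty) bin $B_i$ exactly when $i=|\{\ell\in\{1,\dots,n\}:\db_\ell\le y\}|$ --- the sentinels $\db_0=\xmin$ and $\db_{n+1}=\xmax$ are excluded from the count and guarantee that every $y\in[\xmin,\xmax)$ falls in some bin of each database --- and since $\db$ and $\db'$ differ in a single entry, this count changes by at most one, giving $|i-i'|\le 1$ immediately. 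Your version handles ties and empty bins automatically, since the count is well defined regardless of coincident order statistics, whereas the paper sidesteps ties by arguing with set inclusions of bins. Both arguments are complete; yours is the shorter write-up once the count characterization is stated and verified.
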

\begin{proof}
Let $\db^*$ be a database that has one entry less than both $\db$ and $\db'$, and let $\db_j$, $\db'_k$ be the removed entries of $\db$, $\db'$ such that
\begin{align*}
&\db_0,\ldots,\db_{j - 1},\db_{j + 1},\ldots,\db_{n + 1} = \db^*_0,\ldots,\db^*_n\\
&\indent=\db'_0,\ldots,\db'_{k - 1},\db'_{k + 1},\ldots,\db'_{n + 1}.
\end{align*}
So $B^*_0,\ldots,B^*_{n - 1}\\
\indent= B_0,\ldots,B_{j - 2},B_{j - 1} \cup B_j, B_{j + 1},\ldots,B_n$\\
and $B^*_0,\ldots,B^*_{n - 1}\\
\indent= B'_0,\ldots,B'_{k - 2},B'_{k - 1} \cup B'_k, B'_{k + 1},\ldots,B'_n$.
{
\begin{flalign*}
\text{So }\forall \, i,\ B_i &\subseteq B^*_{i - 1} \cup B^*_i&\\
\text{and } B^*_i &\subseteq B'_i \cup B'_{i + 1}\\
\implies B_i&\subseteq B'_{i - 1} \cup B'_{i} \cup B'_{i + 1}.
\end{flalign*}
}
Thus if $y\in B_i$, then $y \in B'_{i - 1} \cup B'_{i} \cup B'_{i + 1}$.
\end{proof}

\begin{lemma}
$\Delta U_m = 1$.
\end{lemma}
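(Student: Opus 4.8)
We must show that the utility function $U_m$ has sensitivity $\Delta U_m = 1$; that is, for any neighboring databases $\db, \db' \in \dbu$ and any response $y \in [\xmin, \xmax)$, we have $|U_m(\db, y) - U_m(\db', y)| \le 1$, and this bound is tight.

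**Plan.** The plan is to work directly from the definition of $U_m$ and the combinatorial structure of the bins established in the preceding lemma. Fix neighboring databases $\db, \db'$ and a response $y$. Let $i$ be the index of the bin containing $y$ in $\db$, so $y \in B_i$ and $U_m(\db, y) = U_m(\db, B_i)$; similarly let $i'$ be the index with $y \in B'_{i'}$ and $U_m(\db', y) = U_m(\db', B'_{i'})$. By the previous lemma, $y \in B_i$ implies $y \in B'_{i-1} \cup B'_i \cup B'_{i+1}$, so $i' \in \{i-1, i, i+1\}$. Since the lemma is symmetric in $\db$ and $\db'$ (neighboring is a symmetric relation), this already pins down $|i - i'| \le 1$.

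**Key step.** First I would record the elementary fact that the map $i \mapsto U_m(\db, B_i)$ changes by exactly $1$ between consecutive indices. Concretely, one checks from the two cases of the definition: if $i + 1 \le m$ then both $B_i$ and $B_{i+1}$ fall in the first case and the values are $i+1-m$ and $i+2-m$, differing by $1$; if $i \ge m$ then both fall in the second case with values $m-i$ and $m-i-1$, again differing by $1$; and at the single boundary $i = m-1$, we get $U_m(\db, B_{m-1}) = m - m = 0$ (first case, $i = m-1 < m$) versus $U_m(\db, B_m) = m - m = 0$ (second case, $i = m \ge m$), which differ by $0 \le 1$. So for all $i$, $|U_m(\db, B_{i+1}) - U_m(\db, B_i)| \le 1$. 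But here is the subtlety: $U_m(\db, B_i)$ and $U_m(\db', B_{i'})$ are computed with respect to \emph{different} databases, and the formula for $U_m(\cdot, B_j)$ depends only on the index $j$ and the fixed rank $m$ — it does not reference the data values at all. Hence $U_m(\db', B_{i'}) = U_m(\db, B_{i'})$ as numbers (the function of the index is literally the same piecewise-linear function). Combining: $|U_m(\db, y) - U_m(\db', y)| = |U_m(\db, B_i) - U_m(\db, B_{i'})| \le |i - i'| \le 1$, using that the index-function is $1$-Lipschitz in the index and $|i - i'| \le 1$.

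Finally I would argue tightness, so that the inequality is an equality: it suffices to exhibit one pair of neighbors and one $y$ achieving difference exactly $1$. Taking any database with at least one point, move a single entry across $y$ so that $y$'s bin index shifts by exactly one (e.g. pick $y$ strictly between two consecutive sorted values $\db_i < \db_{i+1}$, and in $\db'$ replace $\db_{i+1}$ by a value less than $y$, which moves $y$ into the next bin); away from the $i = m-1$ boundary the utility then changes by $1$. This gives $\Delta U_m = 1$.

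**Main obstacle.** The only real subtlety — and the step I would be most careful to spell out — is the observation that $U_m(\db', B_{i'}) = U_m(\db, B_{i'})$ because the utility is a fixed function of the bin \emph{index} and $m$ alone. Without this, one might worry that the two databases induce genuinely different utility values at a common index, but they do not: the data only affects \emph{which} bin $y$ lands in, not the score attached to a given index. Everything else is the routine case analysis on the piecewise definition of $U_m$, which I would not grind through in full.
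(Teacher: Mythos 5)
Your proof is correct and follows essentially the same route as the paper's: invoke the preceding bin-overlap lemma to conclude that the index of the bin containing $y$ shifts by at most one between neighboring databases, then use the fact that $U_m(\cdot, B_j)$ depends only on the index $j$ and $m$ and changes by at most $1$ per index step. You are in fact slightly more explicit than the paper --- you check the boundary between the two pieces at $i = m-1$ (where the step is $0$, modulo a harmless off-by-one in how you state the first sub-case before treating that boundary separately) and you add a tightness example --- but the substance is identical.
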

\begin{proof}
Given $\db,\db'\in\dbu$ that are neighbors, and $y\in B_i$, allow that $\db,\db'$ have quantiles of interest $\db_m, \db'_m$. By the previous lemma, $y \in B'_{i - 1} \cup B'_{i} \cup B'_{i + 1}$. If $i \geq m$, then
{
\begin{flalign*}
    |U_m(\db, r) - U_m(&\db', r)|=|(m - i) - U_m(\db', r)|&\\
    &\leq|(m - i) - U_m(\db', B'_{i\pm 1})|\\
    &\leq|(m - i) - (m - i \pm 1)|\\
    &=|\pm 1|\\
    &=1.
\end{flalign*}
}
If $i < m$, then
{
\begin{flalign*}
|U_m(\db, r) - U_m(&\db', r)|=|(i + 1 - m) - U_m(\db', r)|&\\
&\leq|(i + 1 - m) - U_m(\db', B'_{i\pm 1})|\\
&\leq|(i + 1 - m) - (i \pm 1 + 1 - m)|\\
&=|\pm 1|\\
&=1.
\end{flalign*}
}
Thus $\Delta U_m=1$.
\end{proof}

\begin{theorem}
\expmed is \edp{\eps}.
\end{theorem}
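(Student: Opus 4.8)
The plan is to reduce the statement to a pointwise bound on the density of the algorithm's continuous output and then integrate. Write $Y$ for the output of \expmed on input $(\db, m, \eps)$. Since bin $i$ is selected with probability $p_i/Z(\db)$ where $Z(\db)=\sum_{i=0}^n |B_i|\exp(\tfrac{\eps}{2}U_m(\db,B_i))$, and conditioned on that choice $Y$ is uniform on $B_i$, the output has a density with respect to Lebesgue measure on $[\xmin,\xmax)$ equal to
$$f_\db(y) = \frac{\exp\!\big(\tfrac{\eps}{2}U_m(\db,y)\big)}{Z(\db)}, \qquad Z(\db) = \int_{\xmin}^{\xmax}\exp\!\big(\tfrac{\eps}{2}U_m(\db,t)\big)\,dt,$$
the factors of $|B_i|$ cancelling, and the last equality holding because $U_m(\db,\cdot)$ is constant on each bin. (Empty bins, arising from repeated data values, contribute nothing and cause no trouble; and $\{B_i\}_{i=0}^n$ covers $[\xmin,\xmax)$.) It then suffices to show $f_\db(y)/f_{\db'}(y)\le e^\eps$ for every $y$ and every pair of neighbors $\db,\db'$, since integrating this inequality over any measurable output set $S$ gives $\Pr[Y\in S\mid\db]\le e^\eps\Pr[Y\in S\mid\db']$, which is exactly \edp{\eps} with $\delta=0$.

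Next I would split the ratio as
$$\frac{f_\db(y)}{f_{\db'}(y)} = \exp\!\Big(\tfrac{\eps}{2}\big(U_m(\db,y)-U_m(\db',y)\big)\Big)\cdot\frac{Z(\db')}{Z(\db)}$$
and bound each factor by $e^{\eps/2}$. The first factor is immediate from $\Delta U_m = 1$, which gives $U_m(\db,y)-U_m(\db',y)\le 1$. For the second, the key observation is that $Z(\db)$ and $Z(\db')$ are integrals of $t\mapsto\exp(\tfrac{\eps}{2}U_m(\db,t))$ and $t\mapsto\exp(\tfrac{\eps}{2}U_m(\db',t))$ over the \emph{same} fixed domain $[\xmin,\xmax)$; since $\Delta U_m=1$ also gives $U_m(\db',t)\le U_m(\db,t)+1$ pointwise, we have $\exp(\tfrac{\eps}{2}U_m(\db',t))\le e^{\eps/2}\exp(\tfrac{\eps}{2}U_m(\db,t))$ for all $t$, and integrating yields $Z(\db')\le e^{\eps/2}Z(\db)$. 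Multiplying the two bounds gives $f_\db(y)/f_{\db'}(y)\le e^\eps$, completing the argument.

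The place that needs care — and the reason the bounded case is less trivial than the unbounded one — is the normalization ratio $Z(\db')/Z(\db)$. In the unbounded setting the two databases have different sizes and one simply compares the two normalizing sums term by term; here the bin count is fixed at $n+1$ but the bin endpoints move, and a single bin of $\db$ can split into two bins of $\db'$ (or two can merge), so there is no clean term-by-term correspondence between the sums $\sum_i |B_i|\exp(\cdots)$. Rewriting $Z$ as an integral over the common interval $[\xmin,\xmax)$ sidesteps this entirely, reducing the comparison to the already-established pointwise bound $|U_m(\db,t)-U_m(\db',t)|\le 1$. I would therefore take care to state the density computation in the first step precisely, since everything afterwards follows cleanly from the two lemmas already proved.
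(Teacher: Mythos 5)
Your proof is correct and follows essentially the same route as the paper's: both compute the output density at $y \in B_i$ as $p_i/|B_i|$ times the normalizer (so the $|B_i|$ factors cancel and the density is proportional to $\exp\left(\tfrac{\eps}{2}U_m(\db,y)\right)$) and then rely on the already-proved lemma $\Delta U_m = 1$. The only difference is that the paper stops once it has identified the algorithm as an instance of the exponential mechanism over the continuous range $[\xmin,\xmax)$ and cites that mechanism's privacy theorem, whereas you inline that theorem's proof by bounding the utility ratio and the normalization ratio $Z(\db')/Z(\db)$ each by $e^{\eps/2}$.
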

\begin{proof}
It is sufficient to show that \expmed follows the exponential mechanism with an output range of $[\xmin, \xmax)$. Let $i,\, p_0,\ldots,p_n$ be the final values of those same variables in \expmed.
{
\begin{flalign*}
    \Pr[\expmed(\db, m, \eps) = y] &= p_i\, \Pr[Y=y],\ Y\sim\text{Unif}(B_i)\\
    &\propto p_i\frac{1}{|B_i|}\\
    &\propto \text{exp}\left(\frac{\eps}{2}\, U_m(\db,B_i)\right)\\
    &=\text{exp}\left(\frac{\eps\, U_m(\db,y)}{2\Delta U_m}\right).
\end{flalign*}
}
Thus \expmed is $\eps$-differentially private following the exponential mechanism.\\
\end{proof}

\begin{theorem}\label{thm:unbias}
\expmed outputs an unbiased estimator of the median of symmetric data.
\end{theorem}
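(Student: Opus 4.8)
The plan is to exploit the mirror symmetry of the whole configuration: I will show that when the data is symmetric about its median $c$, the bin-selection distribution produced by \expmed is itself symmetric about $c$, after which linearity of expectation finishes the argument.

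First I would fix notation and make the symmetry hypothesis precise. Write $c$ for the sample median; take $n$ odd and $m=(n+1)/2$, so that $\db_m=c$ is the unique median value. ``Symmetric data'' I read as: the reflection $\phi(x)=2c-x$ maps the set $\{\xmin=\db_0,\db_1,\dots,\db_n,\db_{n+1}=\xmax\}$ to itself, i.e.\ $\db_j+\db_{n+1-j}=2c$ for every $0\le j\le n+1$. (Note this forces the clamping window $[\xmin,\xmax)$ to be symmetric about $c$, which should be understood as part of the hypothesis. For even $n$ the same argument goes through once one adopts the usual two-sided convention for the median rank; I would either restrict the statement to odd $n$ or read $m=(n+1)/2$ as a half-integer so that the utility symmetry below still holds.)

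Next I would isolate two structural facts. (i) $\phi$ carries each bin $B_i=[\db_i,\db_{i+1})$ onto $B_{n-i}$ up to a harmless flip of which endpoint is included, since $2c-\db_{i+1}=\db_{n-i}$ and $2c-\db_i=\db_{n+1-i}$; hence $|B_i|=|B_{n-i}|$ and $\mathrm{mid}(B_i)+\mathrm{mid}(B_{n-i})=2c$, where $\mathrm{mid}(B_i)$ denotes the midpoint. (ii) The utility is reflection-symmetric, $U_m(\db,B_i)=U_m(\db,B_{n-i})$: if $i<m$ then $n-i\ge m$, so $U_m(\db,B_i)=i+1-m$ and $U_m(\db,B_{n-i})=m-(n-i)$, and these agree precisely because $2m=n+1$; the case $i\ge m$ is symmetric. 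Since in \expmed we have $p_i\propto|B_i|\exp\!\big(\tfrac{\eps}{2}U_m(\db,B_i)\big)$, facts (i) and (ii) together give $p_i=p_{n-i}$ for all $i\in\{0,\dots,n\}$.

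Finally, conditioned on bin $B_i$ being selected the output $Y$ is uniform on $B_i$, so $\expect[Y\mid i]=\mathrm{mid}(B_i)$ and $\expect[Y]=\sum_{i=0}^{n}p_i\,\mathrm{mid}(B_i)$. Writing this sum twice, reindexing one copy by the bijection $i\mapsto n-i$ of $\{0,\dots,n\}$, and using $p_{n-i}=p_i$,
$$2\,\expect[Y]=\sum_{i=0}^{n}p_i\big(\mathrm{mid}(B_i)+\mathrm{mid}(B_{n-i})\big)=\sum_{i=0}^{n}p_i\cdot 2c=2c,$$
so $\expect[Y]=c$, the sample median, as claimed. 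The probabilistic content is just this symmetrization; the only real obstacle is bookkeeping — getting the bin-reflection index map right given the half-open intervals and the special roles of $\xmin,\xmax$, and (if one insists on even $n$) pinning down the median-rank convention so that the utility symmetry in fact (ii) survives.
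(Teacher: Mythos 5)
Your proof is correct and is essentially the paper's own argument: both exploit the reflection about the median that pairs bin $B_i$ with $B_{n-i}$, preserves bin widths and utilities (via $2m=n+1$), and hence makes the output density of \expmed symmetric about $\db_m$, from which unbiasedness follows. The only substantive difference is that the paper's appendix also treats even $n$, where the unpaired outermost bin makes the estimator only \emph{asymptotically} unbiased (its density decays like $\exp(-\eps n/4)$), whereas you restrict to odd $n$; your passing claim that the even case ``goes through'' under a rank convention is the one spot that would need the paper's extra care.
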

\begin{proof}
We prove Theorem \ref{thm:unbias} by showing that the expected output of \expmed falls into bin $B_m$ or bin $B_{m - 1}$. These bins are adjacent to the median, which implies the expected output will be the median. See the full proof in Appendix \ref{sec:pfs}.


\end{proof}

Since our data are symmetrically distributed, the sample median is an unbiased estimate of the mean. Therefore, the private median produced by \expmed is an unbiased estimate of the mean. The algorithm is slightly biased when estimating other quantiles, but the bias is practically insignificant for moderately sized $n$ or $\eps$. We discuss this bias in Appendix \ref{sec:expq-bias}.

\subsection{Centered Quantiles}
The most straightforward application of \expmed for constructing confidence intervals is to use its median estimate as our mean estimate and use some other quantile as an estimate of standard deviation. Let $\qnorm$ be quantile function of the standard normal distribution and $b$ be the choice of quantile. (We will choose $b$ in practice through experimental optimization.)

\begin{algorithm}
\caption{Centered quantiles, \cexp}\label{alg:ExpM}
\begin{algorithmic}[1]
\REQUIRE $ \db, \eps, \rho, b$
\STATE $\widetilde{\db} \gets \expmed\left(\db, \lfloor\frac{n+1}{2}\rfloor, \rho\eps\right)$
\STATE $d \gets \expmed(\db, \lfloor b(n-1)+1\rfloor, (1-\rho)\eps)$
\STATE $\widetilde{s} \gets \text{max}\left(0, \frac{d - \widetilde{\db}}{\qnorm(b)}\right)$
\ENSURE $\widetilde{\db}, \tilde{s}$
\end{algorithmic}
\end{algorithm}

\begin{theorem}\label{thm:cenq-is-dp}
$\cexp$ is \edp{\eps}.
\end{theorem}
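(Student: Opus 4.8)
The plan is to show that \cexp is a composition of two calls to \expmed followed by post-processing, and then invoke the composition theorem together with the fact (already established) that \expmed is \edp{\eps}. Concretely, \cexp makes exactly two accesses to the database $\db$: step 1 calls $\expmed(\db, \lfloor\frac{n+1}{2}\rfloor, \rho\eps)$, and step 2 calls $\expmed(\db, \lfloor b(n-1)+1\rfloor, (1-\rho)\eps)$. Step 3 computes $\widetilde{s}$ purely from the outputs $\widetilde{\db}$ and $d$ of those two calls, with no further access to $\db$.

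First I would note that the first \expmed call is \edp{\rho\eps} and the second is \edp{(1-\rho)\eps}, each directly by the theorem that \expmed is \edp{\eps} (instantiated with the appropriate privacy parameter; the rank argument $m$ is a fixed function of the public quantities $n$ and $b$, not of the data, so it does not affect the privacy analysis). Next, the pair $(\widetilde{\db}, d)$ is the composition of these two queries, so by the composition theorem it is $(\rho\eps + (1-\rho)\eps) = \eps$-differentially private. Finally, the output $(\widetilde{\db}, \widetilde{s})$ is obtained by applying a fixed function $g$ to $(\widetilde{\db}, d)$ — namely $g(\widetilde{\db}, d) = (\widetilde{\db}, \max(0, (d - \widetilde{\db})/\qnorm(b)))$ — so by the post-processing theorem the full output is \edp{\eps}.

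I do not expect any real obstacle here; the argument is a routine instantiation of composition and post-processing, exactly paralleling the privacy proof of \cilap given earlier. The only point that warrants a sentence of care is that both the quantile ranks and the allocation $\rho$ and $b$ are data-independent public parameters, so that the two \expmed invocations genuinely have the claimed privacy parameters and their combination is legitimately governed by the composition theorem. Once that is observed, the proof is three lines.
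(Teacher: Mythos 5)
Your proposal is correct and follows essentially the same route as the paper's own proof: two invocations of \expmed with privacy parameters $\rho\eps$ and $(1-\rho)\eps$, combined via the composition theorem, with step 3 handled by post-processing. The extra care you take in noting that the quantile ranks and allocation parameters are data-independent is a reasonable elaboration but does not change the argument.
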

\begin{proof}
$\cexp$ interacts with the database only through two queries to \expmed. The first query uses privacy parameter $\rho\eps$ and the second uses $(1-\rho)\eps$. Thus by composition and post-processing, $\cexp$ is \edp{\eps}.
\end{proof}

\subsection{Symmetric Quantiles}

Here we take a different approach measuring the center of the data. We use algorithm \expmed to compute two different quantiles an equal distance away from the median. The average of these quantiles is used to estimate the mean while the difference is used to estimate standard deviation.

\begin{algorithm}
\caption{Symmetric quantiles, \cdbl}\label{alg:DblM}
\begin{algorithmic}[1]
\REQUIRE $\db, \eps, b$
\STATE $d_1 \gets \expmed\left(\db, \lfloor b(n-1)+1 \rfloor, \frac{\eps}{2}\right)$
\STATE $d_2 \gets \expmed\left(\db, \lfloor (1-b)(n-1) + 1) \rfloor, \frac{\eps}{2}\right)$
\STATE $\widetilde{\db} \gets \frac{d_1+d_2}{2}$
\STATE $\widetilde{s} \gets \text{max}\left( 0, \frac{d_2 - \widetilde{\db}}{\qnorm(1-b)} \right)$
\ENSURE $\widetilde{\db}, \widetilde{s}$
\end{algorithmic}
\end{algorithm}

\begin{theorem}
\cdbl is \edp{\eps}.
\end{theorem}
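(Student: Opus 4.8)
The plan is to mirror the argument used for \theoremref{thm:cenq-is-dp}: observe that \cdbl touches the sensitive database \db only through two invocations of \expmed, and that everything afterward is post-processing. First I would note that in step~1 the algorithm runs $\expmed(\db, \lfloor b(n-1)+1 \rfloor, \eps/2)$ and in step~2 it runs $\expmed(\db, \lfloor (1-b)(n-1)+1 \rfloor, \eps/2)$; by the theorem that \expmed is \edp{\eps} (instantiated with privacy parameter $\eps/2$ and the appropriate rank), each of these two queries is individually \edp{(\eps/2)}. The choice of rank $m$ is a public quantity depending only on $n$ and $b$, not on the data, so it does not affect the privacy accounting.

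Next I would invoke the Composition theorem: the joint release $(d_1, d_2)$ of the two queries is \edp{(\eps/2 + \eps/2)} $=$ \edp{\eps}. Finally, steps~3 and~4 compute $\widetilde{\db} = (d_1 + d_2)/2$ and $\widetilde{s} = \max(0, (d_2 - \widetilde{\db})/\qnorm(1-b))$, which are deterministic functions of $(d_1, d_2)$ together with the public parameter $b$; hence by the Post-processing theorem the output pair $(\widetilde{\db}, \widetilde{s})$ is \edp{\eps}.

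There is essentially no real obstacle here — the proof is a routine application of composition and post-processing, exactly parallel to the \cexp case. The only point worth a moment's care is confirming that the two \expmed calls are legitimate instances of the already-proven privacy guarantee (i.e., that the rank arguments $\lfloor b(n-1)+1 \rfloor$ and $\lfloor (1-b)(n-1)+1 \rfloor$ lie in $\{0,\ldots,n\}$, which holds for $b \in [0,1]$), and that the budget split $\eps/2 + \eps/2$ indeed sums to $\eps$. Given those observations, the statement follows immediately.

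\begin{proof}
\cdbl interacts with the database only through the two calls to \expmed in steps~1 and~2, each made with privacy parameter $\eps/2$. By the privacy of \expmed, each of these two queries is \edp{(\eps/2)}, so by composition the joint output $(d_1, d_2)$ is \edp{\eps}. The remaining steps compute $\widetilde{\db}$ and $\widetilde{s}$ as functions of $d_1$ and $d_2$ alone, so by post-processing the output $(\widetilde{\db}, \widetilde{s})$ is \edp{\eps}.
\end{proof}
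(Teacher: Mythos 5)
Your proof is correct and matches the paper's approach exactly: the paper simply states that the argument proceeds analogously to Theorem~\ref{thm:cenq-is-dp}, i.e., two \expmed calls with budgets summing to $\eps$, followed by composition and post-processing, which is precisely what you wrote out.
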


The proof proceeds in an analogous manner to that of Thm.~\ref{thm:cenq-is-dp}.

\subsection{Median of deviations}

The final approach uses algorithm \expmed to first compute the median as our estimate of the mean. To estimate standard deviation, we compute the distance between every datapoint and the estimated mean, then we take the median of these distances to estimate standard deviation.

\begin{algorithm}
\caption{Median of deviations, \mad}\label{alg:MAD}
\begin{algorithmic}[1]
\REQUIRE $\db, \eps, \rho$
\STATE $\widetilde{\db}\gets \expmed(\db,\lfloor\frac{n+1}{2}\rfloor,\rho\eps)$
\STATE $\db'\gets|\db_1-\widetilde{\db}|,\ldots,|\db_n-\widetilde{\db}|$
\STATE $\widetilde{s}\gets\frac{\expmed(\db',\lfloor\frac{n+1}{2}\rfloor, (1-\rho)\eps)}{\qnorm(.75)}$
\ENSURE $\widetilde{\db}, \widetilde{s}$
\end{algorithmic}
\end{algorithm}

\begin{theorem}
\mad is \edp{\eps}.
\end{theorem}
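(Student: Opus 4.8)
The plan is to follow the structure of the proof of Theorem~\ref{thm:cenq-is-dp} (that \cexp is \edp{\eps}), but with one extra wrinkle: in \mad the second quantile query is \emph{not} pure post-processing of the first output. The database $\db'$ fed to \expmed in Step~3 is the list of deviations $|\db_i - \widetilde{\db}|$, which depends on the raw data $\db$ directly, not only on the private median $\widetilde{\db}$ released in Step~1. So the clean ``composition, then post-processing'' argument used for \cexp does not literally apply, and the right tool is adaptive composition, which I would carry out by conditioning on the value of $\widetilde{\db}$.

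Concretely, Step~1 releases $\widetilde{\db} = \expmed(\db, \lfloor\tfrac{n+1}{2}\rfloor, \rho\eps)$, which is \edp{\rho\eps} by the \expmed privacy theorem. For Step~3 I would first record the key observation: fix any possible value $t$ of $\widetilde{\db}$ and let $T_t$ be the deterministic map $T_t(\db) = (|\db_1 - t|, \ldots, |\db_n - t|)$. If $\db$ and $\db''$ are neighbors, differing only in row $i$, then $T_t(\db)$ and $T_t(\db'')$ also differ only in row $i$, so $T_t$ maps neighboring databases to neighboring databases; moreover $T_t(\db)$ always lies in the fixed, public range $[0,\drange)$, on which \expmed can legitimately be run. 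Hence for every fixed $t$ the map $\db \mapsto \expmed(T_t(\db), \lfloor\tfrac{n+1}{2}\rfloor, (1-\rho)\eps)$ is \edp{(1-\rho)\eps} in $\db$ (it is \expmed, itself \edp{(1-\rho)\eps}, precomposed with a neighbor-preserving map), and dividing by the constant $\qnorm(.75)$ is post-processing. Thus, conditioned on $\widetilde{\db} = t$, the map $\db \mapsto \widetilde{s}$ is \edp{(1-\rho)\eps}.

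To finish, let $\db, \db''$ be neighbors, let $S$ be a set of output pairs, and write $S_t = \{u : (t,u)\in S\}$. Since \mad's randomness factors as ``draw $\widetilde{\db}$, then draw $\widetilde{s}$ given $\widetilde{\db}$,''
\begin{align*}
\Pr[\mad(\db)\in S]
&= \int \Pr[\widetilde{\db}=t\mid\db]\,\Pr[\widetilde{s}\in S_t\mid \widetilde{\db}=t,\db]\,dt\\
&\le \int e^{\rho\eps}\Pr[\widetilde{\db}=t\mid\db'']\; e^{(1-\rho)\eps}\Pr[\widetilde{s}\in S_t\mid \widetilde{\db}=t,\db'']\,dt\\
&= e^{\eps}\,\Pr[\mad(\db'')\in S],
\end{align*}
applying the \edp{\rho\eps} bound of Step~1 and the conditional \edp{(1-\rho)\eps} bound above termwise (here the integral is understood against the density of $\widetilde{\db}$). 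Equivalently, one may just invoke the adaptive composition theorem for pure differential privacy; the display above is its one-line specialization.

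\textbf{Expected main obstacle.}
The only genuinely delicate point is the one flagged in the first paragraph: recognizing that Step~3 is not post-processing (because $\db'$ depends on $\db$, not merely on $\widetilde{\db}$) and therefore that the non-adaptive composition theorem as stated does not directly cover \mad. The resolution is the neighbor-preservation property of the deviation map $T_t$ together with the conditioning (adaptive composition) argument; everything else — each \expmed call's privacy, the post-processing by $1/\qnorm(.75)$, and bookkeeping of the split $\rho\eps + (1-\rho)\eps = \eps$ — is routine and already established earlier in the paper.
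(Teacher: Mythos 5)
Your proof is correct and takes essentially the same route as the paper: both decompose \mad into two \expmed calls with budgets $\rho\eps$ and $(1-\rho)\eps$ and rely on the fact that the element-wise deviation map determined by the already-released $\widetilde{\db}$ preserves the neighbor relation, so the second call is still \edp{(1-\rho)\eps} in the original data. The paper's own proof is a three-sentence sketch that labels this ``composition and post-processing''; your explicit conditioning on $\widetilde{\db}=t$ and the resulting adaptive-composition integral supply the rigor that sketch omits, and your observation that Step~3 is not literally post-processing of Step~1's output is precisely the point the paper glosses over.
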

\begin{proof}
\mad is another algorithm that composes and post-processes two queries to \expmed. The interaction in step 2 is private since the database is only being modified element-wise by $\widetilde{\db}$, and the only information read from this new database is through the private query to \expmed at step 3. Thus \mad is \edp{\eps}.
\end{proof}

\section{Experimental Results}\label{sec:rslts}
Because we are focused on concrete performance at low $n$, rather than asymptotic analysis, we must evaluate our algorithms experimentally.  The first thing we must do is experimentally optimize the parameters of each algorithm.  Having done that, we must check that they output confidence intervals with the promised coverage.  Finally, we must compare them to find the best algorithm(s) and then compare those to prior work.

\subsection{Parameter Optimization}
The algorithms \cilap, \abslap, \cexp, and \mad all require an $\eps$ allocation parameter than determines what proportion of the privacy budget is consumed at different steps of the algorithm. Additionally, \cexp and \cdbl have a parameter $b$ which corresponds to the quantile(s) used to estimate the standard deviation of the database. In all cases, optimization was done experimentally by varying $\rho$ or $b$.  See Appendix \ref{sec:alloc} for figures demonstrating the experimental results.  In principle, the optimal parameter could be different for different choices of $n$, $\eps$, or range, but we found that in all cases we could pick $\rho$ or $b$ values that were roughly optimal in all cases.  (In many cases, there was a large region of choices that seemed roughly equally good.)  These parameters were fixed at the values given below, and all the following results use these parameter values.

\smallskip
\begin{center}
\begin{tabular}{|c|c|}
Algorithm & Parameters \\ 
\hline 
$\cilap$ & $\rho=0.8$ \\ 
\hline 
\abslap & $\rho=0.85$ \\ 
\hline 
\cexp & $\rho=0.5, b=0.65$ \\ 
\hline 
\cdbl &  $b=0.35$\\ 
\hline 
\mad & $\rho=0.5$ \\ 
\end{tabular} 

\end{center}

\subsection{New Algorithms}
Our first experiments sought to determine which of our own algorithms performed best. We compared them with respect to their average \moe while varying other parameters.

Figure \ref{fig:ours_moe_ex}a shows the \moe of our algorithms at $\eps = .01$ and $(\xmin, \xmax) = (-6, 6)$.  By a database size of roughly 1000, \cdbl is the clear winner, while for smaller databases \abslap was best.  (We see that in general \abslap and \cilap are almost identical in \moe, with \abslap consistently having an extremely slight edge.)  Figure \ref{fig:ours_moe} in Appendix \ref{sec:gphs} shows that these findings are generally consistent across choices of $\eps$.\footnote{There are some exceptions with extremely low $\eps$ values and very wide ranges, but our goal here is to find generally useful algorithms, not ones that are marginally less horrible in a weird corner case where everything is bad.}  As a rule of thumb, we find that \cdbl is the superior algorithm once $n > 100/\eps$.

Figure \ref{fig:ours_moe} also shows results as we vary the $[\xmin, \xmax]$ range in which the data is bounded.  Recall that this is a range given by the analyst, and all data outside the range is moved inside it (set equal to \xmin or \xmax) before the algorithm is applied.  Recall also that some of the prior work goes through great pains to ensure that this range is not needed or can be set very conservatively.  The Laplace noise algorithms, \abslap and \cilap, are sensitive to this range, since their noise is proportional to the range.  As the range increases their \moe increases significantly.

Significantly, we find that our quantile-based algorithms are not sensitive to this range as long as $n$ is not very low.  Widening the range only increases the probability that the quantile algorithms pick the most extreme bucket from which to sample.  This bucket increases in width proportionately to the range, but its utility decreases exponentially with $n$.  So for reasonably small $n$, this exponential utility decrease is great enough to make the probability of picking the outermost bucket vanishingly small even when the range is set extremely conservatively.   Conveniently, this effect also seems to show itself by the time $n$ is approximately $100/\eps$.  Therefore we know that in that regime \cdbl is an extremely precise algorithm that requires only extremely minimal knowledge of the analyst.  Figure \ref{fig:ours_moe_ex} shows this for the $\eps=0.1$ case.  Compared to Figure \ref{fig:ours_moe_ex}, the Laplace noise-based algorithms suffer greatly with a wider range, while the quantile-based algorithms are unaffected.  Figure \ref{fig:ours_moe} in Appendix \ref{sec:gphs} shows the same thing for other parameter settings.

\begin{figure}[!htb]
    \centering
    \includegraphics[width=\linewidth]{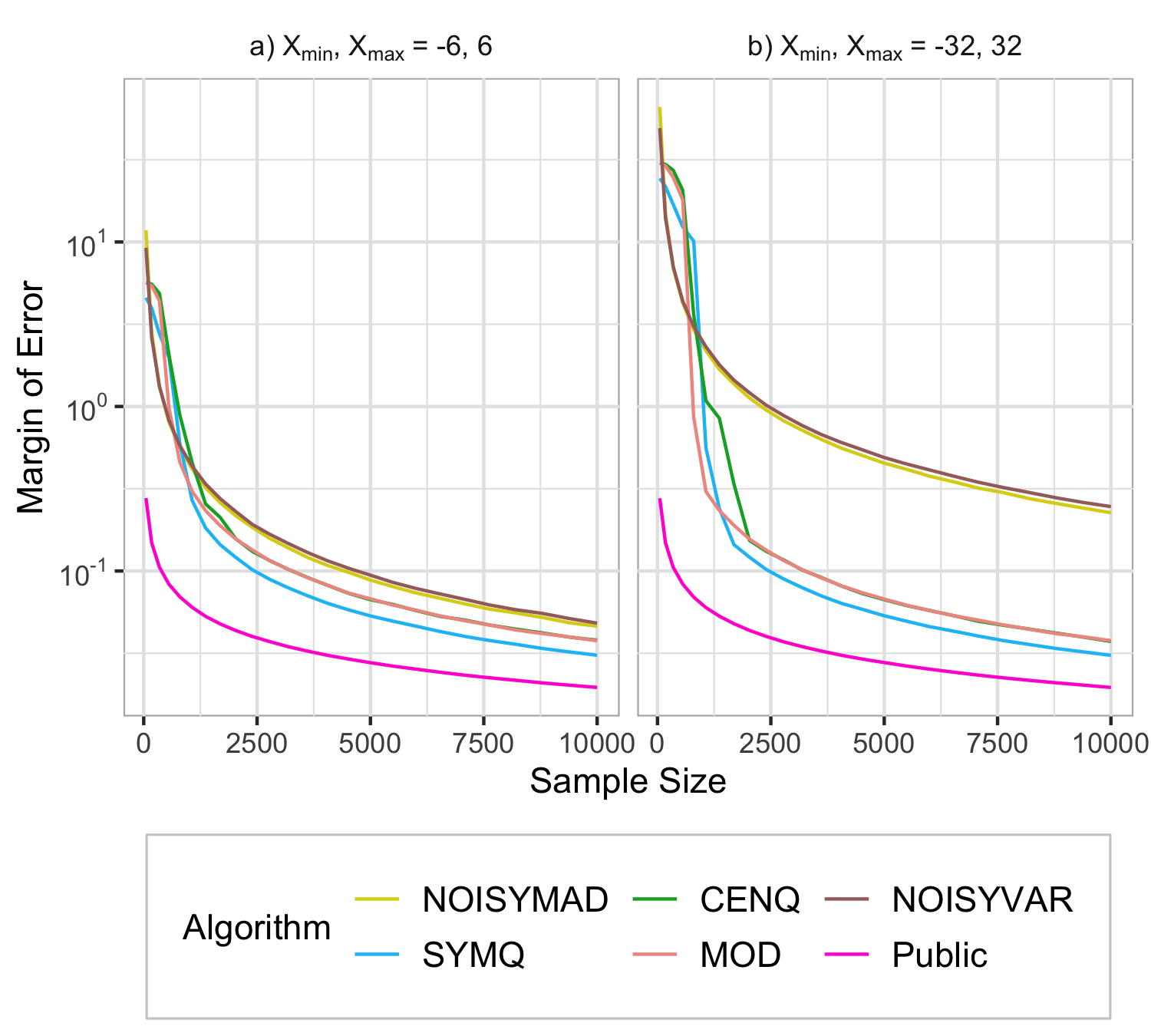}
    \caption{Comparison of our algorithms with respect to their average \moe at various database sizes at two different ranges with $\eps=.1$. The distribution of the underlying database was a standard normal, so an $\xmax$ of 32 corresponds to 32 standard deviations away from the mean.}
    \label{fig:ours_moe_ex}
\end{figure}

We also need to confirm the validity of the algorithms.  That is, we must check that their coverage is truly at least $1-\alpha$.  This must be done because the simulated distribution used to calculate the \moe is based on an estimated standard deviation for the underlying data.  If this estimate is bad enough, it could result in invalid confidence intervals.  To do this we run each test many times at many $\alpha$ values and report the percentage of the time that the true mean was included in the resulting interval.  Figure \ref{fig:ours_cov_range_e_ex} shows one example, and Figures \ref{fig:ours_cov_range_e} and \ref{fig:ours_cov_n_e} in Appendix \ref{sec:gphs} show results at a variety of parameter settings, always with similar (acceptable) results.

What we find is that coverage is generally acceptable.  (In these figures, ``acceptable'' means that the coverage plots never drop below the diagonal.)  The one possible exception is the \cexp (centered quantiles) algorithm.  In some of the experiments it had slightly low coverage for low $\eps$ values.  Because it is so slight, more work be required to determine for sure whether this was a real issue or just experimental noise.  However, \cexp is consistently outperformed by \cdbl anyway, so it doesn't seem worthy of further investigation.

\begin{figure}[!htb]
    \centering
    \includegraphics[width=\linewidth]{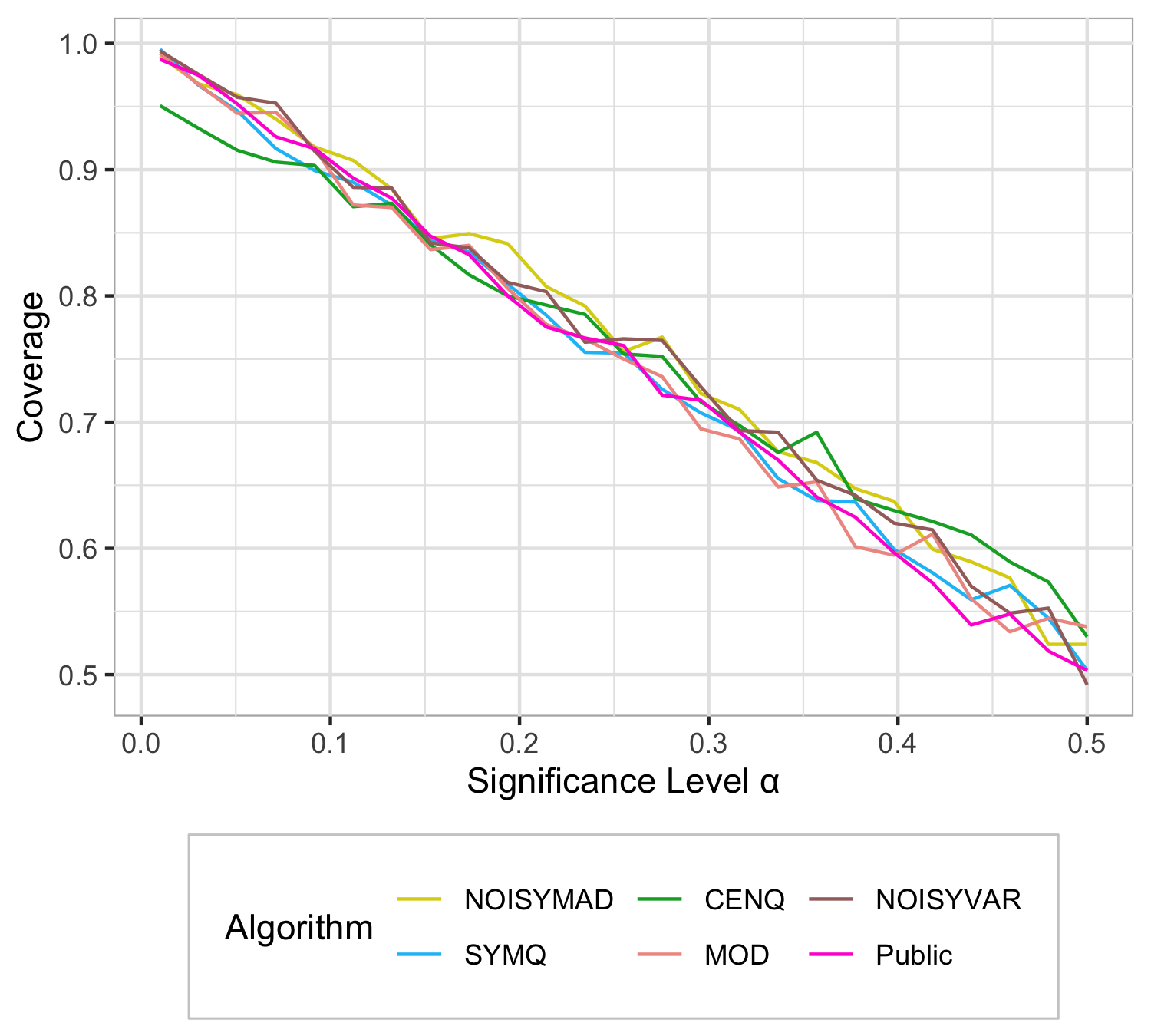}
    \caption{Comparison of our algorithms  with respect to coverage when the range is $[-6,6)$ and \eps$=0.1$. The underlying distribution of the databases was a standard normal distribution. $n=1000$}
    \label{fig:ours_cov_range_e_ex}
\end{figure}

We also check validity in a case where the analyst has not set \xmin and \xmax so well.  In particular, we imagine that the true mean is not centered in the $[\xmin, \xmax]$ range and that potentially one side of the range is close enough to the true mean to clip a significant number of values.  These results can be found in Figure \ref{fig:comp_non0center_cov} in Appendix \ref{sec:gphs}.  We find that \abslap performs fine.  We see no problem with \cdbl when $n>100/\eps$.  For lower values of $n$, \cdbl does display poor coverage, but at those values it is not the superior algorithm anyway.

\subsection{Comparison to Existing Work}
We then compare \cdbl (our best algorithm for $n>100/\eps$) and \abslap (our best algorithm for lower $n$) to the existing work described in Section \ref{sec:ext}.
We used the same experimental framework as we did before, and part of the results we received are compiled in Figure \ref{fig:comp_moe_ex}. The full results are shown in Figure \ref{fig:comp_moe} in Appendix \ref{sec:gphs}.

\begin{figure}[!htb]
    \centering
    \includegraphics[width=\linewidth]{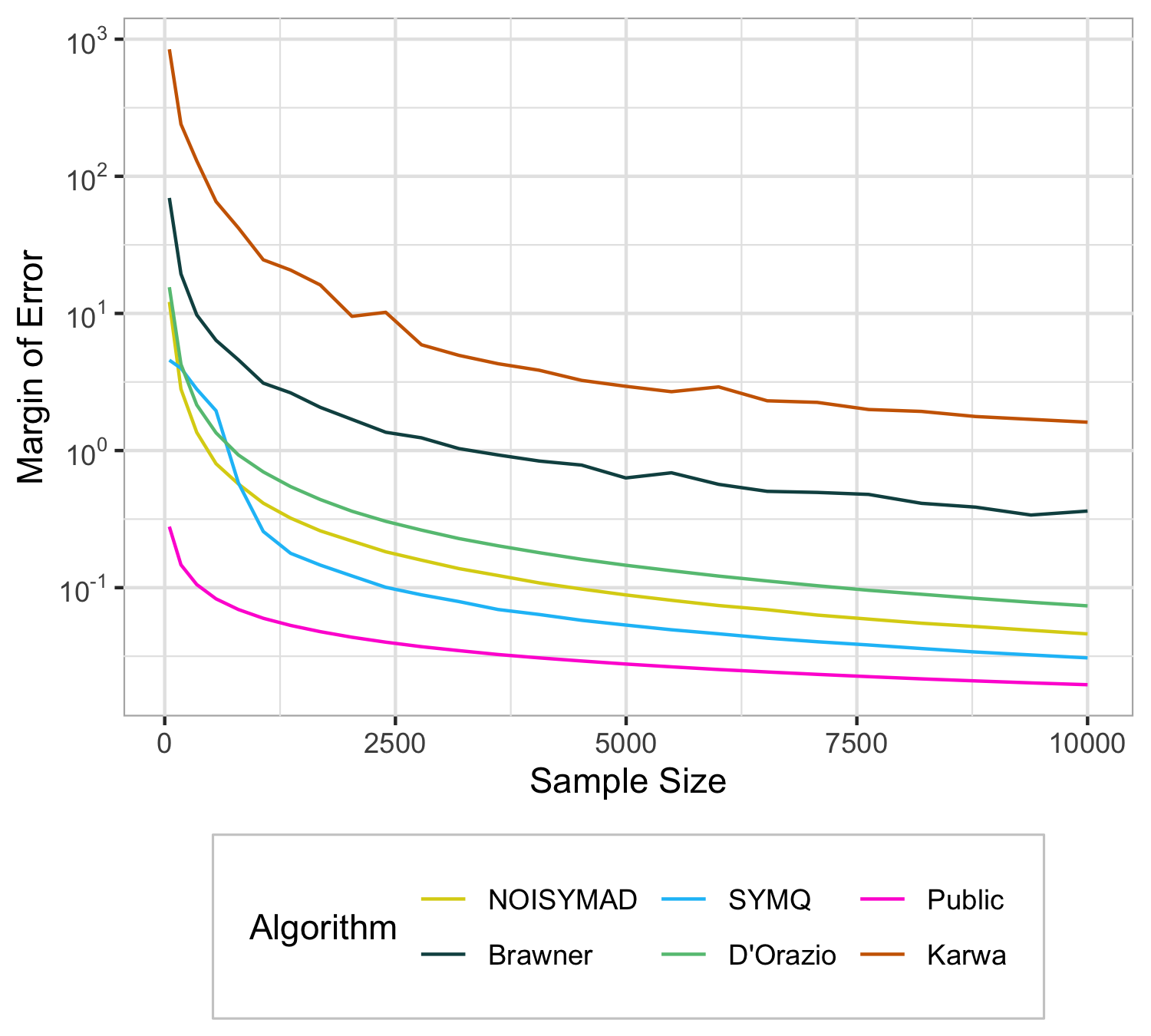}
    \caption{Comparison of our best algorithms to those prior works with respect to their average \moe when the range is $[-6,6)$ and \eps$=0.1$. The underlying distribution of the databases was a standard normal distribution.}
    \label{fig:comp_moe_ex}
\end{figure}

We varied $\eps$, $n$, and the data range, and we found that in all cases the lowest \moe algorithm was one of ours.  In most cases, both \cdbl and \abslap outperformed all prior work. The closest was the work of D'Orazio, Honaker, and King \cite{DOrazio2015}.

We also compared coverage between the various algorithms. We again estimated each algorithm's coverage through simulation, at many different values of $\alpha$. The experimental results of our two best algorithms and the previously existing algorithms are compiled in Figure \ref{fig:comp_cov_ex}. The same as before, our algorithms have coverage of roughly $1-\alpha$, which is ideal.  The Karwa and Vadhan \cite{Vadhan2017} and Brawner and Honaker \cite{Honaker2018} algorithms have extremely broad coverage, being much more conservative than is necessary.  This probably comes from the fact that they use loose upper bounds to set the \moe, rather than precise simulation.
Figure \ref{fig:comp_cov} in Appendix \ref{sec:gphs} contains the same experiment run at a variety of range and $\eps$ values, all with similar results.

\begin{figure}[!htb]
    \centering
    \includegraphics[width=\linewidth]{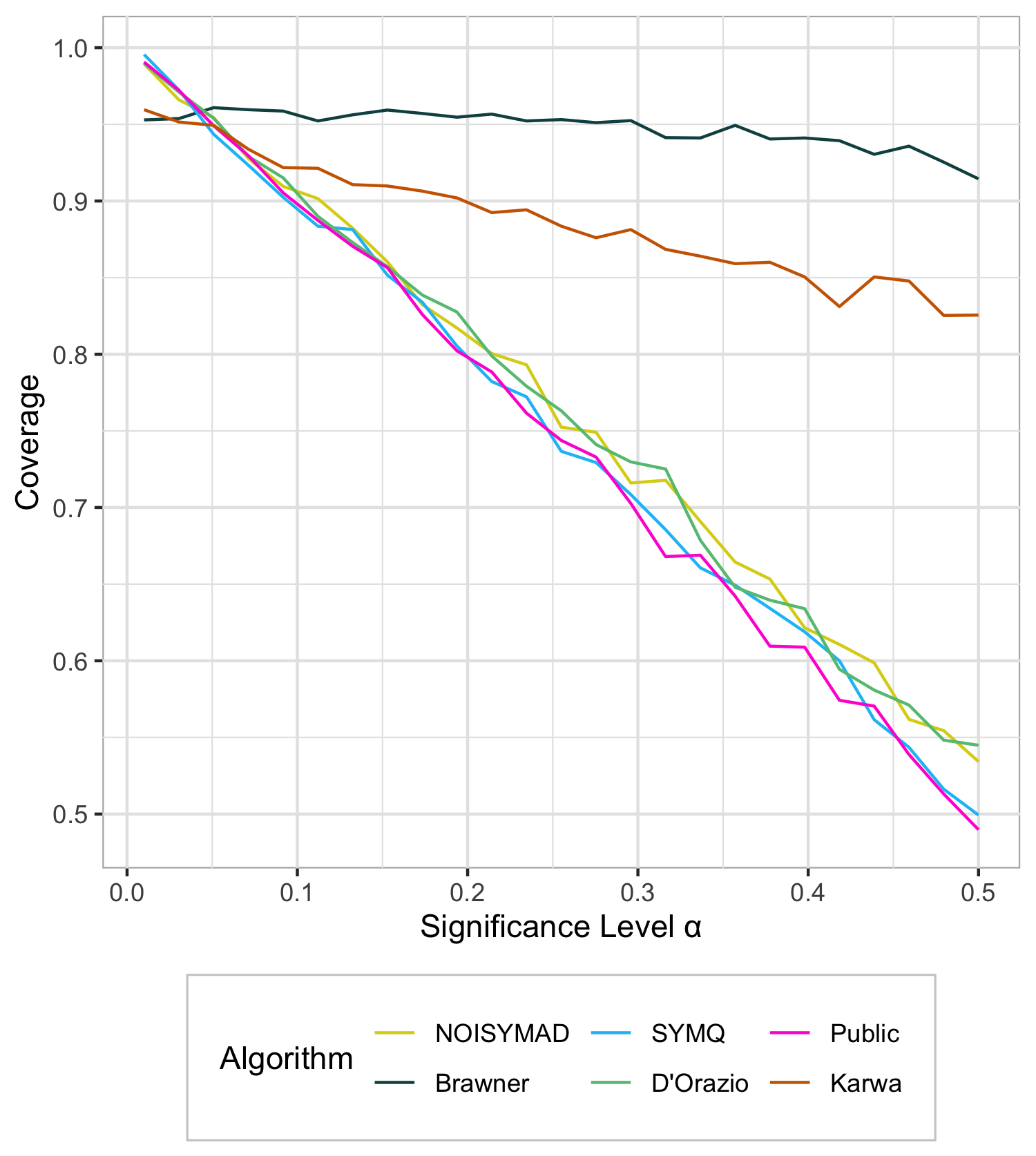}
    \caption{Comparison of our best algorithms to those prior works with respect to coverage when the range is $[-6,6)$ and \eps$=0.1$. The underlying distribution of the databases was a standard normal distribution.}
    \label{fig:comp_cov_ex}
\end{figure}

\section{Discussion}\label{sec:dsc}

We have given two practical algorithms for producing confidence intervals for the population mean of normally distributed data.  As long as $n$ is somewhat large (at least $100\eps$), \cdbl performs very well with little drawback.  It allows the analyst to set the $[\xmin, \xmax]$ window extremely conservatively, and the validity is resilient even to a small mistake on the part of the analyst that clips a portion of the data.  When $n$ is smaller, \abslap is superior (though in this case the analyst must set $[\xmin, \xmax]$ a bit more carefully to avoid adding too much noise).  It is worth taking a moment to think about why the quantile-based method is so useful here.  

Figure \ref{fig:comp_range_cov_01} shows the distribution of center estimates for normally distributed data.  The most accurate (highest-peak) estimate is of course the sample mean.  We also show two private estimates, each at both $\eps=0.1$ and $\eps=0.25$.  One is the standard Laplace mechanism sample mean estimate.  The other is our exponential mechanism-based quantile algorithm, \expmed, used to find the median.  In both cases the quantile algorithm gives a better estimate.

\begin{figure}[!htb]
    \centering
    \includegraphics[width=\linewidth]{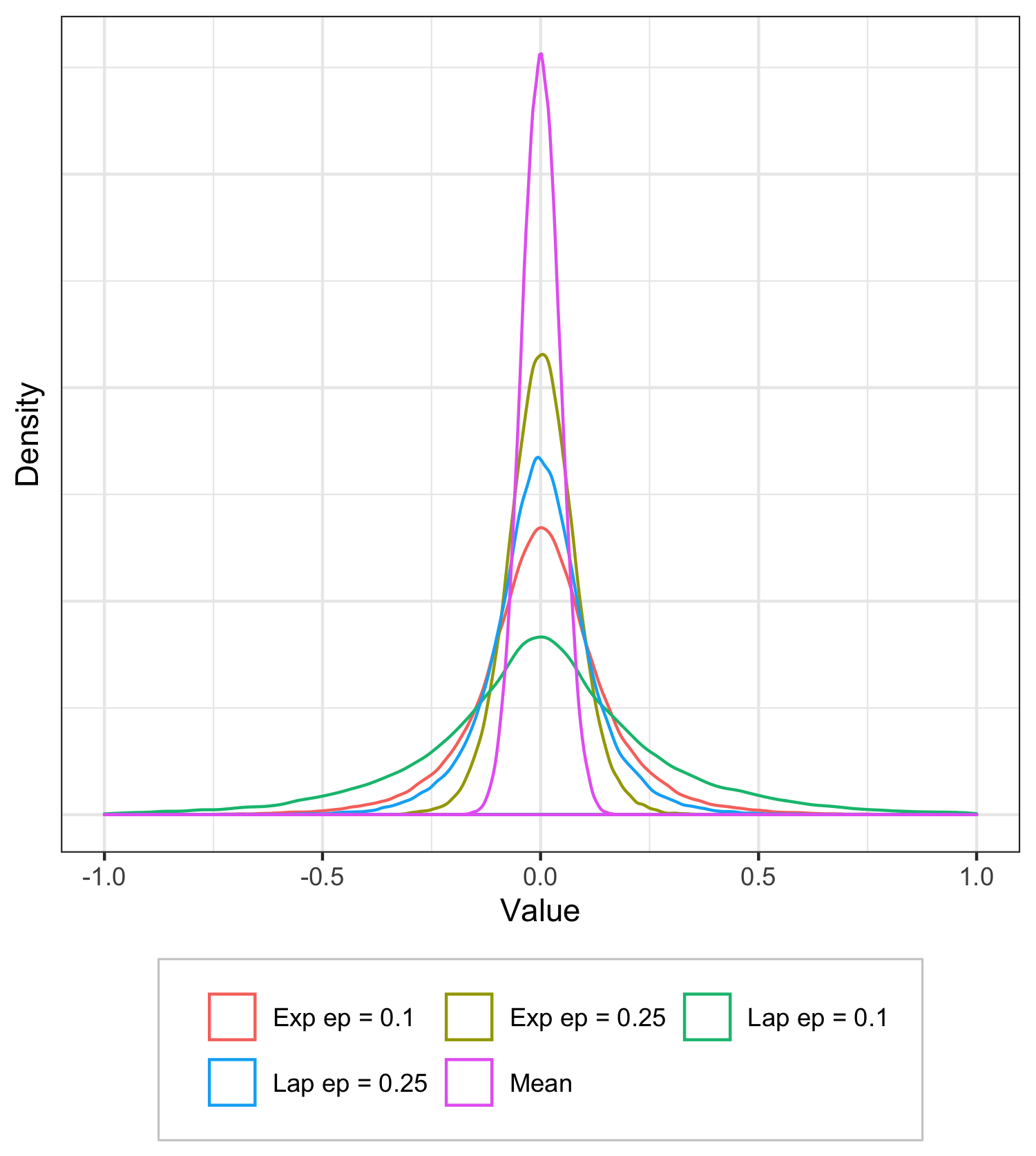}
    \caption{Comparison of the distribution of a sample mean, a private sample mean with  Laplace noise, and our exponential mechanism median estimate. $n = 500$, $\eps = 0.1$ or $0.25$}
    \label{fig:comp_range_cov_01}
\end{figure}

We think this is likely to be part of a larger lesson.  The laplacian mechanism is thought of as the ``best'' algorithm for estimating means, and in some worst-case sense this is true.  Medians are thought to be harder to calculate, since their worst-case sensitivity is high, and more complex algorithms are needed.  But when data is ``nice'' or somewhat predictable, the median algorithm is a better estimate of the mean than the ``best'' mean algorithm.  And much of statistics assume some simple properties of the data distribution anyway, so these assumptions are not additional limitations.  We think private algorithms should often take into account the likely data distribution.  Even when that distribution is not known, using a small portion of the budget for an initial check of possible special cases might often be worthwhile.

In particular, when statistical analysis assumes something about the data, the private version of the analysis should be evaluated under the same assumption.\footnote{As stated before, this is how \textit{utility} should be measured.  Privacy usually should still be a worst-case notion.}  Given this, it makes sense to design special purpose queries that will be more accurate on particular types of data.

In this work, we've given highly practical algorithms for private confidence intervals.  It is noteworthy that our best algorithms are quantile-based, relying on an algorithm that is excellent in this setting but that is \textit{not} an ideal way to measure the center of a set of arbitrary data.

\section{Conclusion}

Our work attempting to find more powerful algorithms for constructing private confidence intervals of the mean of normal data has lead us to several algorithms that perform better at the task than the previously existing work in this area. These algorithms show it is possible in practice to generate small confidence intervals while also providing strong privacy gaurantees. Our best algorithm, the symmetric quantiles algorithm \cdbl, approaches the public confidence interval quite rapidly for moderately sized $n$ and $\eps$. Much of its good performance is due to the exponential quantile algorithm, \expmed, which provides us with more accurate estimates of the mean and standard deviation of a sample than its laplacian noise counterparts. The insensitivity of \expmed to the database range, $\xmin$ and $\xmax$ also allows our confidence interval algorithms that rely on it to give small intervals despite even the most conservative ranges.

\clearpage
\bibliography{sources}

\clearpage
\appendix
\section{Unbiased median estimation}\label{sec:pfs}

\begin{theorem}\label{unbiased}
For a sample drawn from a symmetric distribution with symmetric bounds, $\xmin,\xmax$, \expmed is an unbiased estimator of the median.
\end{theorem}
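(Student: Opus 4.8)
The plan is to show that, over the joint randomness of the i.i.d.\ sample $X_1,\dots,X_n$ and of the mechanism, the expected output of \expmed at rank $m=\lfloor (n+1)/2\rfloor$ equals the population median. After translating coordinates so that the distribution's center of symmetry is $0$, the hypothesis of symmetric bounds reads $\xmin=-\xmax$ and the median is $0$. Since every output lies in $[\xmin,\xmax)$ it is bounded, so its mean exists; hence it suffices to prove that the output \emph{distribution} is symmetric about $0$. I would deduce this from two facts: (i) the sample $X=(X_1,\dots,X_n)$ and its negation $-X$ have the same law, since the $X_i$ are independent and each is symmetric about $0$; and (ii) running \expmed on the negated sample produces exactly the negation, in law, of running it on the original sample. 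Chaining (i) and (ii) shows $\expmed(X)$ has the same law as $-\expmed(X)$, which forces the output distribution to be symmetric about $0$ and its mean to be $0$.

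The substance lies in (ii), which I would establish as a coupling lemma. Fix a realized sample with sorted augmentation $\db_0=\xmin\le\db_1\le\cdots\le\db_n\le\db_{n+1}=\xmax$. Sorting the negated sample merely reverses the order, so its augmented order statistics are $\db'_i=-\db_{n+1-i}$ for $i=0,\dots,n+1$; this is exactly the point at which $\xmin=-\xmax$ is used, to keep the augmented endpoints at $\xmin$ and $\xmax$. Consequently the bins transform as $B'_i=[-\db_{n+1-i},-\db_{n-i})=-B_{n-i}$ up to two endpoints of measure zero, so in particular $|B'_i|=|B_{n-i}|$. I would then verify the utility identity $U_m(\db',B'_i)=U_m(\db,B_{n-i})$: writing $U_m(\db,B_j)=\min(j+1-m,\,m-j)$, this is a ``tent'' in $j$ whose flat maximum sits on $\{m-1,m\}$, and the reindexing $j\mapsto n-j$ sends that flat maximum to $\{n-m,\,n-m+1\}$, which coincides with $\{m-1,m\}$ exactly when $2m=n+1$. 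Combining the width and utility identities, the exponential-mechanism weights satisfy $p'_i=p_{n-i}$, and a uniform draw on $B'_i=-B_{n-i}$ is the negation of a uniform draw on $B_{n-i}$; matching bin $i$ of the negated run with bin $n-i$ of the original run then shows $\expmed(-\db)$ and $-\expmed(\db)$ have the same law.

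Assembling the pieces: by (i), $\expmed(X)$ and $\expmed(-X)$ have the same law; by (ii), the latter equals $-\expmed(X)$ in law; so $\expmed(X)$ is symmetric about $0$, its expectation is $0$, and $0$ is the population median. Equivalently, if one prefers not to invoke the law of the whole sample, (ii) already gives $E[\,\expmed(x)\,]+E[\,\expmed(-x)\,]=0$ for each fixed sample $x$, and one integrates this against the symmetric sample density.

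I expect the main obstacle to be the parity of $n$, which enters precisely through the utility identity: the reflection $j\mapsto n-j$ fixes the target rank only when $m=(n+1)/2$, i.e.\ for odd $n$. For even $n$ the rank $m=\lfloor (n+1)/2\rfloor=n/2$ is the lower of the two middle ranks, the flat top of the tent is displaced by one under reflection, and the same computation yields only that the estimates at ranks $n/2$ and $n/2+1$ are reflections of one another about the median; \expmed at rank $n/2$ is then slightly biased, and one recovers an exactly unbiased estimator only by averaging the two middle-rank estimates. I would therefore state the theorem for odd $n$, or adopt that averaging convention for even $n$, and flag this explicitly --- the remainder of the argument is insensitive to it. (Minor points to dispatch along the way: the loop in \expmed should run from $i=0$ so that $p_0$ is defined, and for a discrete distribution one notes that empty bins carry zero weight and cause no trouble.)
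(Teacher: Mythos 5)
Your proof is correct (for odd $n$, with the parity caveat you rightly flag) but it takes a genuinely different route from the paper's. The paper fixes $\db$ to be the \emph{expected database} --- the vector of expected order statistics --- assumes its entries are symmetric about $\db_m$, and shows that the output pdf of \expmed \emph{on that one database} is symmetric about $\db_m$ by matching the utilities of $\db_m-\delta$ and $\db_m+\delta$ bin by bin. That argument is local and short, but it implicitly exchanges the mechanism with the expectation over the sample (the mechanism is nonlinear in $\db$, so $E[\expmed(X)]$ is not $\expmed(E[X])$), and so it is best read as a heuristic. Your argument instead works with the joint law of the sample and the mechanism: $X$ and $-X$ are equidistributed, and the negation map sends augmented order statistics to $\db'_i=-\db_{n+1-i}$, bins to $B'_i=-B_{n-i}$, and preserves both bin widths and (when $2m=n+1$) the utility tent, so $\expmed(-X)\stackrel{d}{=}-\expmed(X)$ and the output law is symmetric about the population median. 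This is fully rigorous where it applies, makes explicit exactly where the symmetric-bounds hypothesis enters (keeping the augmented endpoints fixed under negation), and isolates the even-$n$ failure more sharply than the paper does: you locate it in the non-invariance of the flat top $\{m-1,m\}$ under $j\mapsto n-j$, whereas the paper handles even $n$ by arguing the outermost bin's mass decays like $\exp(-\eps n/4)$ so the pdf is only asymptotically symmetric. Both treatments agree that exact unbiasedness holds only for odd $n$; your suggestion to average the two middle ranks for even $n$ is a genuine (small) improvement over the paper, as is your observation that the loop in the algorithm should start at $i=0$ so that $p_0$ is defined.
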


\noindent Let \db be the expected value of a database drawn from a symmetric distribution with symmetric bounds.\\
Let $\db_1,...,\db_n \in [\xmin, \xmax)$ be database \db sorted where $\db_1 \leq ... \leq \db_n$.\\
Allow for notational convenience that $\db_0 = \xmin$ and $\db_{n + 1} = \xmax$.\\
Let bins $B_0,...,B_n \subseteq [\xmin, \xmax) \sucht B_i = [\db_i, \db_{i + 1})$.\\
Let $\db_m$ be the median of database \db. If $n$ is odd, $m=\frac{n+1}{2}$, if $n$ is even, $m=\frac{n}{2}$.\\
Let utility function $U_m:\dbu\times[\xmin,\xmax)\rightarrow\mathbb{R}$ st $\forall i\in[0,n + 1],\forall r\in B_i$,
$$U_m(\db, r) = U_m(\db, B_i) = \begin{cases}
  i + 1 - m &\text{if } i < m \\
  m - i &\text{if } i \geq m \\
\end{cases}.$$

\begin{lemma}
Given $\delta\in(0,\db_m)\sucht\forall i,\,\db_m\pm\delta\neq \db_i$,
$$\db_m-\delta\in B_{m-i}\implies \db_m+\delta\in B_{m-i+1}.$$
\end{lemma}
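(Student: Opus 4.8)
Here the plan is to read the statement as an instance of the reflection symmetry of the configuration $\db$ about its median. Since $\db$ is the coordinate‑wise expectation of the order statistics of a symmetric distribution with symmetric bounds, we have $\db_i + \db_{n+1-i} = \xmin + \xmax$ for every $i\in\{0,\dots,n+1\}$. Take $n$ odd, so $m=\tfrac{n+1}{2}$ and $n+1-m=m$; the relation at $i=m$ gives $\db_m=\tfrac{\xmin+\xmax}{2}$, i.e. the median is exactly the center and the bounds are symmetric about $\db_m$. (Under this normalization the hypothesis ``$\delta\in(0,\db_m)$'' should be read as ``$\db_m\pm\delta$ both lie in $(\xmin,\xmax)$'', equivalently $\delta<\db_m-\xmin=\xmax-\db_m$; the two coincide when $\xmin=0$.) The relation rewrites as $\db_j=2\db_m-\db_{2m-j}$, so the reflection $\phi(x):=2\db_m-x$ sends the breakpoint $\db_j$ to $\db_{2m-j}$ and, crucially, sends $\db_m-\delta$ to $\db_m+\delta$. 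Thus the lemma reduces to identifying the $\phi$‑image of the bin containing $\db_m-\delta$.

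The core step is then index bookkeeping. Suppose $\db_m-\delta\in B_{m-i}=[\db_{m-i},\db_{m-i+1})$; note $i\geq 1$ since $\db_m-\delta<\db_m$ forces the bin index to be at most $m-1$. Using $\phi(\db_{m-i})=\db_{m+i}$ and $\phi(\db_{m-i+1})=\db_{m+i-1}$, the image of this bin is the half‑open interval $(\db_{m+i-1},\db_{m+i}]$, so $\db_m+\delta\in(\db_{m+i-1},\db_{m+i}]$. The side hypothesis $\db_m\pm\delta\notin\{\db_0,\dots,\db_{n+1}\}$ is exactly what cleans this up: it guarantees $\db_m-\delta$ lies strictly inside its bin (no ``on a breakpoint'' case arises), and it removes the right endpoint $\db_{m+i}$ from the image, promoting $(\db_{m+i-1},\db_{m+i}]$ to $[\db_{m+i-1},\db_{m+i})=B_{m+i-1}$, the mirror of $B_{m-i}$ across the median. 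Stated combinatorially instead: $\db_m-\delta\in B_{m-i}$ says $(\db_m-\delta,\db_m)$ contains exactly the $i-1$ values $\db_{m-i+1},\dots,\db_{m-1}$; reflecting, $(\db_m,\db_m+\delta)$ contains their mirrors $\db_{m+1},\dots,\db_{m+i-1}$, which places $\db_m+\delta$ in $B_{m+i-1}$. Either route is routine once $\phi$ is in hand.

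Downstream I would feed this into Theorem~\ref{unbiased}: the lemma exhibits the bin pairing $B_{m-i}\leftrightarrow B_{m+i-1}$; paired bins have equal width (as $\phi$ is an isometry, $|B_{m+i-1}|=|B_{m-i}|$) and equal utility ($U_m(\db,B_{m-i})=U_m(\db,B_{m+i-1})=1-i$), so \expmed assigns them equal sampling probability, its output distribution is symmetric about $\db_m$, and therefore its expected output is $\db_m$.

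The part that needs the most care is the $n$ even case, $m=\tfrac{n}{2}$: then $\db_m+\db_{m+1}=\xmin+\xmax$ gives $\db_m=2c-\db_{m+1}<c$ for the center $c:=\tfrac{\xmin+\xmax}{2}$, so reflection based at $\db_m$ no longer preserves the configuration. The right symmetry there is reflection about $c$, which lies in the interior of $B_m=[\db_m,\db_{m+1})$ and pairs $B_{m-i}\leftrightarrow B_{m+i}$ with $B_m$ self‑paired; one then restates the lemma with $\db_m\pm\delta$ replaced by $c\pm\delta'$. I would treat this as a short parallel case, noting that for even $n$ the sample median is $\tfrac{\db_m+\db_{m+1}}{2}=c$, not $\db_m$, so this $c$‑centered version is the one actually used later.
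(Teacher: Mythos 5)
Your proof is correct and takes essentially the same route as the paper's: the key ingredient is the symmetry relation $\db_m-\db_{m-i}=\db_{m+i}-\db_m$ (which the paper simply asserts from "the distribution of \db is symmetric" and you derive from the expected order statistics), and your reflection map $\phi(x)=2\db_m-x$ is just a repackaging of the paper's inequality chain $\db_{m-i}<\db_m-\delta<\db_{m-i+1}\implies\db_{m+i-1}<\db_m+\delta<\db_{m+i}$. Note that, like the paper's own proof, you actually establish $\db_m+\delta\in B_{m+i-1}$ rather than the $B_{m-i+1}$ printed in the statement (an index typo in the lemma --- the application in Theorem~\ref{unbiased} indeed uses $B_{m+i-1}$), and your side observations --- that the hypothesis $\delta\in(0,\db_m)$ is vacuous when the symmetric bounds are centered at zero, and that for even $n$ the configuration is symmetric about the center of $B_m$ rather than about $\db_m$, a case the paper only treats asymptotically --- are accurate criticisms of the paper's write-up rather than gaps in your argument.
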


\begin{proof}
If $n$ is odd, $\db_m$ is the true median.\\
If $n$ is even, $\db_m$ is not the median, but as $n\rightarrow\infty$, it quickly approaches the true median.\\
Since the distribution of \db is symmetric around the median, $\db_m-\db_{m-i}=\db_{m+i}-\db_m$.\\
Given $\delta\in(0,\db_m)\sucht\forall i,\,\db_m\pm\delta\neq \db_i$,\\
let $i\in[1,m] \sucht \db_m-\delta\in B_{m-i}$.
\begin{flalign*}
\text{So }&\db_{m-i}<\db_m-\delta<\db_{m-i+1}&\\
\implies &\db_m-\db_{m-i+1}<\delta<\db_m-\db_{m-i}\\
\implies &\db_{m+i-1}-\db_m<\delta<\db_{m+i}-\db_m\\
\implies &\db_{m+i-1}<\db_m+\delta<\db_{m+i}\\
\end{flalign*}
So $\db_m+\delta\in B_{m+i-1}$.
\end{proof}

\begin{proof}[Proof of Thm. \ref{unbiased}]
It is sufficient to show that the pdf of the output of $\expmed(\db,m,\eps)$ is symmetric around the median, $\db_m$.\\
Since \expmed follows the exponential mechanism, its pdf must be
$$\phi\cdot\exp\left(\frac{\eps}{2}U_m(\db,r)\right)$$
for some normalization constant $\phi$.\\
Given $\delta\in(0,\db_m)$, suppose without loss of generality that $\forall i,\,\db_m\pm\delta\neq \db_i$.\\
\begin{flalign*}
\text{So }U_m(\db,\db_m-\delta)&=U_m(\db,B_{m-i})\text{ for some }i\in[1,m]&\\
&=(m-i)+1-m\\
&=m-(m+i-1)\\
&=U_m(\db,B_{m+i-1})\\
&=U_m(\db,\db_m+\delta).
\end{flalign*}
\begin{align*}
\text{So }&\phi\cdot\exp\left(\frac{\eps}{2}U_m(\db,\db_m-\delta)\right)\\
&=\phi\cdot\exp\left(\frac{\eps}{2}U_m(\db,\db_m+\delta)\right).
\end{align*}
If $n$ is odd, then the pdf of \expmed is symmetric in the range $\xmin,\xmax$ around the median.\\
If $n$ is even, then the pdf of \expmed is symmetric in the range $\xmin,\db_n$ around the median. The nonsymmetric part of the pdf, $[\db_n,\xmax)=B_n$ has probability density $|B_n|\phi\exp\left(\frac{\eps}{2}U_m(\db,B_n)\right)=|B_n|\phi\exp\left(\frac{\eps}{4}n\right)$. Since this density approaches 0 rapidly as $n\rightarrow\infty$, the pdf of \expmed is asymptotically symmetric around the median.\\

Since the pdf of \expmed is symmetric around the median, \expmed is an unbiased estimator of the median.
\end{proof}

\section{Biased quantile estimation}\label{sec:expq-bias}

As we show above, \expmed is unbiased for estimating the median of normally distributed data. However, this does not hold for other quantiles.\\
Let $X$ be a random database drawn i.i.d. from a normal distribution.\\
Let $q$ be the index of the quantile of interest in the database.\\
So the expected value of $\expmed(X, q, \eps)$, an estimate of the $(\frac{q}{n})^{th}$ quantile, is

$$E[\expmed(X, q, \eps)] = \int_{\xmin}^{\xmax}x \cdot f(x) dx$$

Where $f$ is the probability density function of $\expmed(X, q, \eps)$. Since $f$ is constant within any bin we can greatly simplify this expession as follows, denoting this constant probability as $p_i$ for the $i^{th}$ bin

\begin{align*}
E[\expmed(X, q, \eps)] &= \int_{\xmin}^{\xmax}x \cdot f(x) dx \\
&= \sum_{i = 0}^{n} \int_{X_i}^{X_{i+1}} x \cdot p_i dx \\
&= \sum_{i = 0}^{n} p_i \int_{X_i}^{X_{i+1}} x dx \\
&= \sum_{i = 0}^{n} p_i \frac{X_{i+1}^2 - X_i^2}{2} \\
&= \sum_{i = 0}^{n} p_i (X_{i + 1} - X_i) \frac{X_{i+1} + X_i}{2}
\end{align*}

Notice that $X_{i + 1} - X_i$ is the width of bin $i$ and $\frac{X_{i+1} + X_i}{2}$ is the midpoint of that bin. Although this expression is concise, in practice it is very difficult to work with analylitcally since the distance between $X_i$ and $X_{i+1}$ will depend on complex order statistics. Given index $i$, the expected value of $X_i$ is

$$E[X_i] = \int_{-\infty}^{\infty} (i+1) {{n}\choose{i}} \varphi(x)^{i-1} (1 - \varphi(x))^{n-i} \Phi(x) dx$$

Where $\varphi(x)$ is the pdf of the normal distribution of $X$, and $\Phi(x)$ is the corresponding cdf. It is easy to see how unweildy this expression will make deriving analytical results for our estimator. For this reason, we choose to implement this function and plot the bias in our estimator empirically for various values of $n$ and $\eps$.

\begin{figure*}[h]
    \centering
    \includegraphics[width = 0.65\textwidth]{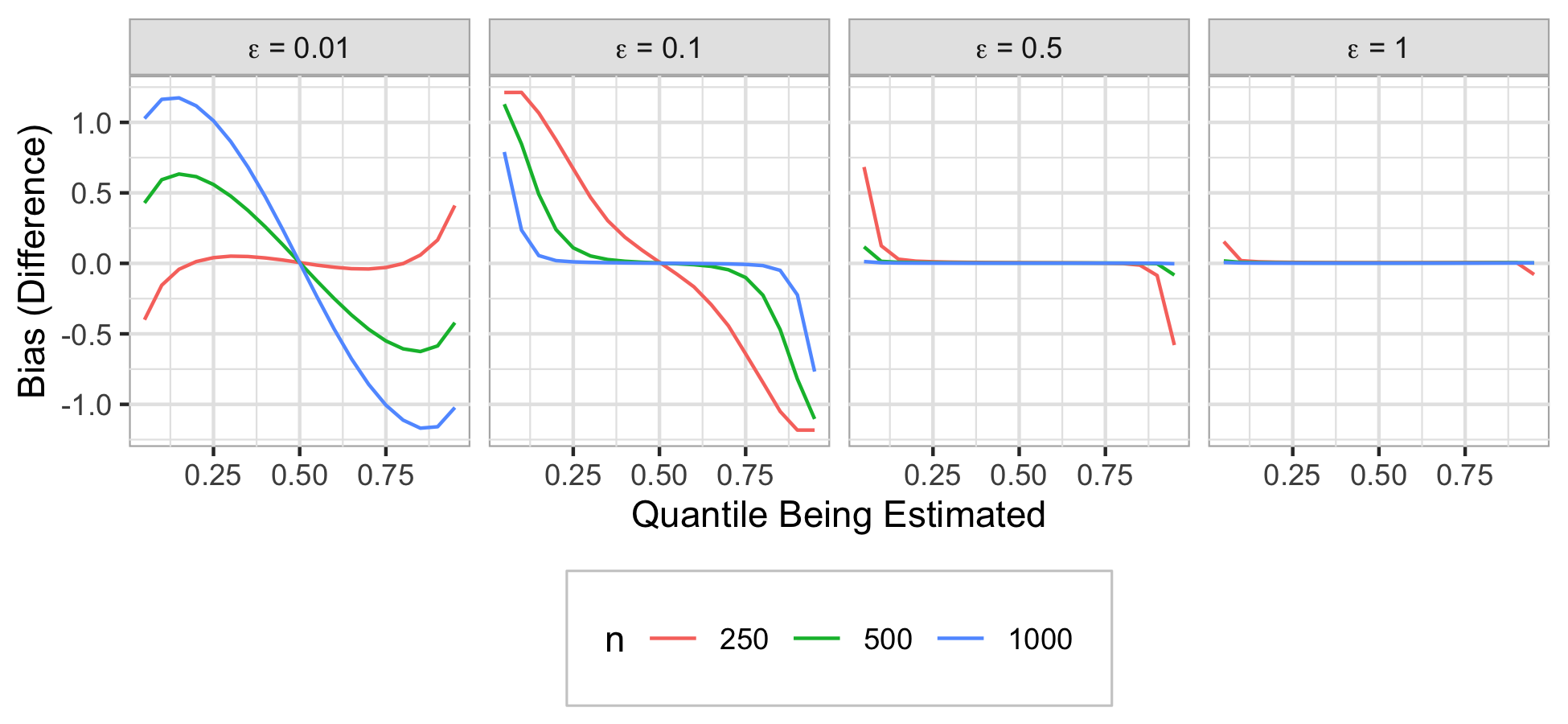} \par
    \caption{The bias (As difference between true and expected value) of our estimator given different values of n and epsilon}
    \label{fig:bias}
\end{figure*}

\begin{theorem}[Variance Sensitivity following Honaker]
$\Delta s^2 = \frac{(\xmax - \xmin)^2}{n}$.
\end{theorem}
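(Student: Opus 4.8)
The plan is to rewrite $s^2$ as a sum over unordered pairs, where the effect of changing a single entry is transparent, and then to both bound that effect and exhibit a neighboring pair that attains the bound. First I would record the identity
$$\sum_{i=1}^n (x_i - \bar x)^2 = \frac{1}{n}\sum_{1 \le i < j \le n}(x_i - x_j)^2,$$
which follows by expanding $\sum_{i,j}(x_i-x_j)^2 = 2n\sum_i x_i^2 - 2\left(\sum_i x_i\right)^2 = 2n\sum_i (x_i-\bar x)^2$ and observing that the diagonal terms of the double sum vanish, so the ordered sum is twice the sum over $i<j$. Hence $s^2(x) = \frac{1}{n(n-1)}\sum_{i<j}(x_i-x_j)^2$.

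Next, take neighbors $x, x'$ differing only in coordinate $k$. In the pairwise sum only the $n-1$ terms involving index $k$ change, so
$$s^2(x') - s^2(x) = \frac{1}{n(n-1)}\sum_{j \ne k}\bigl[(x'_k - x_j)^2 - (x_k - x_j)^2\bigr].$$
Because every entry of either database lies in $[\xmin,\xmax]$, each squared difference lies in $[0, (\xmax-\xmin)^2]$, so each bracketed term lies in $[-(\xmax-\xmin)^2, (\xmax-\xmin)^2]$. Summing the $n-1$ such terms gives $|s^2(x') - s^2(x)| \le \frac{(n-1)(\xmax-\xmin)^2}{n(n-1)} = \frac{(\xmax-\xmin)^2}{n}$, which is the $\le$ direction (and in particular reproves Lemma~\ref{lm:1}).

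For the matching lower bound I would exhibit neighbors attaining the value: let $x$ have all $n$ entries equal to $\xmin$, so $s^2(x)=0$, and let $x'$ be obtained by moving a single entry to $\xmax$. Then exactly $n-1$ pairwise terms equal $(\xmax-\xmin)^2$ and the rest vanish, so $s^2(x') = \frac{(n-1)(\xmax-\xmin)^2}{n(n-1)} = \frac{(\xmax-\xmin)^2}{n}$, and $|s^2(x')-s^2(x)|$ meets the bound. Combining the two directions yields $\Delta s^2 = \frac{(\xmax-\xmin)^2}{n}$.

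The argument is essentially routine; the only point needing a little care is the lower bound, i.e.\ confirming the bound is tight rather than merely an upper estimate — and, relatedly, that $\xmin$ and $\xmax$ are admissible entry values (true here, since data is clamped into the closed range $[\xmin,\xmax]$), so the maximum in the definition of $\Delta s^2$ is attained and not merely approached. Relative to Honaker's original computation for $\hat\sigma^2 = \frac{1}{n}\sum(x_i-\bar x)^2$, the only substantive change is bookkeeping the factor $\frac{n}{n-1}$ relating $s^2$ to $\hat\sigma^2$; I would present the pairwise-sum version directly rather than porting his bound and rescaling.
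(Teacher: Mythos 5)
Your proof is correct, but it takes a genuinely different route from the paper's. The paper isolates the $j$-th row algebraically, rewriting $s^2$ as a term depending only on the other rows plus $\frac{1}{n}(x_j^2 - 2x_j\bxmj)$ where $\bxmj$ is the mean of the remaining entries; it then treats $s^2$ as a quadratic in $x_j$, uses first and second partial derivatives to locate the minimizer at $x_j=\bxmj$, and concludes $\Delta s^2 = \max \frac{1}{n}(x_j-\bxmj)^2 = \frac{(\xmax-\xmin)^2}{n}$. You instead use the pairwise identity $s^2 = \frac{1}{n(n-1)}\sum_{i<j}(x_i-x_j)^2$ (which checks out), observe that only the $n-1$ pairs containing the changed index are affected, and bound each by $(\xmax-\xmin)^2$; tightness comes from an explicit witness (all entries at $\xmin$, one moved to $\xmax$). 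Your version is more elementary — no calculus, no algebraic bookkeeping with $\bxmj$ — and it is more careful about the lower bound: the paper's final maximization over $\db_j$ and $\bxmj$ implicitly assumes both can be pushed to opposite endpoints simultaneously (attainable, but unremarked), whereas you exhibit the extremal neighboring pair outright. What the paper's decomposition buys in exchange is an explicit formula for the change in $s^2$ as a function of how far the modified entry sits from the mean of the rest, which is the kind of structure one would want for local-sensitivity refinements; your pairwise form hides that but delivers the worst-case constant with less work.
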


\begin{proof}
Let $\db, \db'$ be two neighboring datasets which only differ at the jth row. For simplier notation, we let $\bxmj = \frac{1}{n-1}\sum_{i \neq j} \db_i$. Then the sample variance can be rewritten as

\begin{align*}
    s^2 & = \frac{1}{n-1}\sum_{i=1}^n(\bar{\db} - \db_i)^2\\
    & = \frac{n}{n-1}\bar{\db}^2 - 2\bar{\db}\frac{1}{n-1}\sum_{i=1}^n \db_i + \frac{1}{n-1}\sum_{i=1}^n \db_i^2 \\
    & = \frac{n}{n-1}\bar{\db}^2 - 2\bar{\db}\frac{n}{n-1}\bar{\db} + \frac{1}{n-1}\sum_{i=1}^n \db_i^2 \\
    & = \frac{1}{n-1}\sum_{i=1}^n \db_i^2 - \frac{n}{n-1}\bar{\db}^2 \\
    & = \frac{1}{n-1}\sum_{i=1}^n \db_i^2 - \frac{1}{n(n-1)} \left(\sum_{i=1}^n \db_i\right)^2 \\
    & = \frac{1}{n-1}\left({\db_j}^2 + \sum_{i \neq j} \db_i^2\right) - \\ & \quad \quad  \frac{1}{n(n-1)} \left(\db_j + \sum_{i \neq j} \db_i\right)^2 \\
    & = \frac{1}{n-1}\left({\db_j}^2 + \sum_{i \neq j} \db_i^2\right) - \\ & \quad \quad  \frac{1}{n(n-1)} \left[\db_j^2 + 2\db_j\sum_{i \neq j} \db_i + \left(\sum_{i \neq j} \db_i\right)^2\right] \\
    & = \frac{1}{n-1}\left[\sum_{i \neq j} \db_i^2 - \frac{1}{n} \left(\sum_{i \neq j} \db_i\right)^2\right] + \\ & \quad \quad  \frac{1}{n(n-1)}\left[(n-1)\db_j^2 - 2\db_j\sum_{i \neq j} \db_i\right] \\
    & = \left[\frac{1}{n-1}\sum_{i \neq j} \db_i^2 - \frac{1}{n(n-1)} \left(\sum_{i \neq j} \db_i\right)^2\right] + \\ & \quad \quad  \frac{n-1}{n(n-1)} \left({\db_j}^2 - 2\db_j\bxmj\right) \\
    & = \left[\frac{1}{n-1}\sum_{i \neq j} \db_i^2 - \frac{1}{n(n-1)} \left(\sum_{i \neq j} \db_i\right)^2 \right] + \\ & \quad \quad \frac{1}{n}\left({\db_j}^2 - 2\db_j\bxmj\right)
\end{align*}

Thus changing $\db_j$ would only affect the latter term $\frac{1}{n}\left({\db_j}^2 - 2\db_j\bxmj\right)$. The difference between variance of \db and $\db'$ would then be:

\begin{align*}
    s^2(\db) - s^2(\db') = \frac{1}{n}\left[{\db_j}^2 - {\db'_j}^2 - 2\left(\db_j - \db'_j \right)\bxmj\right]
\end{align*}

Now consider the partial derivative of the variance function with respect to the jth observation:

\begin{align*}
    \frac{\partial{s^2}}{\partial{\db_j}} & = \frac{2}{n}\left(\db_j - \bxmj\right)\\
    \frac{\partial^2{s^2}}{\partial{\db_j}^2} & = \frac{2}{n} > 0
\end{align*}

And thus among all possible values $\db_j$ can take, the variance $s^2$ is minimized when $\db_j = \bxmj$ and maximized when $\db_j$ is equal to the bound that's farthest from $\bxmj$.  \\
Thus the sensitivity bound of variance can be found to be:

\begin{align*}
    \Delta s^2 & = \max_{\db_j,\bxmj} \left[s^2(\db_j)-s^2(\db'_j)\right] \\
    & = \max_{\db_j,\bxmj} \frac{1}{n}\left[{\db_j}^2 - {\db'_j}^2 - 2\left(\db_j - \db'_j \right)\bxmj\right] \\
    & = \max_{\db_j,\bxmj} \left[\frac{1}{n}({\db_j}^2-{\bxmj}^2 - 2(\db_j - \bxmj)\bxmj)\right]\\
    & = \max_{\db_j,\bxmj} \left[\frac{1}{n}({\db_j}^2 - 2\db_j\bxmj + {\bxmj}^2)\right]\\
    & = \max_{\db_j, \bxmj}\left[\frac{1}{n}\left(\db_j - \bxmj\right)^2\right] \\
    & = \frac{1}{n}(\xmax - \xmin)^2
\end{align*}
\end{proof}

\begin{theorem}
Bounding Sensitivity of $f(\db) = \sum_{i=1}^n |\db_i - \bar \db|$.
\end{theorem}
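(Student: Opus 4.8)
The plan is to show $\Delta f \le 2\drange$ by a direct triangle-inequality argument, being careful that replacing a single row shifts the sample mean and therefore perturbs \emph{every} term of the sum, not just the one that was changed.

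First I would fix neighboring databases $\db, \db'$ that agree in all rows except row $j$, and write $\bar{\db}' = \bar{\db} + \delta$ where $\delta = (\db'_j - \db_j)/n$, so that $|\delta| \le \drange/n$ since $\db_j, \db'_j \in [\xmin, \xmax]$. Then I would split the difference $f(\db') - f(\db)$ into the contribution of the $n-1$ unchanged rows and the contribution of row $j$:
$$f(\db') - f(\db) = \sum_{i \neq j}\bigl(|\db_i - \bar{\db}'| - |\db_i - \bar{\db}|\bigr) + \bigl(|\db'_j - \bar{\db}'| - |\db_j - \bar{\db}|\bigr).$$
For the first sum the reverse triangle inequality gives $\bigl||\db_i - \bar{\db}'| - |\db_i - \bar{\db}|\bigr| \le |\bar{\db}' - \bar{\db}| = |\delta|$ for each $i$, so that sum is at most $(n-1)|\delta| \le (n-1)\drange/n < \drange$ in absolute value. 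For the row-$j$ term, I would note that $\bar{\db}$ and $\bar{\db}'$ are convex combinations of values in $[\xmin, \xmax]$ and hence lie in that interval, so both $|\db'_j - \bar{\db}'|$ and $|\db_j - \bar{\db}|$ lie in $[0, \drange]$; their difference is therefore at most $\drange$ in absolute value. Adding the two bounds yields $|f(\db') - f(\db)| \le (n-1)\drange/n + \drange < 2\drange$, which is the claimed sensitivity bound (and an extreme configuration — all points at one endpoint, with row $j$ moved to the other — shows $2\drange$ is asymptotically tight, so no smaller constant is available).

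The argument is essentially routine; the one place to be careful is the bookkeeping in the decomposition above — one must not treat the sum as if only the $j$-th term changed, since the mean itself moves by $\delta$ and every absolute-value term depends on the mean. Separating ``the center moved'' (controlled by $|\delta| \le \drange/n$, aggregated over the $n-1$ untouched terms) from ``the value and the center both moved'' (the single row-$j$ term, crudely bounded by the diameter $\drange$) is exactly what makes the constant come out to $2$. Dividing through by $n$ then gives the sensitivity $2\drange/n$ of the mean absolute deviation used to calibrate the Laplace noise in \abslap.
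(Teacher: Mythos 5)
Your proof is correct and follows essentially the same route as the paper's: both split $f(\db') - f(\db)$ into the $n-1$ unchanged rows (each perturbed only by the mean shift $|d|/n$, contributing $(n-1)|d|/n$ in total) plus the single changed row, arriving at the bound $2\drange$. The only cosmetic difference is that you bound the row-$j$ term by $\drange$ directly via $\bar{\db}, \bar{\db}' \in [\xmin,\xmax]$, whereas the paper bounds it by $|d| + |d|/n$ with another triangle inequality; both give the same conclusion, and your tightness example is a nice (correct) addition.
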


\begin{proof}
Consider an arbitrary change in a particular database value $\db_j$, to a new value $\db_j'$. Setting $d = \db_j' - \db_j$, we notice that the altered mean, $\bar \db' = \bar \db + \frac{d}{n}$. Now, we bound the sensitivity as follows,

\begin{align*}
\Delta f &= \max_{\db, \db'\text{ neighbors}}| f(\db) - f(\db')| \\
&= \max_{\db,\db'\text{ neighbors}}\bigg|\sum_{i \neq j} |\db_i - \bar \db| - \sum_{i \neq j} |\db_i - \bar \db'|\\
&\quad+ |\db_j - \bar \db| - |\db_j' - \bar \db'|\bigg| \\
&\leq \max_{\db,\db'\text{ neighbors}} \bigg( \bigg | \sum_{i \neq j} |\db_i - \bar \db| - \sum_{i \neq j} |\db_i - \bar \db'| \bigg |  \\
&\quad+ \left ||\db_j - \bar \db| - |\db_j' - \bar \db'| \right| \bigg)
\end{align*}

Considering the cases $\sum_{i \neq j} |\db_i - \bar \db| - \sum_{i \neq j}|\db_i - \bar \db'|$ and $|\db_j - \bar \db| - |\db_j' - \bar \db'|$ separately, we have
\begin{align*}
\sum_{i \neq j}& |\db_i - \bar \db| - \sum_{i \neq j}|\db_i - \bar \db'|\\
&=\sum_{i \neq j}|\db_i - \bar \db| - \sum_{i \neq j}|\db_i - (\bar \db + \frac{d}{n})| \\
&\leq  \sum_{i \neq j} |\db_i - \bar \db| - \sum_{i \neq j}|\db_i - \bar \db| + \sum_{i \neq j}| \frac{d}{n}| \\
&=| \frac{d(n-1)}{n}|,
\end{align*}

and

\begin{align*}
|\db_j - \bar \db| - &|\db_j' - \bar \db'|\\
&= |\db_j - \bar \db| - |\db_j+ d - (\bar \db + \frac{d}{n})| \\
&\leq |\db_j - \bar \db| - |\db_j - \bar \db| + |d| + |\frac{d}{n}| \\
&= |d| + |\frac{d}{n}|.
\end{align*}

So, putting the two together, we have
\begin{align*}
\Delta f &\leq\max_{d}\left(|d| + |\frac{d}{n}| + |\frac{d(n-1)}{n}|\right)\\
&=\left(1+\frac{1}{n}+\frac{n-1}{n}\right)\max_{d}|d|\\
&=2\max_{d}|d|.
\end{align*}
Since $|d| \leq (\xmax-\xmin)$, $\Delta f \leq 2(\xmax-\xmin)$.
\end{proof}

\section{Details on D'Orazio and Honaker's algorithm}\label{sec:ORA}

D'Orazio, Honaker, and King \cite{DOrazio2015} describe a method for calculating the standard error for a private estimate of the difference in means between two normally distributed random variables. This is a different case than we are considering.  However, the difference between two normally distributed random variables is itself normally distributed, so their method can be adopted easily enough to our case. Making this change does mean we had to slightly adapt some aspects of the algorithm, and for this reason we have reproduced the exact algorithm we used below.

A few changes and implementation details of note:
\begin{enumerate}
\item Although the algorithm used \expmed to estimate the first and third quartiles of the subsamlped estimates of standard error, they did not give any details about how the upper bound given to \expmed was determined. We felt that since this was a bound on standard error rather than standard deviation, it would need to depend on the size of the database provided. To get around this, we pass a bound for the actual standard deviation, $sd_{\text{max}}$ and divide this by $\sqrt{n}$ to place a bound on the true standard error, $se_{\text{max}}$. However, this bound is likely not conservative enough because the standard errors will follow their own sampling distribution based on the size of the subsets. To address this, we add two standard deviations of the standard error calculated on the $M$ subsets to $se_{\text{max}}$. The standard error of the standard deviation calculated on each subsetis is approximately $\frac{sd_{\text{max}}}{\sqrt{2 \frac{n}{M}}}$. After rearranging terms and scaling by $\frac{1}{\sqrt{n}}$, we get that the standard error on our estimates of standard error is $\frac{sd_{\text{max}}\cdot \sqrt{M}}{n\sqrt{2}}$. We then added three times this value to our bound on the true standard error to get the value we pass to \expmed.
\item The paper also did not give any discussion of how to select the number of subsets on which to calculate the standard deviation. Smith 2011 \cite{Smith2011} gave a heuristic of $\sqrt{n}$ as the number of subsets for a similar algorithm, but we found this to give too few groups. Instead, we empirically optimized the group size at various levels of $n$ and interpolated to approximate the best subsample size for a given database. We found that for all the sample sizes we tried, the best results occurred when the size of each subsample was $2$.

\item In the original paper, the Laplace noise added to the winzorized mean has scale parameter $\frac{|u-l|}{2\eps M}$. We believe this is an error and that the value should be $\frac{2|u-l|}{\eps M}$. This is because half of the $\eps$ budget is consumed by the quartile estimates and so the sensitivity $\frac{|u - l|}{M}$ should be divided by the remaining $\frac{\eps}{2}$.
\end{enumerate}

\begin{algorithm}[H]
  \caption{Construct D'Orazio Mean and SD, \sf ORA}\label{ORA}
  \begin{algorithmic}[1]
  \REQUIRE $\db, \eps, M,  x_{min}, x_{max}, sd_{max}$
  \STATE $\tilde{\db} \gets \bar{x} + L_1$, where $L_1 \sim \Lap \left(\frac{2(x_{max}-x_{min})}{\eps n}\right)$
  \STATE $se_{max} \gets \frac{sd_{max}}{\sqrt{n}}$
  \STATE Divide dataset $\db$ into M subsets $m_1, m_2, ..., m_M$.
  \FOR{$i \gets 1, M$}
      \STATE $s_i \gets \frac{sd(m_i)}{\sqrt{n}}$
  \ENDFOR
  \STATE $S \gets s_1, s_2, ...  s_M$
  \STATE $a \gets \expmed\left(S, \frac{1}{4}, \frac{\eps}{4}, 0, se_{max} + 2 \cdot \frac{sd_{\text{max}}\cdot \sqrt{M}}{\sqrt{2\cdot n^2}}\right)$
  \STATE $b \gets \expmed\left(S, \frac{3}{4}, \frac{\eps}{4}, 0, se_{max} + 2 \cdot \frac{sd_{\text{max}}\cdot \sqrt{M}}{\sqrt{2\cdot n^2}}\right)$
  \STATE $\mu \gets \frac{a+b}{2}$
  \STATE $IQR \gets|a-b|$
  \STATE $u \gets \mu + 2IQR$
  \STATE $l \gets \mu - 2IQR$
  \FOR{$i \gets 1, M$}
      \STATE $s_i \gets \begin{cases}
      u & \text{if } s_i > u \\
      s_i & \text{if }  l < s_i < u \\
      l & \text{if } s_i < l
      \end{cases}$
  \ENDFOR
  \STATE $w \gets \frac{1}{M}\sum^M_{i = 1}s_i$
  \STATE $\tilde{s} \gets w + L_2, L_2 \sim \Lap \left(\frac{2|u-l|}{\eps M}\right)$
  \ENSURE $\tilde{\db}, \tilde{s}$
  \end{algorithmic}
\end{algorithm}

\section{Detailed experimental results}\label{sec:gphs}

\begin{figure*}[h]
    \centering
    \includegraphics[width = \textwidth]{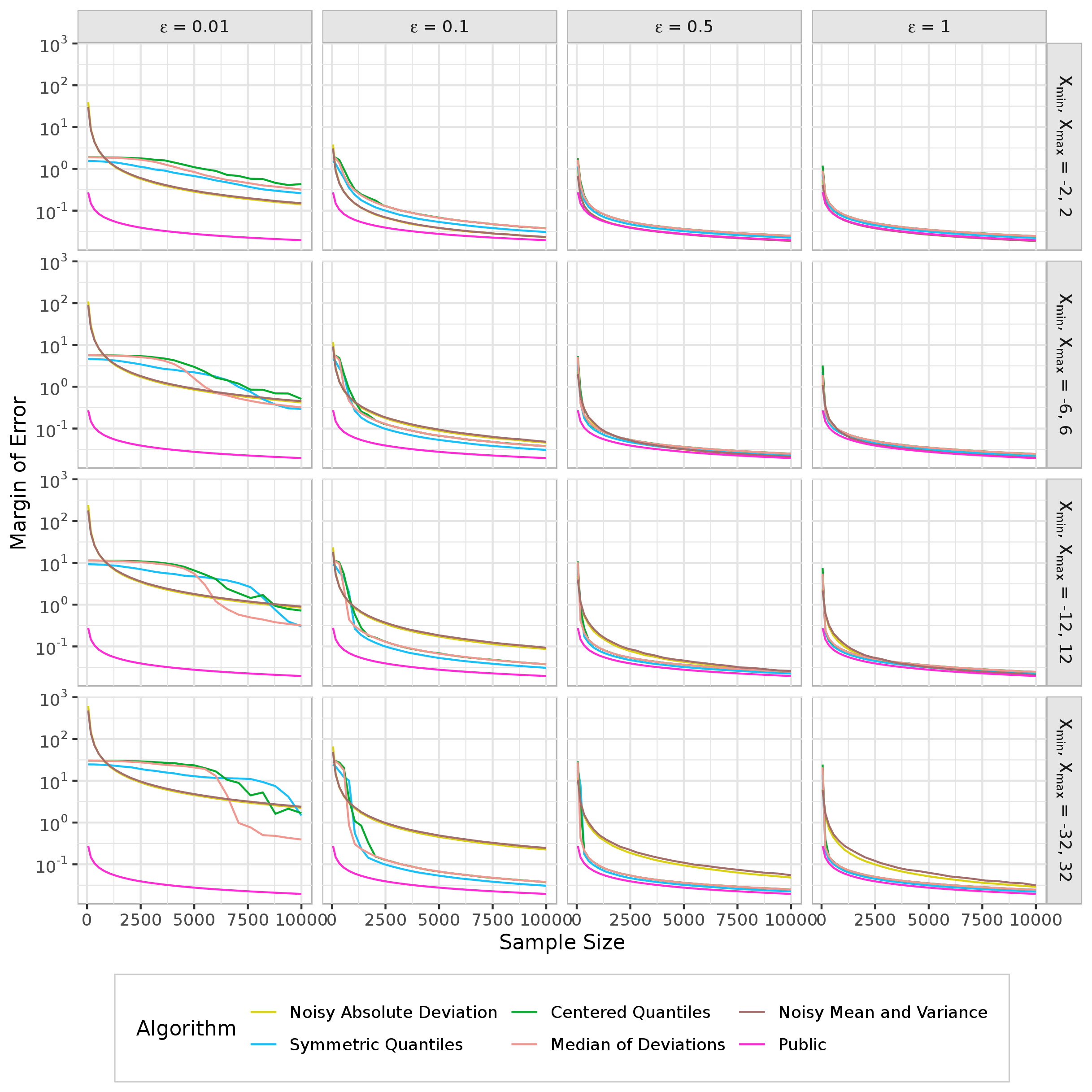} \par
    \caption{Comparison of our algorithms with respect to their average \moe at various database sizes, \eps\ values , and ranges. }
    \label{fig:ours_moe}
\end{figure*}

\begin{figure*}[h]
    \centering
    \includegraphics[width = \textwidth]{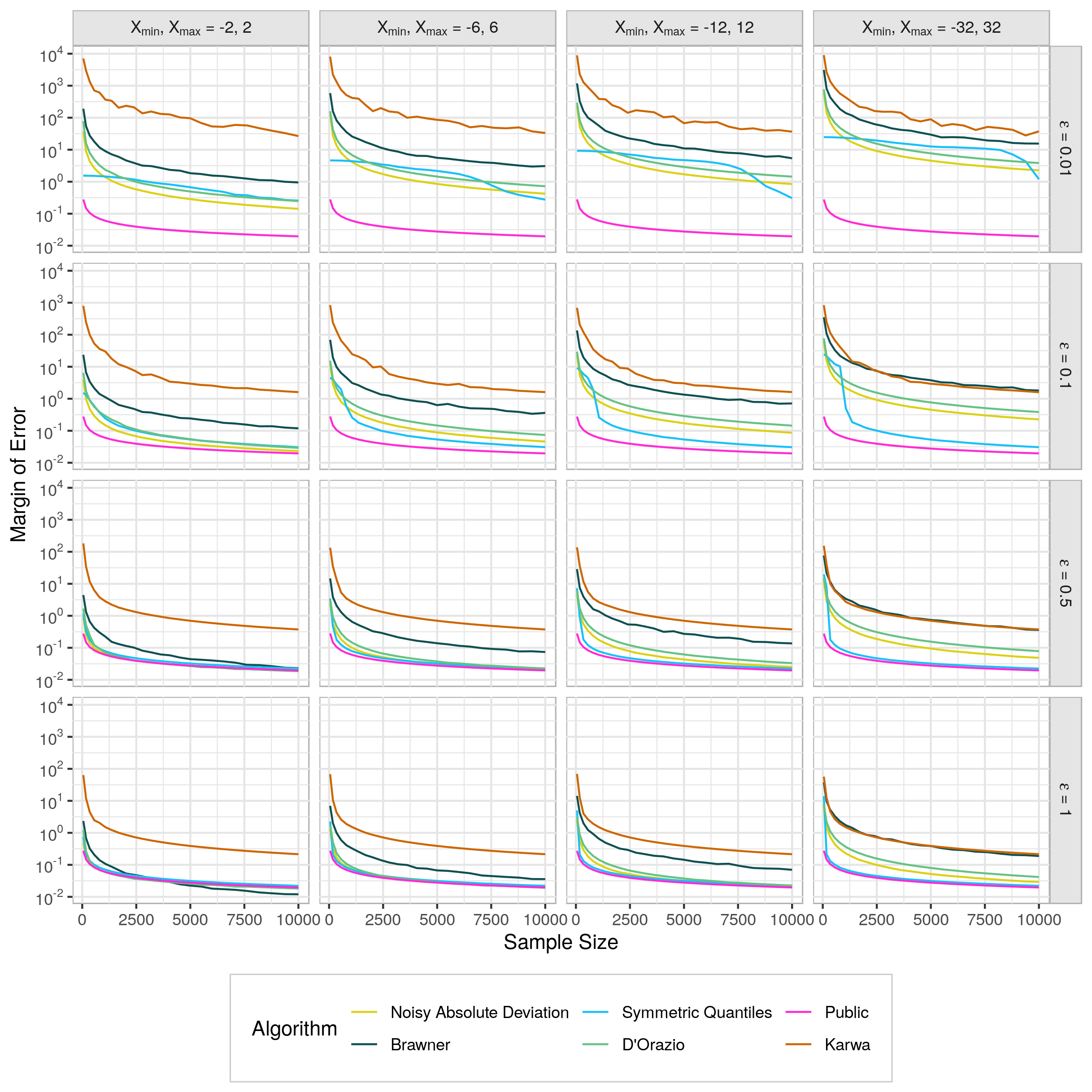} \par
    \caption{Our best algorithms compared to prior work with respect to their average \moe at various databases sizes, \eps \ values and ranges.}
    \label{fig:comp_moe}
\end{figure*}

\begin{figure*}[h]
   \centering
   \includegraphics[width = \textwidth]{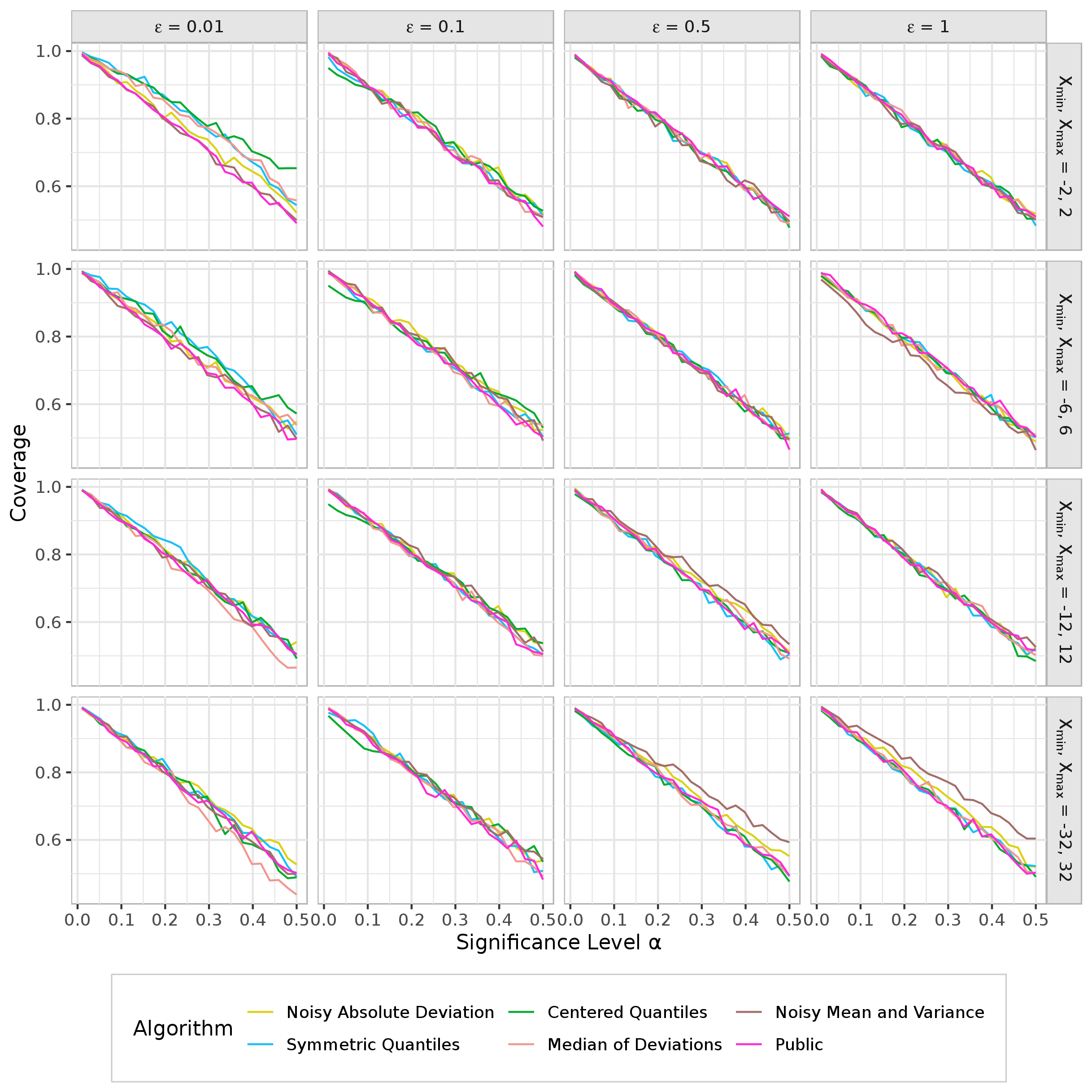} \par
   \caption{The coverage of our algorithms at various significance levels, \eps \ values and ranges.  $n=1000$}
   \label{fig:ours_cov_range_e}
\end{figure*}

\begin{figure*}[h]
   \centering
   \includegraphics[width = \textwidth]{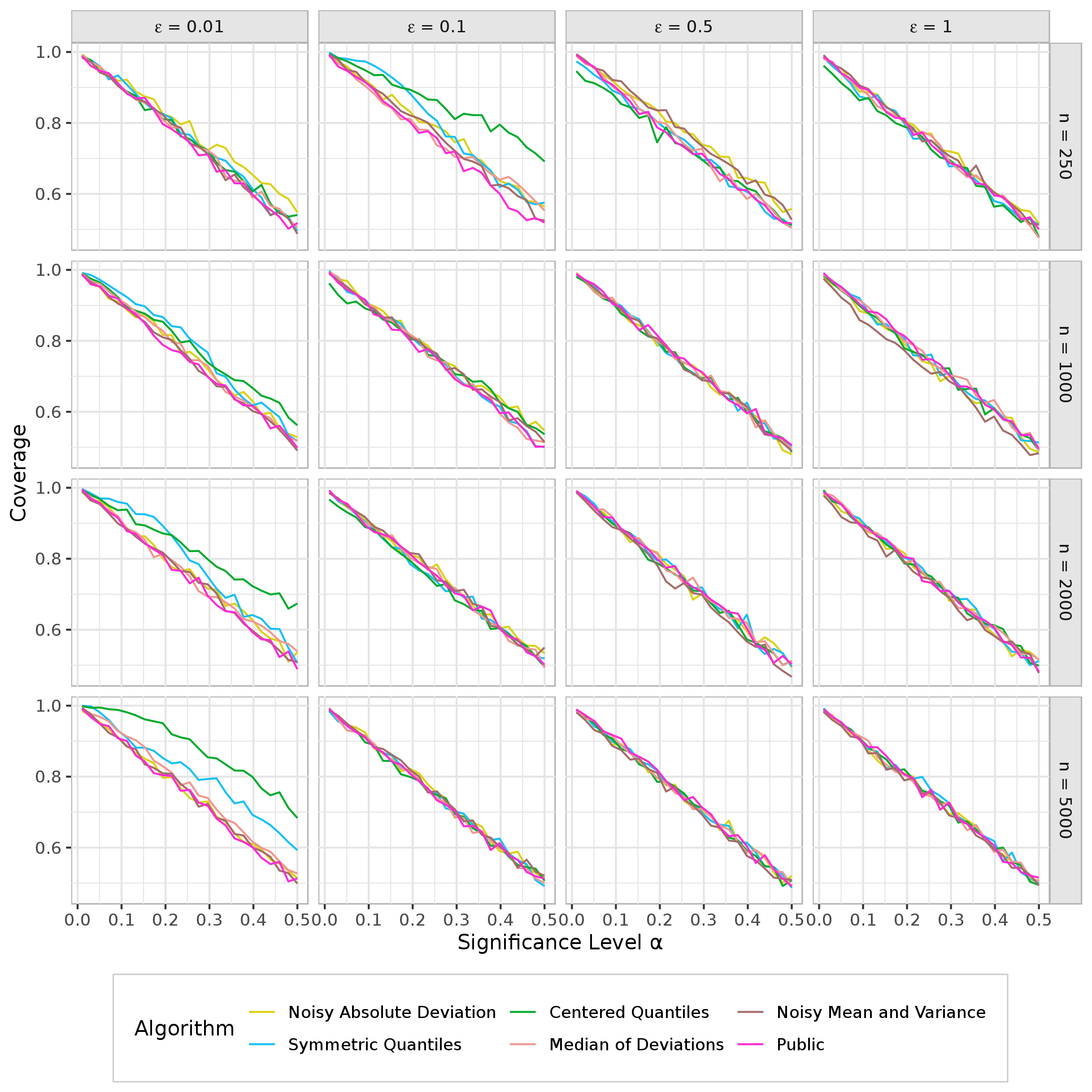} \par
   \caption{The coverage of our algorithms at various significance levels, \eps \ values and sample sizes. }
   \label{fig:ours_cov_n_e}
\end{figure*}

\begin{figure*}[h]
   \centering
   \includegraphics[width = \textwidth]{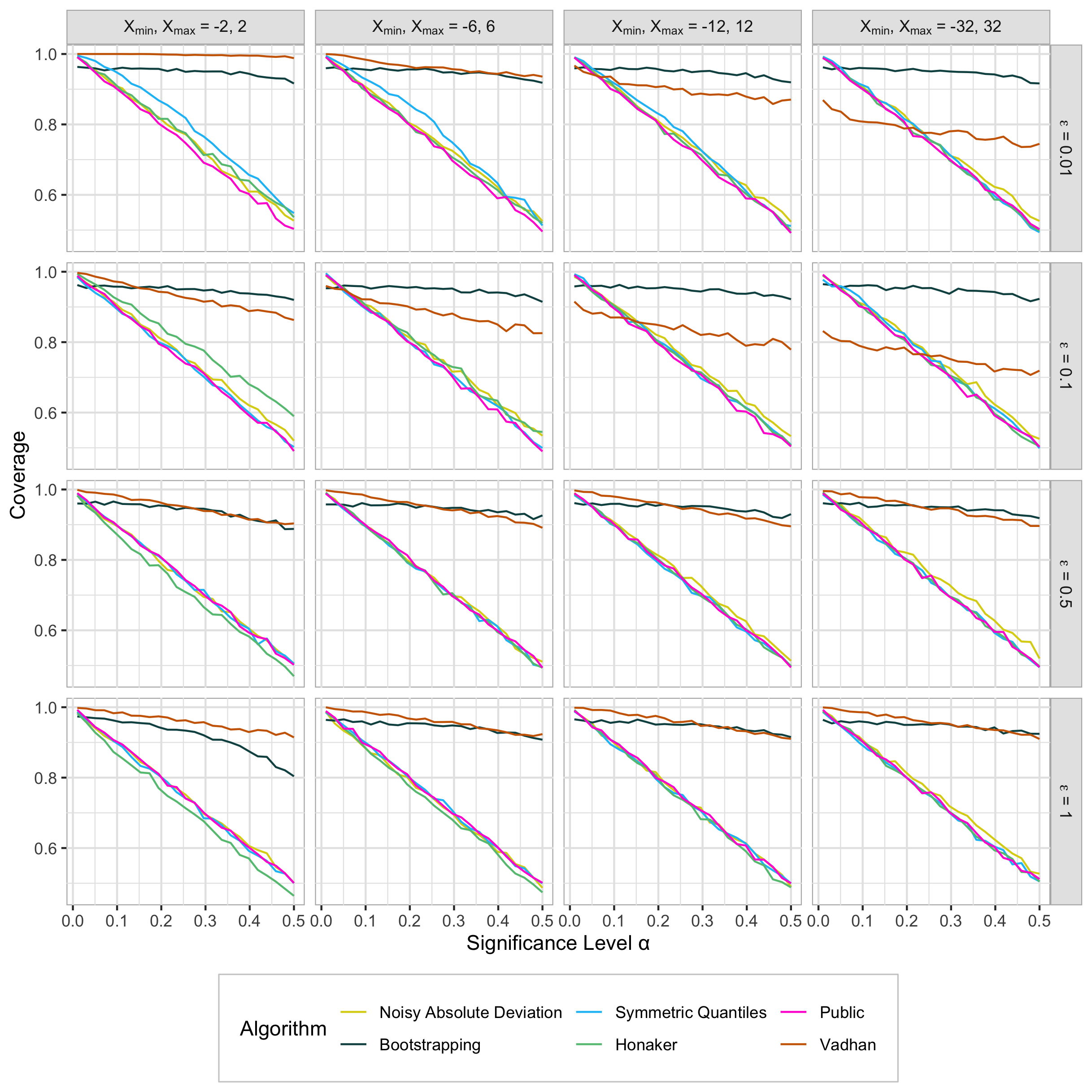} \par
   \caption{Our best algorithms compared to prior work with respect to their coverage at various significance level, \eps \ values and ranges.}
   \label{fig:comp_cov}
\end{figure*}

\begin{figure*}[h]
   \centering
   \includegraphics[width = \textwidth]{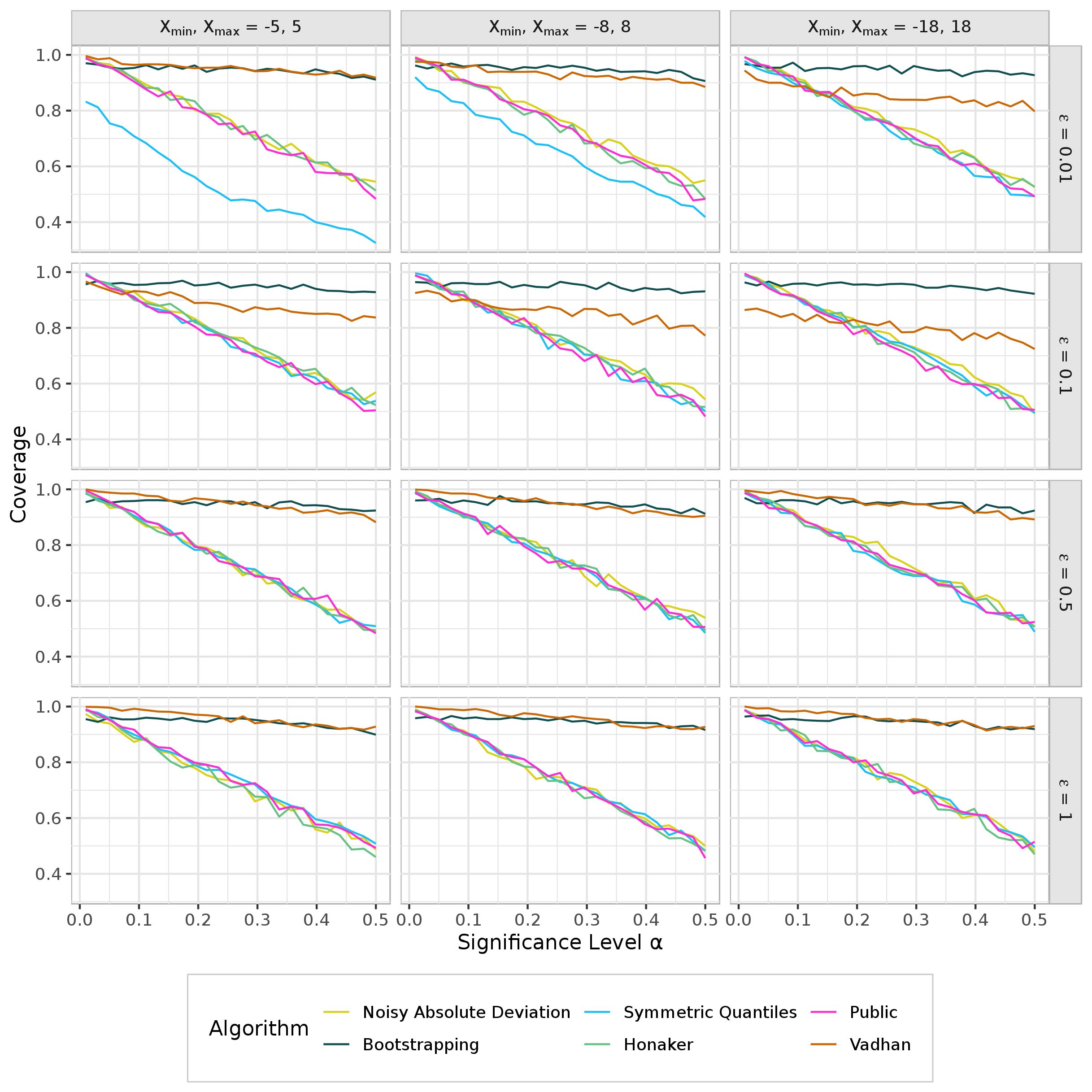} \par
   \caption{The coverage of our algorithms at various significance levels, \eps \ values and sample sizes. Here the true mean is 3 to test an off-center case.}
   \label{fig:comp_non0center_cov}
\end{figure*}

\clearpage
\section{Parameter Allocation and Optimization}\label{sec:alloc}

\begin{figure*}[h]
   \centering
   \includegraphics[width = 0.85\textwidth]{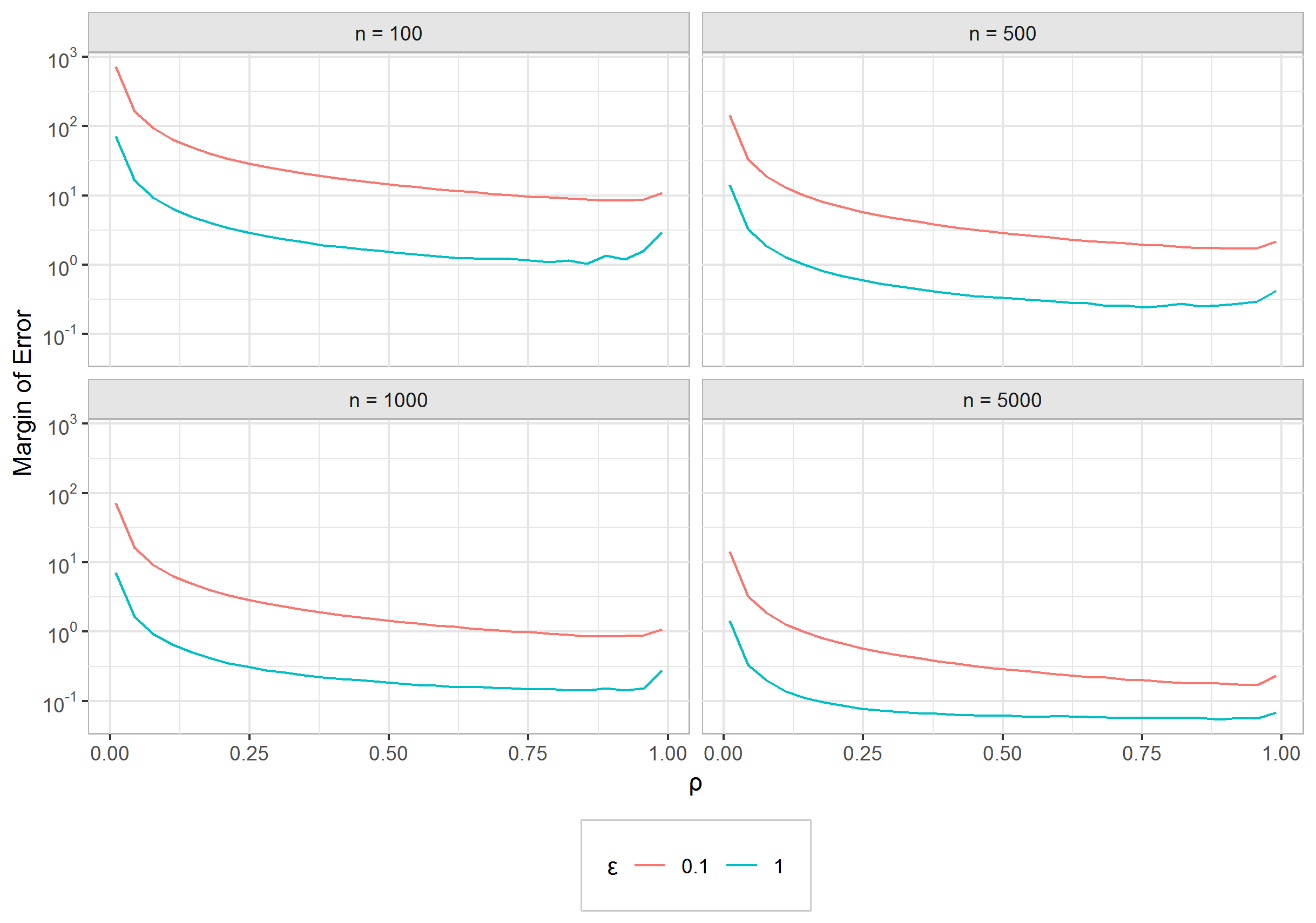} \par
   \caption{The performance of \cilap at various database sizes varying \eps-allocation. \cilap tends to perform the best when $\rho$ is in the range of approximately (0.75, 0.85), where the \moe of confidence intervals are minimized for all database sizes. We thus choose $\rho$ = 0.8 as the optimized \eps-allocation for \cilap.}
   \label{fig:lap_params}
\end{figure*}

\begin{figure*}[h]
   \centering
   \includegraphics[width = 0.85\textwidth]{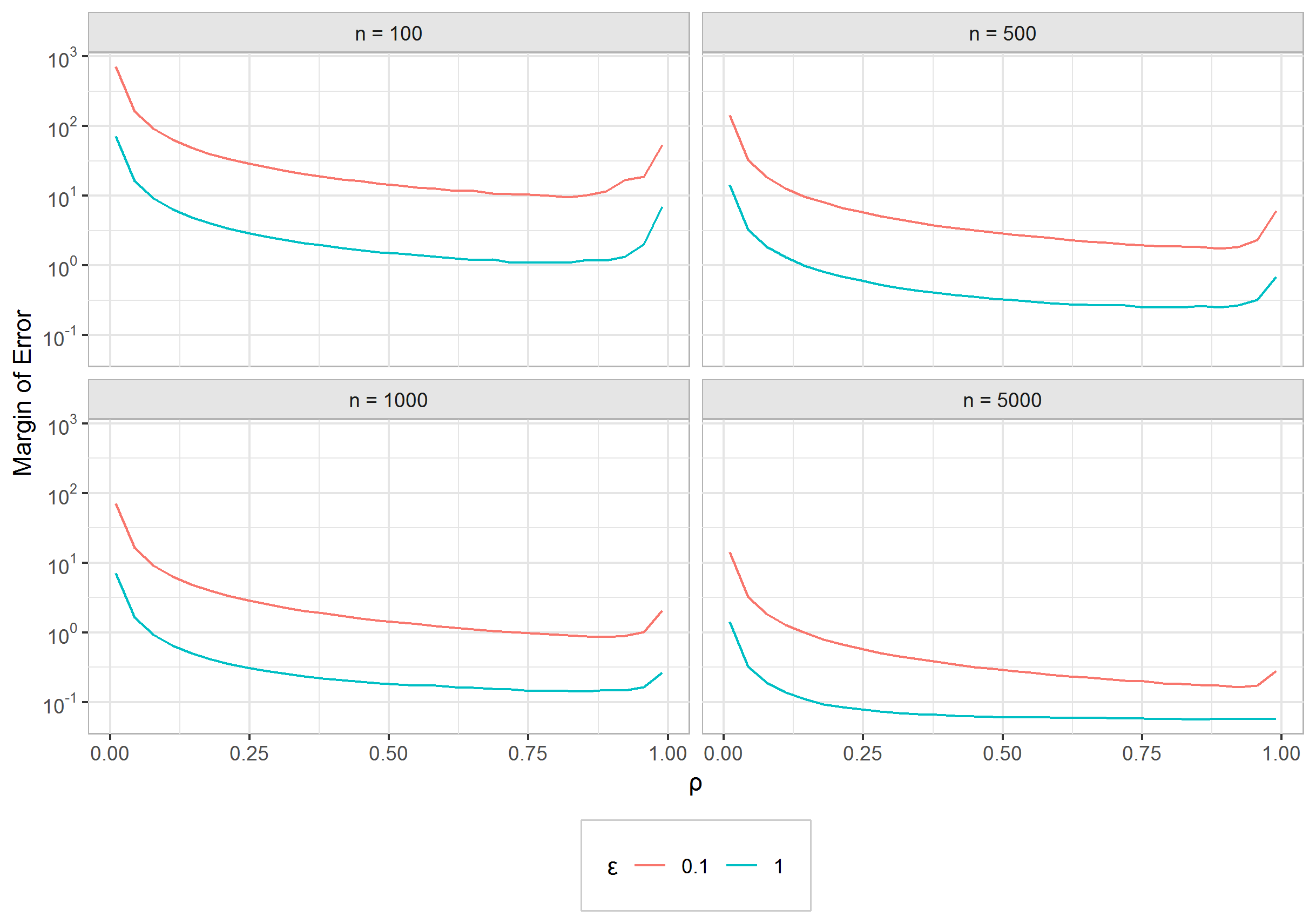} \par
   \caption{The performance of \abslap at various database sizes varying \eps-allocation. \abslap tends to perform the best when $\rho$ is in the range of approximately (0.75, 0.88), where the \moe of confidence intervals are minimized for all database sizes. We thus choose $\rho$ = 0.85 as the optimized \eps-allocation for \abslap.}
   \label{fig:abs_params}
\end{figure*}

\begin{figure*}[h]
   \centering
   \includegraphics[width = 0.85\textwidth]{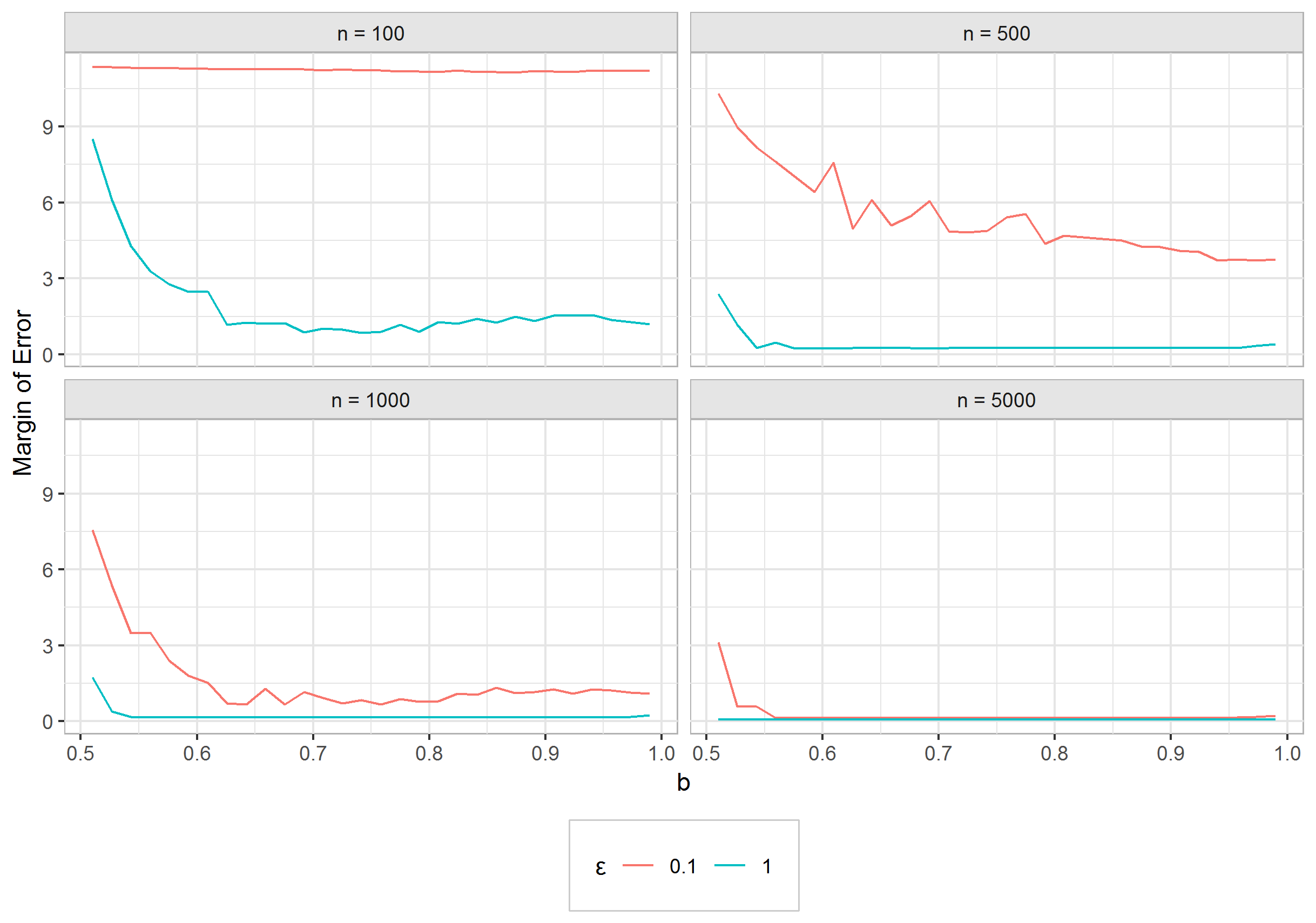} \par
   \caption{The performance of \cexp at various database sizes varying b, the percentile used for generating private measure of spread, with \eps-allocation $\rho$ set to be 0.5. \cexp tends to perform better when b falls into the range of approximately (0.65, 0.8), where the \moe of confidence intervals tend to be relatively small for all database sizes. We here choose b = 0.65 as the optimized value of b for \cexp.}
   \label{fig:median_params}
\end{figure*}

\begin{figure*}[h]
   \centering
   \includegraphics[width = 0.85\textwidth]{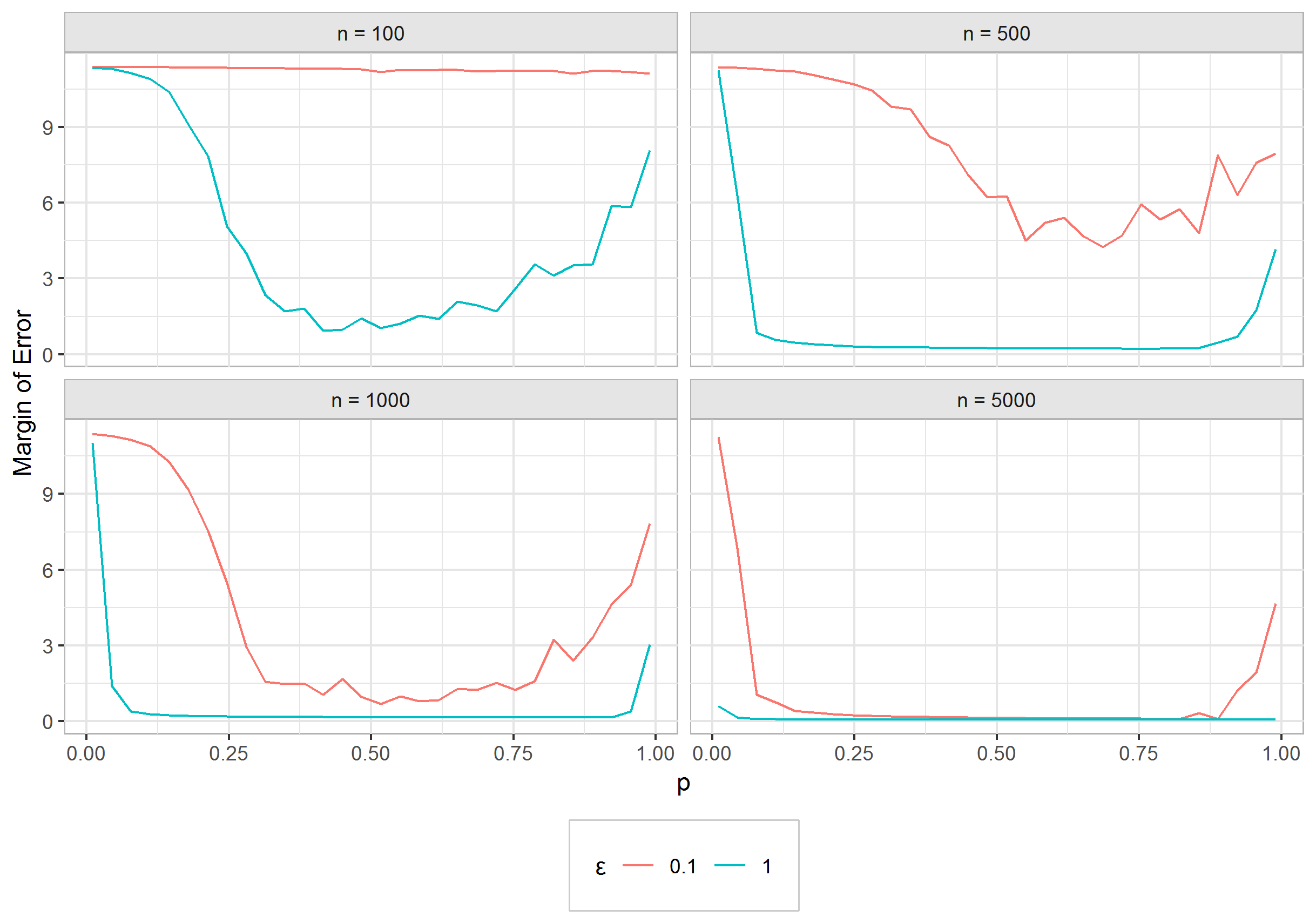} \par
   \caption{The performance of \cexp at various database sizes varying \eps-allocation, with b set to be 0.65. \cexp tends to perform better with $\rho$ in the range of approximately (0.5, 0.75), where the \moe of the confidence intervals are small for all database sizes. We here choose $\rho$ = 0.5 as the optimized \eps-allocation. }
   \label{fig:median_params_p}
\end{figure*}

\begin{figure*}[h]
   \centering
   \includegraphics[width = 0.85\textwidth]{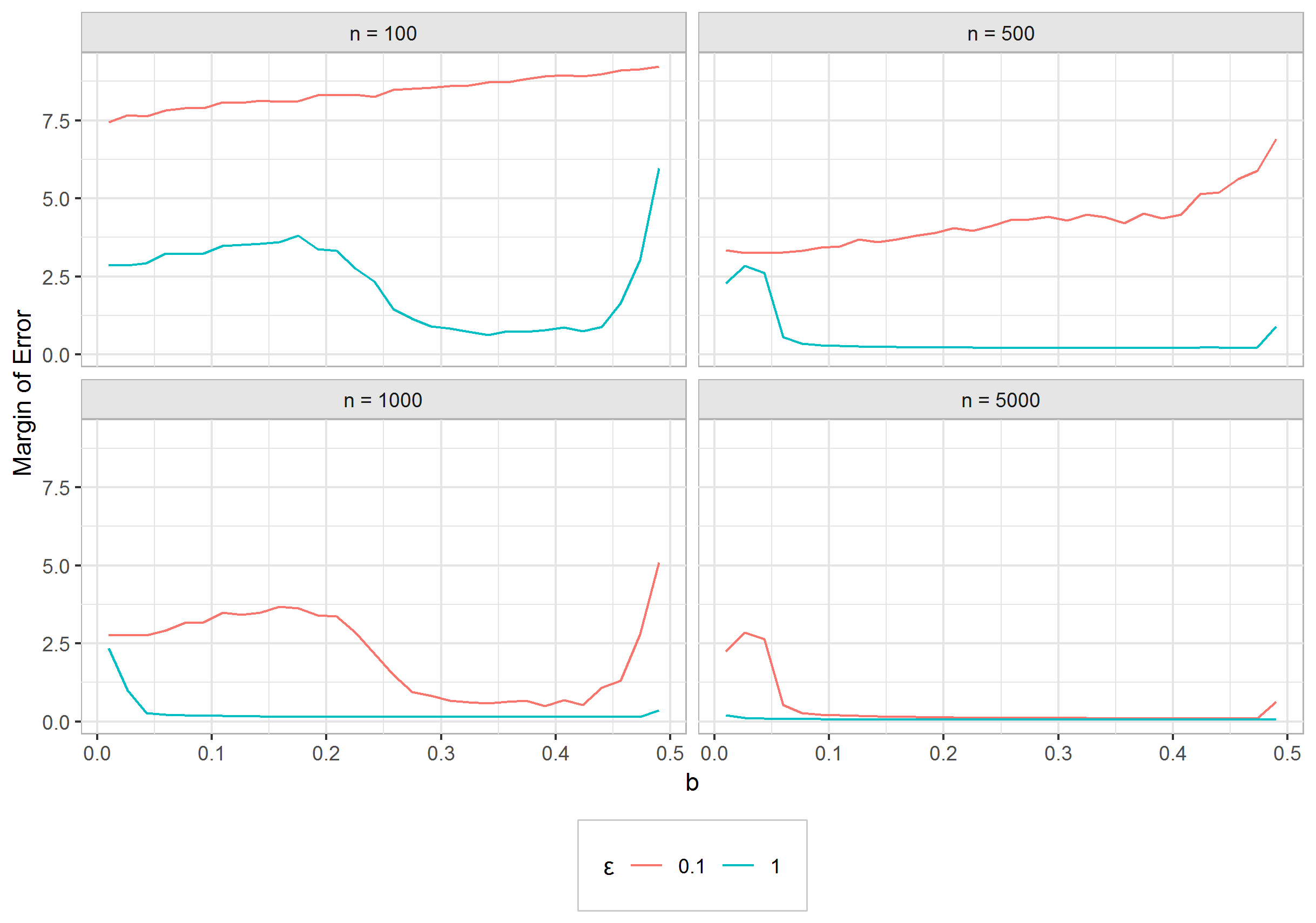} \par
   \caption{The performance of \cdbl at various database sizes varying b, the percentile used for generating private measure of spread. \cdbl tends to perform better when b falls into the range of approximately (0.3, 0.45), where the \moe of confidence intervals are minimized for all database sizes. We here choose b = 0.35 as the optimized value of b for \cdbl.}
   \label{fig:double_params}
\end{figure*}

\begin{figure*}[h]
   \centering
   \includegraphics[width = 0.85\textwidth]{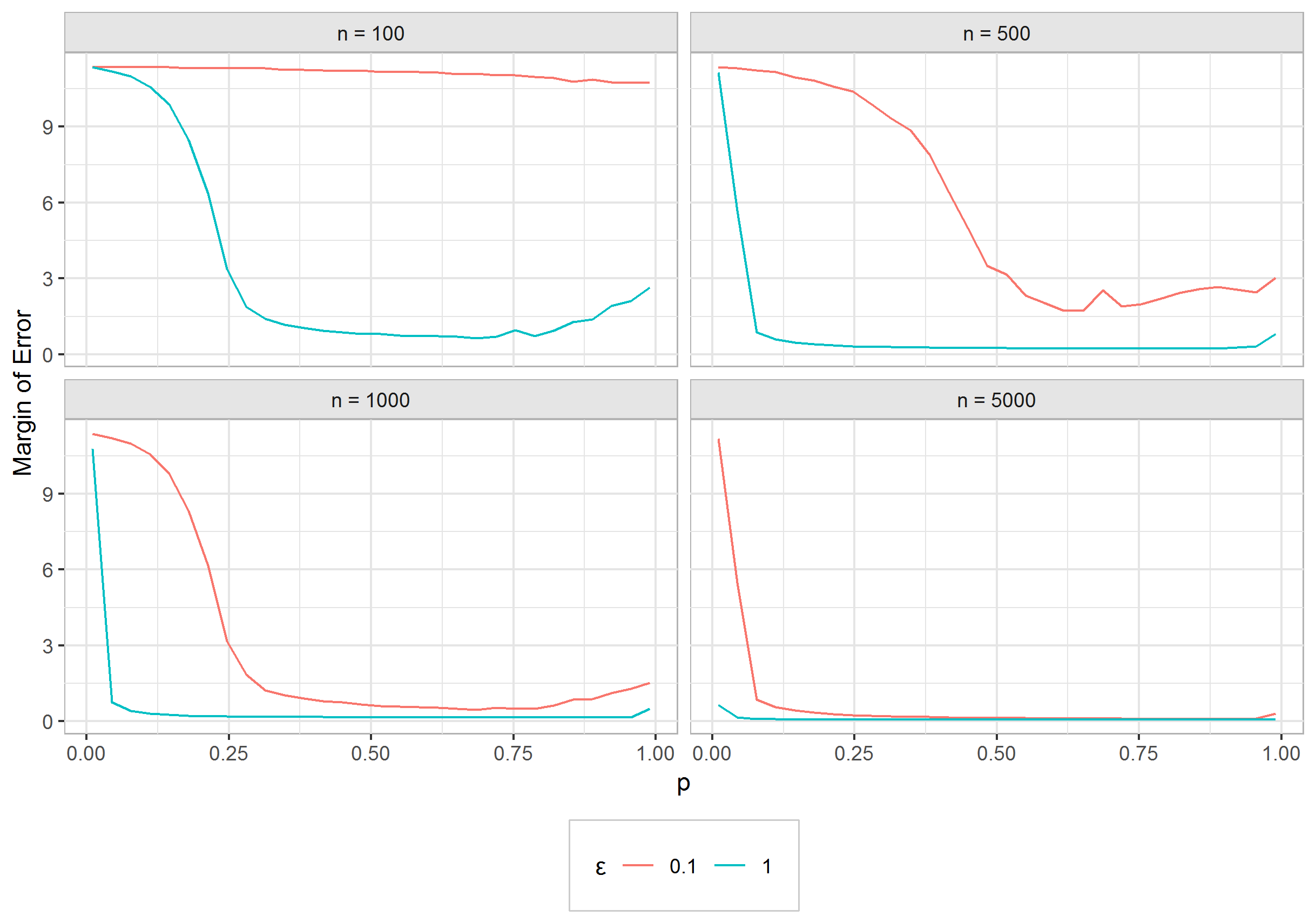} \par
   \caption{The performance of \mad at various database sizes varying \eps-allocation. \mad tends to perform better when $\rho$ falls into the range of approximately (0.45, 0.55), where the \moe of confidence intervals are minimized for all database sizes. We here choose $\rho$ = 0.5 as the optimized value of b for \mad.}
   \label{fig:mod_params}
\end{figure*}

\end{document}